\def\@citecolor{blue}%
\def\@urlcolor{blue}%
\def\@linkcolor{blue}%
\def\orcidID#1{\smash{\href{http://orcid.org/#1}{\protect\raisebox{-1.25pt}{\protect\includegraphics{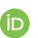}}}}}
\definecolor{gencolor}{rgb}{0,0,1}
\definecolor{pscolor}{rgb}{0,1,0}
\tikzstyle{point}=[,circle,fill,minimum size=3pt, inner sep=0pt]
\tikzstyle{dirvec}=[densely dotted,->]
\tikzstyle{nonach}=[fill=red!70]
\tikzstyle{ach}=[fill=pscolor!60!white]
\tikzstyle{dk}=[fill=gray]
\tikzstyle{genach}=[fill=gencolor!30!white]
\tikzstyle{genborder}=[draw=gencolor, densely dotted,very thick]
\tikzstyle{border}=[draw=blue]
\tikzstyle{auxline}=[densely dotted, line width=.8pt]
\tikzstyle{empty}=[text centered, text width=15mm]
\newcommand{\reftr}[1]{\iftoggle{TR}{\Cref{app:#1}}{\cite[App.~#1]{tr}}}
\newcolumntype{?}{!{\vrule width 1.5pt}}
\newcommand{\thickhline}{\Xhline{2\arrayrulewidth}}
\newcolumntype{\expand}{}
\long\@namedef{NC@rewrite@\string\expand}{\expandafter\NC@find}
	\def\problem@arg{#1}%
	\def\problem@framed{framed}%
	\def\problem@lined{lined}%
	\def\problem@doublelined{doublelined}%
	\def\problem@hline{}%
	\def\problem@hline{\hline\hline}%
	\def\problem@hline{\hline}%
	\def\problem@tablelayout{|>{\bfseries}lX|c}%
	\def\problem@title{\multicolumn{2}{|l|}{%
			\raisebox{-\fboxsep}{\textsc{#2}}%
	}}%
	\def\problem@tablelayout{>{\bfseries}lXc}%
	\def\problem@title{\multicolumn{2}{l}{%
			\raisebox{-\fboxsep}{\textsc{#2}}%
	}}%
\declaretheorem[name=Restriction]{restr}
\Crefname{figure}{Fig.}{Figs.}
\crefname{figure}{fig.}{figs.}
\Crefname{tabular}{Tab.}{Tabs.}
\crefname{tabular}{tab.}{tabs.}
\Crefname{section}{Sect.}{Sects.}
\crefname{section}{sect.}{sects.}
\crefname{restr}{restriction}{restrictions}
\Crefname{restr}{Restriction}{Restrictions}
\newcounter{lplinecounter}
\newenvironment{lp}{\setlength{\abovedisplayskip}{-4pt}\setlength{\belowdisplayskip}{-1pt}\begin{equation*}\begin{array}{lrlr}\multicolumn{4}{r}{\hspace{\textwidth}~}\\[-12pt]}{\end{array}\end{equation*}}
\newcommand{\lpcomment}[1]{{\footnotesize \triangleright\, \textit{#1}}}
\newcommand{\lplineintern}[5]{%
	\multicolumn{\ifstreq{#3#4#5}{}{4}{\ifstreq{#3#4}{}{3}{\ifstreq{#3}{}{2}{1}}}}{#1l}{\displaystyle#2}
	\ifstreq{#3}{}{}{&
		\ifstreq{#4}{}{\multicolumn{\ifstreq{#3#5}{}{3}{2}}{l}{\displaystyle#3}}
		{\displaystyle#3}
	}%
	\ifstreq{#4}{}{}{&\ifstreq{#3#5}{}{\multicolumn{2}{l}{\displaystyle#4}}{\displaystyle#4}}
	\ifstreq{#3#5}{}{}{&}
	\ifstreq{#5}{}{}{\enlargecolumn\lpcomment{#5}}
	\ifstreq{#3}{}{}{\refstepcounter{lplinecounter}\ifstreq{#5}{}{}{~~}(\arabic{lplinecounter})} 
}
\newcommand{\ltxlabel}{\ltx@label}
\newcommand{\lpline}[6]{\lplineintern{#1}{#2}{#3}{#4}{#5}\ltxlabel{{#6}}\\}
\newcommand{\enlargecolumn}{\hspace{-8cm}~}
\crefname{lplinecounter}{constraint}{constraints}
\Crefname{lplinecounter}{Constraint}{Constraints}
\newcommand{\tool}[1]{\textsc{#1}\xspace}
\newcommand{\benchmark}[1]{\textsf{#1}\xspace}
\newcommand{\eg}{e.g.,\xspace}
\newcommand{\ie}{i.e.,\xspace}
\newcommand{\iverson}[1]{\ensuremath{\left[ #1 \right]}}
\newcommand{\tuple}[1]{\ensuremath{\left\langle #1 \right\rangle}}
\newcommand{\tupleaccess}[2]{{\ensuremath{#1\llbracket#2\rrbracket}}}
\newcommand{\set}[1]{\ensuremath{\left\{ #1 \right\}}}
\newcommand{\rr}{\ensuremath{\mathbb{R}}}
\newcommand{\nn}{\ensuremath{\mathbb{N}}}
\newcommand{\qq}{\ensuremath{\mathbb{Q}}}
\newcommand{\zz}{\ensuremath{\mathbb{Z}}}
\newcommand{\ex}[1]{\ensuremath{\exists\,#1\colon\,}}
\newcommand{\fa}[1]{\ensuremath{\forall\,#1\colon\,}}
\newcommand{\dist}{\ensuremath{\mu}}
\newcommand{\dists}[1]{\ensuremath{\mathit{Dist(#1)}}}
\newcommand{\supp}[1]{\ensuremath{\mathit{supp(#1)}}}
\DeclareMathOperator*{\argmax}{arg\,max}
\newcommand{\mdp}{\ensuremath{\mathcal{M}}}
\newcommand{\states}{\ensuremath{S}}
\newcommand{\actions}{\ensuremath{\mathit{Act}}}
\newcommand{\transitions}{\ensuremath{\mathbf{P}}}
\newcommand{\sinit}{\ensuremath{s_{\mathit{I}}}}
\newcommand{\mdptuple}{\ensuremath{ \tuple{\states, \actions, \transitions, \sinit} }}
\newcommand{\state}{\ensuremath{s}}
\newcommand{\action}{\ensuremath{\alpha}}
\newcommand{\actbot}{\ensuremath{\bot}}
\newcommand{\actinit}{\ensuremath{\action_{\mathit{I}}}}
\newcommand{\act}[1]{\ensuremath{\mathit{Act}\ifthenelse{\equal{#1}{}}{}{(#1)}}}
\newcommand{\successors}[2]{\ensuremath{\mathit{succ}(#1,#2)}}
\newcommand{\predecessors}[1]{\ensuremath{\mathit{pre}(#1)}}
\newcommand{\stateactionpairset}{\ensuremath{\mathcal{E}}}
\newcommand{\submdp}[2]{\ensuremath{\tupleaccess{#1}{#2}}}
\newcommand{\mecs}[1]{\ensuremath{\mathit{MECS(#1)}}}
\newcommand{\infpath}{\ensuremath{\pi}}
\newcommand{\finpath}{\ensuremath{\hat{\pi}}}
\newcommand{\last}[1]{\ensuremath{\mathit{last}(#1)}}
\newcommand{\infpaths}[1]{\ensuremath{\mathit{Paths}_\mathrm{inf}^{#1}}}
\newcommand{\finpaths}[1]{\ensuremath{\mathit{Paths}_\mathrm{fin}^{#1}}}
\newcommand{\paths}{\ensuremath{\Pi}}
\newcommand{\stateofpath}[2]{\ensuremath{#1[#2]}}
\newcommand{\lengthofpath}[1]{\ensuremath{|#1|}}
\newcommand{\sched}{\ensuremath{\sigma}}
\newcommand{\scheds}[1]{\ensuremath{\Sigma^{#1}}}
\newcommand{\psscheds}[1]{\ensuremath{\Sigma_{\mathrm{PS}}^{#1}}}
\newcommand{\memorystates}{\ensuremath{M}}
\newcommand{\memorystate}{\ensuremath{m}}
\newcommand{\schednextaction}{\ensuremath{\sched_{a}}}
\newcommand{\schedmemoryupdate}{\ensuremath{\sched_{u}}}
\newcommand{\schedtuple}{\ensuremath{\tuple{\memorystates,\schednextaction,\schedmemoryupdate,\minit}}}
\newcommand{\pschedsK}[2]{\ensuremath{\scheds{#1}_{\mathrm{P}, #2}}}
\newcommand{\pKach}[3]{\ensuremath{\mathit{Ach}_{\mathrm{P}, #3}^{#1}(#2)}}
\newcommand{\pKpareto}[3]{\ensuremath{\mathit{Pareto}_{\mathrm{P}, #3}^{#1}(#2)}}
\newcommand{\mc}[2]{\ensuremath{#1^{#2}}}
\newcommand{\mctransitions}[1]{\ensuremath{\transitions^{#1}}}
\newcommand{\event}{\ensuremath{\Pi}}
\newcommand{\mdpwithinitstate}[2]{\ensuremath{#1_{#2}}}
\newcommand{\ndmemory}[1]{\ensuremath{\mathcal{N}_{#1}}}
\newcommand{\ndmemoryupdate}{\ensuremath{\delta}}
\newcommand{\minit}{\ensuremath{m_{\mathit{I}}}}
\newcommand{\ndmemorytuple}{\ensuremath{\tuple{\memorystates, \ndmemoryupdate, \minit}}}
\newcommand{\mdpndmemory}[1]{\ensuremath{\mdp \otimes {\ndmemory{#1}}}}
\newcommand{\mdpndmemorystates}{\ensuremath{\states'}}
\newcommand{\mdpndmemoryactions}{\ensuremath{\actions'}}
\newcommand{\mdpndmemorytransitions}{\ensuremath{\transitions'}}
\newcommand{\mdpndmemorysinit}{\ensuremath{\sinit'}}
\newcommand{\mdpndmemorytuple}{\ensuremath{\tuple{\mdpndmemorystates, \mdpndmemoryactions, \mdpndmemorytransitions, \mdpndmemorysinit}}}
\newcommand{\probmeasure}[2]{\ensuremath{\mathrm{Pr}^{#1}_{#2}}}
\newcommand{\expval}[2]{\ensuremath{\mathrm{E}^{#1}_{#2}}}
\newcommand{\eventually}{\ensuremath{\lozenge}}
\newcommand{\goalstates}{\ensuremath{G}}
\newcommand{\goalstatesj}{\goalstates_\objindex}
\newcommand{\reachobjop}{\ensuremath{\mathbb{P}}}
\newcommand{\rewstruct}{\ensuremath{\mathbf{R}}}
\newcommand{\rewstructj}{\ensuremath{\rewstruct_\objindex}}
\newcommand{\rewarddomain}{\ensuremath{\rr_{\ge 0}}}
\newcommand{\rewofpath}[2]{\ensuremath{#1(#2)}}
\newcommand{\rewobjop}{\ensuremath{\mathbb{E}}}
\newcommand{\pointi}{\ensuremath{p}}
\newcommand{\point}{\ensuremath{\vec{p}}}
\newcommand{\points}{\ensuremath{P}}
\newcommand{\pointidomain}{\ensuremath{\rr_\infty}}
\newcommand{\pointdomain}{\ensuremath{(\pointidomain)^\numobj}}
\newcommand{\obj}{\ensuremath{\psi}}
\newcommand{\objj}{\ensuremath{\obj_\objindex}}
\newcommand{\objrel}{\ensuremath{\sim}}
\newcommand{\objrelj}{\ensuremath{\objrel_\objindex}}
\newcommand{\numobj}{\ensuremath{\ell}}
\newcommand{\objindex}{\ensuremath{j}}
\newcommand{\multiobjquery}{\ensuremath{\mathcal{Q}}}
\newcommand{\ach}[2]{\ensuremath{\mathit{Ach}^{#1}(#2)}}
\newcommand{\psach}[2]{\ensuremath{\mathit{Ach}_{\mathrm{PS}}^{#1}(#2)}}
\newcommand{\pareto}[2]{\ensuremath{\mathit{Pareto}^{#1}(#2)}}
\newcommand{\pspareto}[2]{\ensuremath{\mathit{Pareto}_{\mathrm{PS}}^{#1}(#2)}}
\newcommand{\closure}[2]{\ensuremath{\mathit{cl}^{#1}(#2)}}
\newcommand{\precision}{\ensuremath{\vec{\epsilon}}}
\newcommand{\precisiondomain}{\ensuremath{(\rr_{>0})^\numobj}}
\newcommand{\underapprox}{\ensuremath{L}}
\newcommand{\overapprox}{\ensuremath{U}}
\newcommand{\approxtuple}{\ensuremath{\tuple{\underapprox, \overapprox}}}
\newcommand{\region}{\ensuremath{\mathcal{R}}}
\newcommand{\dirvec}[1][]{\ensuremath{\vec{w}_{#1}}}
\newcommand{\problemabbreviation}[1]{\textup{\texttt{#1}}\xspace}
\newcommand{\gma}{\problemabbreviation{GMA}}
\newcommand{\psma}{\problemabbreviation{PSMA}}
\newcommand{\pbma}{\problemabbreviation{PBMA}}
\newcommand{\pspapprox}{\problemabbreviation{PSP}\!\ensuremath{^\approx}\xspace}
\newcommand{\pbpapprox}{\problemabbreviation{PBP}\!\ensuremath{^\approx}\xspace}
\newcommand{\milp}{\problemabbreviation{MILP}}
\newcommand{\lpactionvar}[2]{\ensuremath{a_{#1,#2}}}
\newcommand{\lpactionvarsa}{\lpactionvar{\state}{\action}}
\newcommand{\lpecvar}[2]{\ensuremath{e^{#2}_{#1}}}
\newcommand{\lpecvarsj}{\lpecvar{\state}{\objindex}}
\newcommand{\lpecvars}{\lpecvar{\state}{}}
\newcommand{\lpecactionvar}[3]{\ensuremath{e^{#3}_{#1,#2}}}
\newcommand{\lpecactionvarsaj}{\lpecactionvar{\state}{\action}{\objindex}}
\newcommand{\lpecflowvar}[3]{\ensuremath{z^{#3}_{#1,#2}}}
\newcommand{\lpecflowvarsaj}{\lpecflowvar{\state}{\action}{\objindex}}
\newcommand{\maxflow}[1]{V_{#1}}
\newcommand{\pmobj}{\pm}
\newcommand{\minobj}{\iverson{\mathrm{min}}}
\newcommand{\lpvalvar}[2]{\ensuremath{x^{#2}_{#1}}}
\newcommand{\lpvalvarsj}{\lpvalvar{\state}{\objindex}}
\newcommand{\lpvalactionvar}[3]{\ensuremath{x^{#3}_{#1,#2}}}
\newcommand{\lpvalactionvarsaj}{\lpvalactionvar{\state}{\action}{\objindex}}
\newcommand{\lpflowvar}[2]{\ensuremath{y_{#1,#2}}}
\newcommand{\lpflowvarsa}{\lpflowvar{\state}{\action}}
\newcommand{\lpflowecvar}[1]{\ensuremath{\lpflowvar{#1}{\actbot}}}
\newcommand{\lpflowecvars}{\lpflowecvar{\state}}
\newcommand{\var}{\ensuremath{\mathit{Var}}}
\newcommand{\lpsol}{\ensuremath{\Phi}}
\newcommand{\lpsolof}[1]{\ensuremath{\lpsol(#1)}}
\newcommand{\actec}[1]{\ensuremath{\submdp{\actions}{\stateactionpairset}\ifthenelse{\equal{#1}{}}{}{(#1)}}}
\newcommand{\uppervaluebound}[2]{U_{#1}^{#2}}
\newcommand{\uppervalueboundsj}{\uppervaluebound{\state}{\objindex}}
\newcommand{\stateszeroj}{\states_0^\objindex}
\newcommand{\stateszero}{\states_0}
\newcommand{\statesinf}{\states_\infty}
\newcommand{\statesmaybej}{\states_?^j}
\newcommand{\statesmaybe}{\states_?}
\newcommand{\timeout}{\texttt{TO}}
\newcommand{\memout}{\texttt{MO}}
\newcommand{\error}{\texttt{ERR}}
\newcommand{\ifstreq}[2]{%
  \ifnum\pdfstrcmp{\detokenize{#1}}{\detokenize{#2}}=\z@
\expandafter\@firstoftwo
\else
\expandafter\@secondoftwo
\fi}
\newcolumntype{"}{@{\ }@{\hskip\tabcolsep\vrule width 2pt\hskip\tabcolsep}@{\ }}
\newcolumntype{R}{>{$}r<{$}}
\newcommand{\tq}[1]{\iftoggle{showtodos}{\todo[color=red!30]{TQ: #1}}{}}
\begin{document}
\title{%
Simple Strategies in Multi-Objective MDPs%
\thanks{Research partially supported by F.R.S.-FNRS Grant n\textsuperscript{$\circ$} F.4520.18 (\emph{ManySynth}).\ Mickael Randour is an F.R.S.-FNRS Research Associate.}
{}}
\iftoggle{TR}{\subtitle{Technical Report}}{}%
\author{
Florent Delgrange\inst{1,2}\fnmsep\protect\footnote[4]{currently affiliated with Vrije Universiteit Brussel.}\orcidID{0000-0003-2254-0596} 
\and Joost-Pieter Katoen\inst{1}\orcidID{0000-0002-6143-1926}\and\\
Tim Quatmann\inst{1}\orcidID{0000-0002-2843-5511}
\and Mickael~Randour\inst{2}
}
\authorrunning{F. Delgrange et al.}
\institute{
RWTH Aachen University, Aachen, Germany
\and UMONS -- Université de Mons, Mons, Belgium
}
\date{\today}
\maketitle

\begin{abstract}
We consider the verification of multiple expected reward objectives at once on Markov decision processes (MDPs).
This enables a trade-off analysis among multiple objectives by obtaining a Pareto front.
We focus on strategies that are easy to employ and implement.
That is, strategies that are pure (no randomization) and have bounded memory.
We show that checking whether a point is achievable by a pure stationary strategy is NP-complete, even for two objectives, and we provide an \milp encoding to solve the corresponding problem.
The bounded memory case is treated by a product construction.
Experimental results using \tool{Storm} and \tool{Gurobi} show the feasibility of our algorithms.
\end{abstract}


\section{Introduction}
\label{sec:Introduction}

\textit{MDPs.} Markov decision processes (MDPs)~\cite{BK08,BHK19} are a key model in stochastic decision making. The classical setting involves a system subject to a stochastic model of its environment, and the goal is to synthesize a system controller, represented as a \textit{strategy} for the MDP, ensuring a given level of \textit{expected performance}. Tools such as \tool{Prism}~\cite{KNP11} and \tool{Storm}~\cite{DJKV17} support MDP model checking. 

\smallskip\noindent\textit{Multi-objective MDPs.} MDPs where the goal is to achieve a \textit{combination} of objectives (rather than just one) are popular in e.g., AI~\cite{RVWD13} and verification~\cite{DBLP:conf/csl/BaierDK14}. This is driven by applications, where controllers have to fulfill multiple, potentially conflicting objectives, requiring a \textit{trade-off} analysis. This includes multi-dimension MDPs~\cite{CMH06,EKVY07,RRS17,CKK17} where weight vectors are aggregated at each step and MDPs where the specification mixes different views (e.g., average and worst~case performance) of the same weight
~\cite{DBLP:journals/iandc/BruyereFRR17,DBLP:conf/icalp/BerthonRR17}.
With multiple objectives, optimal strategies no longer exist in general: instead, \textit{Pareto-optimal} strategies are considered. The Pareto front, i.e., the set of non-dominated achievable value vectors is usually non-trivial. Elaborate techniques are needed to explore it efficiently, e.g.,~\cite{FKNPQ11,FKP12}.

\smallskip\noindent\textit{Simple strategies.} Another stumbling block in multi-objective MDPs is the complexity of strategies: Pareto-optimal strategies typically need both \textit{memory} and \textit{randomization}. A simple conjunction of reachability objectives already requires randomization and exponential memory (in the number of reachability sets)~\cite{RRS17}. Some complex objectives even need infinite memory, e.g.,~\cite{DBLP:journals/iandc/BruyereFRR17,DBLP:conf/icalp/BerthonRR17}.
In controller synthesis, strategies requiring randomization and/or (much) memory may not be practical. Limited-memory strategies are required on devices with limited resources~\cite{BBPM06}.  Randomization is elegant and powerful from a theoretical view, but has practical limitations, e.g., it limits reproducibility which complicates debugging. Randomized strategies are also often despised for medical applications~\cite{LBM12} and product design -- all products should have the same design, not a random one. This motivates to consider the analysis of \textit{simple strategies}, i.e., strategies using no randomization and a limited amount of memory (given as a parameter). While most works study the Pareto front among \textit{all} strategies, we establish ways to explore efficiently the Pareto front among \textit{simple} strategies only.

\smallskip\noindent\textit{Problem statement.} We consider pure (i.e., no randomization) and bounded-memory strategies and study two problems: (a)~\textit{achievability} queries -- is it possible to achieve a given value vector -- and (b)~\textit{approximation of the Pareto front}. Considering pure, bounded-memory strategies is natural as randomization can be traded for memory~\cite{CAH04}: without randomization, optimal strategies may require arbitrarily large memory, (see Ex.~\ref{ex:nontrivial-memory}).
We study mixtures of \textit{expected (accumulated) reward objectives}, covering various studied settings like reachability~\cite{EKVY07,RRS17}, shortest path~\cite{DBLP:conf/vmcai/RandourRS15,RRS17,HJKQ18,DBLP:journals/corr/abs-1809-03107} and total reward objectives~\cite{FKNPQ11,FKP12}.

\smallskip\noindent\textit{Contributions.} 
We first consider the achievability problem for pure stationary (i.e., memoryless) strategies and show that finding optimal strategies for multi-objective MDPs is NP-complete, even for two objectives. This contrasts the case of general strategies, where the problem is polynomial-time if the number of objectives is fixed~\cite{RRS17}. We provide a \textit{mixed integer linear program} (\milp) encoding. The crux lies in dealing with end components. The \milp is polynomial in the input MDP and the number of objectives.  Inspired by~\cite{FKNP11}, we give an alternative \milp encoding which is better suited for total reward objectives. To approximate the Pareto front under pure stationary strategies, we solve multiple \milp queries. This iteratively divides the solution space into achievable and non-achievable regions. Bounded-memory strategies are treated via a product construction. Our approach works for finite \textit{and} infinite expected rewards.

\smallskip\noindent\textit{Practical evaluation.} We successfully compute Pareto fronts for 13 benchmarks using our implementation in \tool{Storm}, exploiting the \milp solver \tool{Gurobi}. Despite the hard nature of the problem, our experiments show that Pareto fronts for models with tens of thousands of states can be successfully approximated. 

\smallskip\noindent\textit{Related work.} NP completeness for discounted rewards under pure strategies was shown in~\cite{CMH06}.
\cite{EG16} claims that this generalizes to PCTL objectives but no proof is given.
%
%
\cite{SBHH17} treats multi-objective bounded MDPs whose transition probabilities are intervals.
A set of Pareto optimal policies is computed using policy iteration and an efficient heuristic is exploited to compute a set of mutually non-dominated policies that are likely to be Pareto optimal. 
 Pure stationary Pareto optimal strategies for discounted rewards are obtained in \cite{WJ07} using value-iteration but is restricted to small MDPs where all probabilities are 0 or 1. In~\cite{PW10},Tchebycheff-optimal strategies for discounted rewards are obtained via an LP approach; such strategies minimize the distance to a reference point and are not always pure.
\section{Preliminaries}
\label{sec:prelim}
For a finite set $\Omega$, let $\dists{\Omega} = \set{ \dist \colon \Omega \to [0,1] \mid \sum_{\omega \in \Omega} \dist(\omega) = 1 }  $ be the set of probability distributions over $\Omega$ with support $\supp{\dist} = \set{ \omega \in \Omega \mid \dist(\omega) > 0  }$.
We write $\rewarddomain = \set{|x| \mid x \in \rr}$ and $\pointidomain = \rr \cup \set{\infty}$ for the non-negative and extended real numbers, respectively.
$\vec{1}^\numobj = \tuple{1, \dots, 1}$ denotes the vector of size $\numobj \in \nn$ with all entries 1.
We just write $\vec{1}$ if $\numobj$ is clear.
Let $\tupleaccess{\point}{i}$ denote the $i^{th}$ entry and $\point \cdot \point'$ the dot product of $\point,\point' \in \pointdomain$. $\point\le\point'$, $\point+\point'$, and $|\point|$ are \mbox{entry-wise}.
%
For Boolean expression $\mathit{cond}$, let $[\mathit{cond}] = 1$ if $\mathit{cond}$ is true and $[\mathit{cond}] = 0$ otherwise.


\subsection{Markov Decision Processes, Strategies, and End Components}

\begin{definition}[Markov decision process \cite{Put94}]
	A \emph{Markov decision process} (MDP) is a tuple $\mdp = \mdptuple$ with finite set of states $\states$, initial state $\sinit \in \states$, finite set of actions $\actions$, and  transition function
	 $\transitions \colon \states \times \actions \times \states \to [0,1]$ with $\sum_{\state' \in \states} \transitions(\state, \action, \state') \in \set{0,1}$ for all $\state \in \states$ and $\action \in \actions$.
\end{definition}
We fix an MDP $\mdp = \mdptuple$.
Intuitively, $\transitions(\state, \action, \state')$ is the probability to take a transition from $\state$ to $\state'$ when choosing action $\action$.
An infinite path in $\mdp$ is a sequence $\infpath = \state_0 \action_1 \state_1 \action_2 \dots \in (\states {\times} \actions)^\omega$ with $\transitions(\state_i, \action_{i+1}, \state_{i+1}) > 0$ for all $i\in \nn$.
We write $\stateofpath{\infpath}{i} = \state_i$ for the $(i{+}1)$th state visited by $\infpath$ and define the length of $\infpath$ as $\lengthofpath{\infpath} = \infty$.
A finite path is a finite prefix $\finpath = \state_0 \action_1 \dots \action_{n} \state_n$ of infinite path $\infpath$, where $\last{\finpath} = \state_n \in \states$, $\lengthofpath{\finpath} = n$ and $\stateofpath{\finpath}{i}=\state_i$ for $i \le n$.
The set of finite (infinite) paths in $\mdp$ is denoted by $\finpaths{\mdp}$ ($\infpaths{\mdp}$).
The \emph{enabled actions} at a state $\state \in \states$ are given by the set $\act{\state} = \set{\action \in \actions \mid \ex{\state' \in \states} \transitions(\state,\action, \state') > 0  }$. We assume $\act{\state} \neq \emptyset$ for all $\state$.
If $|\act{\state}| = 1$ for all $\state \in \states$, $\mdp$ is called a \emph{Markov Chain (MC)}.
We write $\mdpwithinitstate{\mdp}{\state}$ for the MDP obtained by replacing the initial state of $\mdp$ by $\state \in \states$.
For $\state \in \states$ and $\action \in \actions$, we define the set of successor states $\successors{\state}{\action} = \set{\state'  \mid \transitions(\state,\action,\state') > 0}$.
For $\state' \in \states$, the set of predecessor state-action pairs is given by $\predecessors{\state'} = \set{\tuple{\state,\action}  \mid \transitions(\state,\action,\state') > 0}$.
For a set $\stateactionpairset \subseteq \states \times \actions$, we define
$\submdp{\states}{\stateactionpairset} = \set{\state \in \states \mid \ex{\action} \tuple{\state, \action} \in \stateactionpairset}$,
$\submdp{\actions}{\stateactionpairset} =  \set{\action \in \actions \mid \ex{\state} \tuple{\state, \action} \in \stateactionpairset}$, and
$\submdp{\transitions}{\stateactionpairset}(\state,\action,\state') = \iverson{\tuple{\state,\action} \in \stateactionpairset} \cdot \iverson{\state' \in \submdp{\states}{\stateactionpairset}} \cdot \transitions(\state,\action,\state')$.
We say $\stateactionpairset$ is \emph{closed} for $\mdp$ if $\fa{\tuple{\state,\action} \in \stateactionpairset} \action \in \act{\state}$ and $\successors{\state}{\action} \subseteq \submdp{\states}{\stateactionpairset}$.
\begin{definition}[Sub-MDP]
	The \emph{sub-MDP} of $\mdp$, closed $\stateactionpairset \subseteq \states \times \actions$, and $\state \in \submdp{\states}{\stateactionpairset}$ is given by $\submdp{\mdp}{\stateactionpairset, \state} = \tuple{\submdp{\states}{\stateactionpairset}, \submdp{\actions}{\stateactionpairset}, \submdp{\transitions}{\stateactionpairset}, \state}$.
We also write $\submdp{\mdp}{\stateactionpairset}$ for the sub-MDP $\submdp{\mdp}{\stateactionpairset,\state}$ and an arbitrary state $\state \in \submdp{\states}{\stateactionpairset}$.
\end{definition}
\begin{definition}[End Component]
	\label{def:ec}
	A non-empty set $\stateactionpairset \subseteq \states \times \actions$ is an \emph{end component (EC)} of $\mdp$ if $\stateactionpairset$ is closed for $\mdp$ and
	for each pair of states $\state,\state' \in \submdp{\states}{\stateactionpairset}$ there is a finite path $\finpath \in \finpaths{\submdp{\mdp}{\stateactionpairset}}$ with $ \stateofpath{\finpath}{0} = \state$ and $\last{\finpath} = \state'$.
	An EC $\stateactionpairset$ is \emph{maximal}, if there is no other EC $\stateactionpairset'$ with $\stateactionpairset \subsetneq \stateactionpairset'$.
	The set of all maximal end components of $\mdp$ is $\mecs{\mdp}$.
\end{definition}
The maximal ECs of a Markov chain are also called \emph{bottom strongly connected components (BSCCs)}.
A strategy resolves nondeterminism in MDPs:
\begin{definition}[Strategy]
	A (general) \emph{strategy} for MDP $\mdp$ is a function $\sched \colon \finpaths{\mdp} \to \dists{\actions}$ with $\supp{\sched(\finpath)} \subseteq \act{\last{\finpath}}$ for all $\finpath \in \finpaths{\mdp}$.
\end{definition}
Let $\sched$ be a strategy for $\mdp$. Intuitively, $\sched(\finpath)(\action)$ is the probability to perform action $\action$ after observing history $\finpath \in \finpaths{\mdp}$.
%
%
 A strategy is \textit{pure} if all histories are mapped to \textit{Dirac distributions}, i.e., the support is a singleton.
%
A strategy is \emph{stationary} if its decisions only depend on the current state,
\ie $\fa{\finpath, \finpath' \in\finpaths{\mdp}}\last{\finpath} = \last{\finpath'}$ implies
$ \sched(\finpath) = \sched(\finpath')$.
We often assume $\sched \colon \states \to \dists{\actions}$ for stationary and $\sched \colon \states \to \actions$ for pure stationary strategies $\sched$.
Let $\scheds{\mdp}$ and $\psscheds{\mdp}$ be the sets of general and pure stationary strategies, respectively.
A set of paths $\event \subseteq \infpaths{\mdp}$ is \emph{compliant} with $\sched \in \scheds{\mdp}$ if for all $\infpath = \state_0 \action_1  \state_1 \dots \in \event$ and prefixes $\finpath$ of $\infpath$ satisfy $\sched(\finpath)(\action_{\lengthofpath{\finpath}+1}) > 0$.
%
The \emph{induced Markov chain} of $\mdp$ and $\sched \in \psscheds{\mdp}$ is given by $\mc{\mdp}{\sched} = \submdp{\mdp}{\stateactionpairset^\sched,\sinit}$ with $\stateactionpairset^\sched =  \set{ \tuple{\state, \sched(\state)} \mid \state \in \states}$.	 

MDP $\mdp$ and strategy $\sched \in \scheds{\mdp}$ induce a probability measure $\probmeasure{\mdp}{\sched}$ on subsets $\paths \subseteq \infpaths{\mdp}$ given by a standard cylinder set construction~\cite{BK08,FKNP11}.
%
The expected value of $X \colon \infpaths{\mdp} \to \pointidomain$ is 
 $\expval{\mdp}{\sched}(X) = \int_{\infpath} X(\infpath) \mathop{}\!d\probmeasure{\mdp}{\sched}(\set{\infpath})$.
For $\sched \in \psscheds{\mdp}$, $\probmeasure{\mdp}{\sched}$ and $\expval{\mdp}{\sched}$ coincide with the corresponding measures on MC $\mc{\mdp}{\sched}$.
%
\subsection{Objectives}

A reward structure $\rewstruct \colon \states \times \actions \times \states \to \rewarddomain$ assigns non-negative rewards to transitions.
We accumulate rewards on (in)finite paths $\infpath = \state_0 \action_1 \state_1 \action_2 \dots$:
$\rewofpath{\rewstruct}{\infpath} = \sum_{i=1}^{\lengthofpath{\infpath}} \rewstruct(\state_{i-1}, \action_i, \state_i)$.
For a set of goal states $\goalstates \subseteq \states$, let $\rewofpath{\rewstruct \eventually \goalstates}{\infpath} = \rewstruct(\finpath)$, where $\finpath$ is the smallest prefix of $\infpath$ with $\last{\finpath} \in \goalstates$ (or $\finpath = \infpath$ if no such prefix exists).
Intuitively, $\rewofpath{\rewstruct \eventually \goalstates }{\infpath}$ is the reward accumulated on $\infpath$ until a state in $\goalstates$ is reached.
A (reward) \emph{objective} has the form $\rewobjop_\objrel(\rewstruct \eventually \goalstates)$ for ${\objrel} \in \set{{\ge}, {\le}}$.
We write $\tuple{\mdp, \sched, \pointi} \models \rewobjop_\objrel(\rewstruct \eventually \goalstates)$ iff $\expval{\mdp}{\sched}(\rewstruct \eventually \goalstates) \objrel \pointi$, \ie for $\mdp$ and $\sched$, the expected accumulated reward until reaching $G$ is at least (or at most) $\pointi \in \pointidomain$.
We call the objective \emph{maximizing} if ${\objrel} = {\ge}$ and \emph{minimizing} otherwise.
If $\goalstates = \emptyset$ (\ie $\rewofpath{\rewstruct \eventually \goalstates }{\infpath} = \rewofpath{\rewstruct}{\infpath}$ for all paths $\infpath$), we call the objective a \emph{total reward objective}.
Let the reward structure $\rewstruct^\goalstates$ be given by $\rewstruct(\state,\action,\state') = \iverson{\state' \in \goalstates}$.
Then, $\probmeasure{\mdp}{\sched}(\eventually \goalstates)  = \expval{\mdp}{\sched}(\rewstruct^\goalstates \eventually \goalstates)$ for every $\sched\in \scheds{\mdp}$, where 
$\eventually \goalstates \subseteq \infpaths{\mdp}$ denotes the set of paths that visit a state in $\goalstates$.
We use $\reachobjop_\objrel(\eventually \goalstates)$ as a shortened for $\rewobjop_\objrel(\rewstruct^\goalstates \eventually \goalstates)$
and call such an objective a \emph{reachability objective}.

\begin{definition}[Multi-objective query]
	For MDP $\mdp$, an \emph{$\numobj$-dimensional multi-objective query} is a tuple $\multiobjquery = \tuple{\obj_1, \dots, \obj_\numobj}$ of $\numobj$ objectives $\objj = \rewobjop_{\objrelj}(\rewstructj \eventually \goalstatesj)$.
\end{definition}
Each objective $\objj$ considers a different reward structure $\rewstructj$.
The MDP $\mdp$, strategy $\sched$, and point $\point \in \pointdomain$ satisfy a multi-objective query $\multiobjquery = \tuple{\obj_1, \dots, \obj_\numobj}$ (written $\tuple{\mdp, \sched, \point} \models \multiobjquery$) iff $\fa{\objindex} \tuple{\mdp, \sched, \tupleaccess{\point}{\objindex}} \models \obj_\objindex$.
Then, we also say $\sched$ \emph{achieves} $\point$ and call $\point$ \emph{achievable}.
Let $\ach{\mdp}{\multiobjquery}$ ($\psach{\mdp}{\multiobjquery}$) denote the set of points achieved by a general (pure stationary) strategy.
%
%
The \emph{closure} of a set $\points \subseteq \pointdomain$ with respect to query $\multiobjquery$ is
$\closure{\multiobjquery}{\points} = \set{ \point \in \pointdomain \mid \ex{\point' \in P}\fa{\objindex} \tupleaccess{\point'}{\objindex} \objrelj \tupleaccess{\point}{\objindex} }$.
For $\point, \point' \in \pointdomain$, we say that $\point$ \emph{dominates} $\point'$ if $\point' \in \closure{\multiobjquery}{\set{\point}}$.
In this case, $\tuple{\mdp, \sched, \point} \models \multiobjquery$ implies $\tuple{\mdp, \sched, \point'} \models \multiobjquery$ for any $\sched \in \scheds{\mdp}$.
We are interested in the Pareto front, which is the set of non-dominated achievable points.
\begin{definition}[Pareto front]
	The \emph{(general) Pareto front} for $\mdp$ and $\multiobjquery$ is
	 $\pareto{\mdp}{\multiobjquery} = \set{\point \in \ach{\mdp}{\multiobjquery} \mid \fa{\point' \in \ach{\mdp}{\multiobjquery}} \point \in \closure{\multiobjquery}{\{\point'\}} \implies \point = \point'}$.
\end{definition}
The Pareto front is the smallest set $\points \subseteq \pointdomain$ with $\closure{\multiobjquery}{\points} = \ach{\mdp}{\multiobjquery}$.
In a similar way, we define the \emph{pure stationary} Pareto front $\pspareto{\mdp}{\multiobjquery}$ which only consider points in $\psach{\mdp}{\multiobjquery}$.

\begin{figure}[t]
	\centering
	\begin{subfigure}[b]{0.54\linewidth}
		\scalebox{1}{
		\centering
		\begin{tikzpicture}[yscale=0.6,->,>=stealth',shorten >=1pt,auto,node
		distance=2.5cm,bend angle=45, scale=0.6, font=\normalsize]
		\tikzstyle{p1}=[draw,circle,text centered,minimum size=6mm,text width=3mm]
		\tikzstyle{p2}=[draw,rectangle,text centered,minimum size=6mm,text width=3mm]
		\tikzstyle{act}=[fill,circle,inner sep=1pt,minimum size=1.5pt, node distance=1cm]    \tikzstyle{empty}=[text centered, text width=15mm]
		\node[p1] (1) at (0,0) {$\state_1$};
		\node[p1] (2) at ($(1) - (0,5)$) {$\state_2$};
		\node[p1, accepting] (3) at ($(1) + (5, -2.5)$)  {$\state_4$};
		\node[p1] (4) at ($(1) + (7, 0)$) {$\state_5$};
		\node[p1, accepting] (5) at ($(4) + (0,-5)$) {$\state_6$};
		\node[p2] (6) at ($(1)+(2,-2.5)$) {$\state_3$};
		\node[act] (1a) at ($(1) + (0,-2.5)$) {};
		\node[act] (1b) at (5,0) {};
		\node[act] (2a) at ($(2)+(3,0)$) {};
		\node[act] (2b) at ($(2)+(1,1.25)$) {};
		\node[act] (3a) at ($(3) + (-1.5,0)$) {};
		\node[act] (4a) at ($(4.east)+(1,0)$) {};
		\node[act] (5a) at ($(5)+(0,1.8)$) {};
		\node[act] (6a) at ($(6.north)+(0,1)$) {};
		\node[empty] at ($(1a)-(0.4,0)$) {\scriptsize $\alpha$};
		\node[empty] at ($(1b)+(0,0.4)$) {\scriptsize $\beta$};
		\node[empty] at ($(2a)-(0,0.5)$) {\scriptsize $\gamma$};
		\node[empty] at ($(2b)+(0,0.4)$) {\scriptsize $\delta$};
		\node[]  (init) at (-1.6,0) {};
		\path[-latex']
		(init) edge (1)
		(1) edge (1a);
		\draw[->] (1a) -- (2);
		\draw[->] (1) -- (1b);
		\draw[->] (1b) -- node[above]{\scriptsize $0.3$} (4);
		\draw[->] (1b) -- node[left,yshift=0mm]{\scriptsize $0.7$} (3);
		\draw[->] (2) to [] (2a);
		\draw[->] (2a) to [] (5);
		\draw [->] (4) to [out=35,in=110, looseness=1] (4a);
		\draw [->] (4a) to [out=-110,in=-35,looseness=1] (4);
		\draw [->] (5) -- (5a);
		\draw [->] (5a) -- node[below, pos=0.3, xshift=0mm]{\scriptsize $0.8$} (2);
		\draw [->] (5a) -- node[right, xshift=0mm]{\scriptsize $0.2$} (4);
		\draw [->] (3) -- (3a);
		\draw [->] (3a) -- (6);
		\draw[->] (2) -- (2b);
		\draw[->] (2b) -- (6);
		\draw [->] (6) to [out=125,in=200, looseness=1] (6a);
		\draw [->] (6a) to [out=-20,in=55,looseness=1] (6);
		\end{tikzpicture}
	}
		\caption{MDP $\mdp$ with $\goalstates_\varocircle {=} \set{\state_4,\state_6}$ and $\goalstates_\square {=} \set{\state_3}$}
		\label{fig:objectives:mdp}
		\end{subfigure}
		\begin{subfigure}[b]{0.34\linewidth}
			\centering
		\newcommand{\paretoplotscale}{0.24}
		\newcommand{\paretoplotwidth}{0.243\textwidth}
		\newcommand{\paretoplotxmax}{10.6}
		\newcommand{\paretoplotymax}{10.6}
		\newcommand{\paretoplotmin}{-1}
		\centering
		\begin{tikzpicture}[scale=\paretoplotscale,yscale=0.88]
	\path[use as bounding box] (-1,-1.5) rectangle (10,10);
				\path[genach] (-0.5,-0.5) -- (10,-0.5) -- (10,8) -- (0, 10) (-0.5,10) -- cycle;
				\path[ach] (-0.5,-0.5) rectangle (7,7);
				\path[ach] (-0.5,-0.5) rectangle (10,0);
				\path[ach] (-0.5,-0.5) rectangle (0,10);
				\path[ach] (-0.1,-0.1) -- (10,-0.1) -- (10, 0.1) -- (0.1,0.1) -- cycle;
				\path[ach] (-0.1,-0.1) -- (-0.1,10) -- (0.1, 10) -- (0.1,0.1) -- cycle;
				\node[point,label=-90:{\scriptsize$\tuple{0.7,0.7}$}] at (7,7) (ps1) {};
				\node[point,label=-45:{\scriptsize$\tuple{0,1}$}] at (0,10) (ps2) {};
				\node[point,label=90:{\scriptsize$\tuple{1,0}$}] at (10,0) (ps3) {};
				\node[,label=0:{\scriptsize$\tuple{1,0.8}$}] at (10,8) (p4) {};
				\path[->]
				(\paretoplotmin,0) edge (\paretoplotxmax,0)
				(0,\paretoplotmin) edge (0,\paretoplotymax)
				;
				\node[empty] at (5, -1.2) {\scriptsize $\reachobjop_\geq(\eventually \goalstates_\varocircle)$};
				\node[empty,rotate=90] at (-1.2, 5) {\scriptsize $\reachobjop_\geq(\eventually \goalstates_\square)$};
				\path[genborder] (0,10) edge (10,8);
		\end{tikzpicture}
			\caption{\mbox{$\ach{\mdp}{\multiobjquery}$ and $\psach{\mdp}{\multiobjquery}$}}
		\label{fig:objectives:plot}
	\end{subfigure}
		\caption{An MDP and a plot of the pure stationary and general Pareto fronts.}
		\label{fig:objectives}
\end{figure} 

\begin{example}
	Let $\mdp$ be the MDP in \Cref{fig:objectives:mdp} and $\multiobjquery = \tuple{\reachobjop_\geq(\eventually \goalstates_\varocircle), \reachobjop_\geq(\eventually \goalstates_\square)}$.
	A pure stationary strategy
	choosing $\beta$ at $\state_1$ reaches both, $\state_4 \in \goalstates_\varocircle$ and $\state_3 \in \goalstates_\square$ with probability $0.7$ and thus achieves $\tuple{0.7, 0.7}$. Similarly, $\tuple{0,1}$ and $\tuple{1,0}$ are achievable by a pure stationary strategy.
	Point $\tuple{1,0.8}$ is achievable by a non-stationary pure strategy that chooses $\alpha$ at $\state_1$, $\gamma$ at the first visit of $\state_2$, and $\delta$ in all other cases.
	Changing this strategy by picking $\gamma$ only with probability 0.5 achieves $\tuple{0.5,0.9}$.
	\Cref{fig:objectives:plot} illustrates $\pspareto{\mdp}{\multiobjquery}$ (dots), $\psach{\mdp}{\multiobjquery}$ (green area), $\pareto{\mdp}{\multiobjquery}$ (dotted line), and $\ach{\mdp}{\multiobjquery}$ (blue and green area).
\end{example}
\section{Deciding Achievability}
\label{sec:achievability}
The achievability problem asks whether a given point is achievable.
\begin{nproblem}[framed]{General Multi-objective Achievability Problem (\gma)}
	Input: & MDP $\mdp$, $\numobj$-dimensional multi-objective query $\multiobjquery$, point $\point \in \pointdomain$\\
	Output: & Yes iff $\point \in \ach{\mdp}{\multiobjquery}$ holds
\end{nproblem}
\noindent
For \gma, the point can be achieved by a general strategy that can potentially make use of memory and randomization.
As discussed earlier, this class of strategies is not suitable for various applications.
In this work, we focus on a variant of the achievability problem that only considers pure stationary strategies.
\Cref{sec:boundedmem} also addresses pure strategies that can store more information from the history, \eg whether a goal state set has been reached already.

\begin{nproblem}[framed]{Pure Stationary Multi-objective Achievability Problem (\psma)}
	Input: & MDP $\mdp$, $\numobj$-dimensional multi-objective query $\multiobjquery$, point $\point \in \pointdomain$\\
	Output: & Yes iff $\point \in \psach{\mdp}{\multiobjquery}$ holds
\end{nproblem}

\subsection{Complexity Results}

\gma is PSPACE hard (already with only reachability objectives)~\cite{RRS17} and  solvable within exponential runtime~\cite{EKVY07,FKNPQ11}.
To the best of our knowledge, a PSPACE upper bound on the complexity of \gma is unknown. This complexity is rooted in the dimension $\numobj$ of the query $\multiobjquery$:
for fixed $\numobj$, the algorithms of~\cite{EKVY07,FKNPQ11} have polynomial runtime.
In contrast, $\psma$ is NP-complete, even if restricted to  2~objectives.

\begin{lemma}\label{psma-reachability}
	\psma with only reachability objectives is NP-hard.
\end{lemma}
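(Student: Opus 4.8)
The plan is to reduce from a classic NP-complete problem. Since we are dealing with reachability probabilities and must force an integral (pure, memoryless) choice, a natural candidate is \textsc{3-SAT} (or \textsc{1-in-3-SAT}), where each memoryless decision at a state encodes the truth value of a variable. I would build an MDP where the initial state dispatches, with suitable probabilities, to ``variable gadgets'' whose single nondeterministic choice corresponds to setting a literal true or false, and to ``clause checkers'' that are only reachable along paths consistent with a satisfied clause. The multi-objective query then asks, in one dimension, that a designated ``variable-consistency'' goal set be reached with probability exactly $1$ (forcing every variable gadget to be visited and resolved consistently), and in another dimension that a ``clause-satisfied'' goal set be reached with probability at least the total mass assigned to clauses. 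Because a memoryless strategy must pick one action per state, the variable assignment it induces is automatically consistent across all occurrences of a variable, which is exactly the property general (randomizing, memoryful) strategies would fail to respect --- mirroring the contrast the paper highlights with \gma.

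Concretely, first I would fix the reduction source and describe the gadgets: for each variable $x_i$ a state $\state_{x_i}$ with two enabled actions (``true''/``false'') leading deterministically onward; for each clause $C_k = (\ell_{k,1} \vee \ell_{k,2} \vee \ell_{k,3})$ a state $\state_{C_k}$ from which the MDP routes --- via the relevant variable states --- into a ``clause-accepting'' sink iff the chosen action at one of the three literal-variables matches the polarity in $C_k$. The initial state $\sinit$ has a single action with a distribution putting mass, say, $\frac{1}{n}$ on each variable gadget's entry (to certify all variables are touched) composed appropriately with a second branch distributing mass uniformly over the $m$ clause gadgets. Second, I would pin down the two goal sets $\goalstates_1$ and $\goalstates_2$ and the target point $\point = \tuple{1, 1}$ (or $\tuple{1, p}$ for an explicitly computed $p$): $\goalstatesj[1]$ collects the terminal states reachable from every variable gadget regardless of the chosen action, so $\reachobjop_\ge(\eventually \goalstates_1)$ at value $1$ is always satisfiable and contributes no real constraint beyond well-formedness --- actually the real work is done by $\goalstates_2$, the union of clause-accepting sinks, with the requirement $\probmeasure{\mdp}{\sched}(\eventually \goalstates_2) \ge 1$, i.e., \emph{every} clause gadget must be routed to acceptance. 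Third, I would prove the two directions: (i) a satisfying assignment yields a pure stationary strategy achieving $\point$, by choosing at $\state_{x_i}$ the action matching the assignment and checking each clause branch reaches $\goalstates_2$; (ii) conversely, any pure stationary $\sched$ achieving $\point$ induces a well-defined assignment (one action per variable state) under which, since each clause branch reaches acceptance with positive probability only if the literal-variable's action agrees with the clause, every clause is satisfied.

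The main obstacle I anticipate is making the ``route a clause gadget into acceptance iff its literal matches'' construction work with pure \emph{memoryless} strategies while keeping the MDP polynomial. The subtlety is that a clause gadget must inspect the action chosen at a shared variable state, but in an MDP a strategy's choice is only revealed by which successor is taken, and a memoryless strategy cannot ``carry'' that information along a different path. The clean fix is to make the clause gadget actually \emph{pass through} the variable state: route $\state_{C_k}$ first into $\state_{\ell_{k,1}}$'s variable state, where the (unique, memoryless) chosen action either leads toward acceptance --- if it matches the polarity of $\ell_{k,1}$ in $C_k$ --- or onward to test $\ell_{k,2}$, and so on; after the third literal fails, route to a rejecting sink outside $\goalstates_2$. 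One must check this does not create end components that let a randomized general strategy cheat --- but since we only claim membership for pure stationary strategies and the target forces probability-$1$ reachability of $\goalstates_2$, any path reaching the rejecting sink is fatal, so the argument is robust. Finally I would note the NP upper bound (hence NP-completeness, once \psma in general is shown in NP) follows by guessing the pure stationary strategy and model-checking the resulting Markov chain in polynomial time; but for this lemma only NP-hardness is asserted, so the reduction above suffices.
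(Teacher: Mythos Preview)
Your proposal has two gaps. First, and most fundamentally: you concede that the objective $\reachobjop_\ge(\eventually \goalstates_1)$ at threshold $1$ ``contributes no real constraint,'' so your instance is effectively a \emph{single}-objective reachability question --- is there a pure stationary $\sched$ with $\probmeasure{\mdp}{\sched}(\eventually\goalstates_2)\ge p$? But for a single reachability objective the optimum over all strategies is already attained by a pure stationary one, so that question is in~P; any correct reduction landing there would put \textsc{3-SAT} in~P. The hardness of \psma must come from the interaction of at least two non-trivial objectives, and your construction never exploits this. Second, the clause gadget itself does not work: if a clause check routes \emph{through} the shared variable state $\state_{x_i}$, the successor of the memoryless action taken there is a fixed state, independent of which clause sent the token. ``Accept if the polarity matches $C_k$, otherwise continue with $C_k$'s next literal'' is therefore not expressible --- the next state cannot carry the identity of $C_k$. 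Using per-clause copies of variable states would restore this, but then consistency across copies is no longer automatic and you have no objective left to enforce it.

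The paper avoids both issues by reducing from \textsc{Subset Sum} rather than SAT. A single action at $\sinit$ distributes to $\state_1,\dots,\state_n$ with probabilities proportional to the inputs $a_i$; at each $\state_i$ the strategy picks one of two absorbing sinks $g_1,g_2$. The query is $\tuple{\reachobjop_\ge(\eventually\{g_1\}),\,\reachobjop_\ge(\eventually\{g_2\})}$ with thresholds $z/\sum_i a_i$ and $1-z/\sum_i a_i$. Since the two reachability probabilities always sum to $1$, the thresholds jointly force $\probmeasure{\mdp}{\sched}(\eventually\{g_1\})=z/\sum_i a_i$ \emph{exactly}; a randomising strategy can hit any such value, but a pure stationary one realises precisely the values $(\sum_{i\in I}a_i)/\sum_i a_i$ over subsets $I\subseteq\{1,\dots,n\}$. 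Both objectives are essential, no clause machinery is needed, and the MDP has only $n+3$ states.
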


\begin{proof}
	The result follows by a reduction from the subset sum problem.
	Given $n \in \mathbb{N}$, $\vec{a} \in \mathbb{N}^n$ and $z \in \mathbb{N}$, the subset sum problem is to decide the existence of $\vec{v} \in \{0, 1\}^n$ such that $\vec{v} \cdot \vec{a} = z$.
	This problem is NP-complete \cite{GJ79}.
	For a given instance of the subset sum problem, we construct the MDP $\mdp^\star = \mdptuple$ with state space $\states = \set{\sinit, \state_1, \dots, \state_n, g_1, g_2}$, actions $\actions = \set{\action, Y, N}$, and for all $i \in \set{1, \dots, n}$, 
	$\transitions(\sinit, \action, \state_i) = \frac{\tupleaccess{\vec{a}}{i}}{\vec{1 \cdot \vec{a}}}$ and
	$\transitions(\state_i, Y, g_1) = \transitions(\state_i, N, g_2) = 1$.
	States $g_1$ and $g_2$ are made absorbing, \ie $\transitions(g_1, \action, g_1) = \transitions(g_2, \action, g_2) =  1$.
	
	We claim that the \psma problem for $\mdp^\star$, $\multiobjquery = \tuple{ \reachobjop_\geq(\eventually \set{g_1}),  \reachobjop_\geq(\eventually \set{g_2})}$, and $\point = \left(\frac{z}{\vec{1} \cdot \vec{a}}, 1- \frac{z}{\vec{1} \cdot \vec{a}} \right)$ answers ``yes'' iff there is a vector $\vec{v}$ satisfying the subset sum problem for $n$, $\vec{a}$ and $z$.
	Consider the bijection $f \colon \psscheds{\mdp^\star} \to  \{0, 1\}^n$ with $\tupleaccess{f(\sched)}{i} = \iverson{\sched(\state_i) {=} Y}$ for all $\sched \in \psscheds{\mdp^\star}$ and $i\in\set{1,\dots,n}$.
	We get $\probmeasure{\mdp^\star}{\sched}(\eventually \set{g_1}) = \sum_{i=1}^n   \frac{\tupleaccess{\vec{a}}{i}}{\vec{1 \cdot \vec{a}}} \iverson{\sched(\state_i) {=} Y} = \frac{f(\sched) \cdot \vec{a}}{\vec{1} \cdot \vec{a}}$.
		Moreover, $\probmeasure{\mdp^\star}{\sched}(\eventually \set{g_2}) = 1-\probmeasure{\mdp^\star}{\sched}(\eventually \set{g_1}) = 1 - \frac{f(\sched) \cdot \vec{a}}{\vec{1} \cdot \vec{a}}$.
	It follows that $\sched$ achieves $\point$ iff $f(\sched)$ is a solution to the instance of the subset sum problem.
	Our construction is inspired by similar ideas from~\cite{CMH06,RRS17}.
\end{proof}

\begin{lemma}[\cite{CMH06}]\label{psma-reward}
	\psma with only total reward objectives is NP-hard.
\end{lemma}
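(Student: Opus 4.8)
The plan is to re-use the \textsc{Subset Sum} reduction from the proof of \Cref{psma-reachability}, but phrased with total reward objectives (objectives with $\goalstates = \emptyset$) instead of reachability objectives. Given an instance $\tuple{n, \vec{a}, z}$ of \textsc{Subset Sum}, I take the very same MDP $\mdp^\star$: states $\set{\sinit, \state_1, \dots, \state_n, g_1, g_2}$, the transition from $\sinit$ (action $\action$) reaching $\state_i$ with probability $\tupleaccess{\vec{a}}{i}/(\vec{1}\cdot\vec{a})$, each $\state_i$ having actions $Y$ with $\transitions(\state_i, Y, g_1) = 1$ and $N$ with $\transitions(\state_i, N, g_2) = 1$, and $g_1, g_2$ absorbing. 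Exactly as before, pure stationary strategies are in bijection with $\set{0,1}^n$ via $\tupleaccess{f(\sched)}{i} = \iverson{\sched(\state_i) = Y}$.

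The one new ingredient is the choice of reward structures, which must be made so that a total reward objective recovers the relevant reachability probability. For $k \in \set{1,2}$ I let $\rewstruct_k$ assign reward $1$ to every transition entering $g_k$ from some $\state_i$ (that is, $\rewstruct_1(\state_i, Y, g_1) = \rewstruct_2(\state_i, N, g_2) = 1$ for all $i$) and reward $0$ to all remaining transitions; in particular the absorbing self-loops at $g_1$ and $g_2$ carry reward $0$. Then along any infinite path $\infpath$ of $\mdp^\star$ the accumulated reward $\rewofpath{\rewstruct_k}{\infpath}$ is finite and equals $\iverson{\infpath \text{ visits } g_k}$: a run takes at most one transition into $g_k$ and then loops with reward $0$. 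Since exactly one of $g_1, g_2$ is visited almost surely, $\expval{\mdp^\star}{\sched}(\rewstruct_k \eventually \emptyset) = \probmeasure{\mdp^\star}{\sched}(\eventually \set{g_k})$ for all $\sched \in \psscheds{\mdp^\star}$, so the computation in the proof of \Cref{psma-reachability} carries over verbatim: $\expval{\mdp^\star}{\sched}(\rewstruct_1 \eventually \emptyset) = (f(\sched)\cdot\vec{a})/(\vec{1}\cdot\vec{a})$ and $\expval{\mdp^\star}{\sched}(\rewstruct_2 \eventually \emptyset) = 1 - (f(\sched)\cdot\vec{a})/(\vec{1}\cdot\vec{a})$.

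I then set $\multiobjquery = \tuple{\rewobjop_\geq(\rewstruct_1 \eventually \emptyset),\, \rewobjop_\geq(\rewstruct_2 \eventually \emptyset)}$ and $\point = \bigl(\tfrac{z}{\vec{1}\cdot\vec{a}},\, 1 - \tfrac{z}{\vec{1}\cdot\vec{a}}\bigr)$, and conclude as before: $\tuple{\mdp^\star, \sched, \point}\models\multiobjquery$ for some $\sched \in \psscheds{\mdp^\star}$ iff $f(\sched)\cdot\vec{a} = z$, i.e., iff the \textsc{Subset Sum} instance is positive. As $\mdp^\star$, the two reward structures, $\multiobjquery$, and $\point$ are trivially polynomial-time computable, this yields NP-hardness of \psma already for two total reward objectives. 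I expect the only delicate point to be exactly the reward placement argued above: putting the rewards on the incoming transitions of $g_1, g_2$ rather than as state rewards is what keeps the total reward finite and linear in $f(\sched)$; the rest is identical to \Cref{psma-reachability}.
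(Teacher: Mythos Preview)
Your argument is correct. The reduction is sound: by placing the unit rewards only on the transitions $\state_i \xrightarrow{Y} g_1$ and $\state_i \xrightarrow{N} g_2$ and leaving the absorbing self-loops reward-free, you guarantee that $\rewofpath{\rewstruct_k}{\infpath} = \iverson{\infpath \text{ enters } g_k}$ for every infinite path, so the expected total reward coincides with the reachability probability and the \textsc{Subset Sum} correspondence from \Cref{psma-reachability} transfers unchanged.

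Regarding comparison with the paper: note that the paper does \emph{not} supply its own proof of this lemma---it simply attributes the result to~\cite{CMH06} and moves on. That reference treats discounted rewards under pure strategies; the paper is implicitly relying on the reader to see that the hardness argument adapts to (undiscounted) total rewards. Your write-up therefore does more than the paper itself: it gives a self-contained reduction in the paper's own framework, reusing the MDP from \Cref{psma-reachability} rather than pointing to an external construction. This has the added benefit of making the ``two objectives suffice'' observation (used in the subsequent corollary) immediate from your reduction, whereas the paper only asserts it for \Cref{psma-reward} by reference.
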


\begin{theorem}
	\psma is NP-complete.
\end{theorem}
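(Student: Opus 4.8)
NP-hardness is already in hand: \psma restricted to reachability objectives is NP-hard by \Cref{psma-reachability} (equivalently, one may invoke \Cref{psma-reward} for the restriction to total reward objectives), and \psma subsumes both restrictions. So the only thing left to prove is that \psma is in NP; the plan is the standard \emph{guess-and-check}.

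A pure stationary strategy is a function $\sched \colon \states \to \actions$, hence has an encoding of size polynomial in $|\mdp|$. By definition of $\psach{\mdp}{\multiobjquery}$, we have $\point \in \psach{\mdp}{\multiobjquery}$ iff some $\sched \in \psscheds{\mdp}$ satisfies $\tuple{\mdp,\sched,\point} \models \multiobjquery$, that is, $\expval{\mdp}{\sched}(\rewstructj \eventually \goalstatesj) \objrelj \tupleaccess{\point}{\objindex}$ for every $\objindex \in \setdots{1}{\numobj}$. The NP machine therefore guesses $\sched$ and then checks this conjunction deterministically; in particular this works for arbitrary, not necessarily fixed, $\numobj$. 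It remains to show that the check runs in polynomial time.

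Fix $\sched \in \psscheds{\mdp}$. Since $\expval{\mdp}{\sched}$ coincides with the expectation on the induced Markov chain $\mc{\mdp}{\sched}$, which is constructed from $\mdp$ and $\sched$ in polynomial time, for each objective I would compute $\expval{\mc{\mdp}{\sched}}{}(\rewstructj \eventually \goalstatesj) \in \pointidomain$ by the standard analysis of expected reachability rewards on a finite Markov chain: determine the BSCCs of $\mc{\mdp}{\sched}$; the value is $\infty$ for exactly those states that can reach, in the underlying graph, a BSCC $B$ with $B \cap \goalstatesj = \emptyset$ carrying a strictly positive reward on one of its internal transitions; for every other state the value is finite and is the unique solution of a linear equation system of size polynomial in $|\mdp|$ (with the value fixed to $0$ on goal-free, zero-reward BSCCs). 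Solving this system over $\qq$ by Gaussian elimination is polynomial, and comparing the outcome (possibly $\infty$) with $\tupleaccess{\point}{\objindex} \in \pointidomain$ is trivial. Doing this for all $\numobj$ objectives yields a polynomial-time verifier, so \psma is in NP and, together with the hardness above, NP-complete.

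The one step that is not purely routine is the treatment of infinite expected rewards. I would need to argue that, in a finite Markov chain with non-negative transition rewards, the expected reward accumulated until $\goalstatesj$ from a state $\state$ is finite unless $\state$ can reach a goal-free BSCC with a positive internal reward -- the subtlety being that a reachable BSCC which \emph{does} contain a goal state contributes only a finite amount, since the expected time to hit that goal within the BSCC is finite -- and that the finite values are then captured by a well-posed linear system, so that Gaussian elimination applies with polynomially bounded bit-size. Everything else -- the construction of $\mc{\mdp}{\sched}$, the BSCC decomposition, and the bit-size bound -- is textbook, so the actual write-up would concentrate on making this infinite-value case precise and cite standard references for the remainder.
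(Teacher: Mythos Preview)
Your proposal is correct and takes essentially the same approach as the paper: hardness via \Cref{psma-reachability} or \Cref{psma-reward}, and membership by guessing a pure stationary strategy and verifying each objective on the induced Markov chain in polynomial time. The paper's own proof is a two-line version of yours that simply cites \cite{BK08} for the polynomial-time evaluation step, whereas you spell out the BSCC-based treatment of the infinite-reward case and the linear-system characterization of the finite values; your extra detail is sound and would make the argument self-contained.
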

\begin{proof}
	Containment follows by guessing a pure stationary strategy and evaluating it on the individual objectives. This can be done in polynomial time~\cite{BK08}.
	Hardness follows by either Lemma~\ref{psma-reachability} or~\ref{psma-reward}.
\end{proof}

Proofs of Lemmas~\ref{psma-reachability} and~\ref{psma-reward} only consider 2-dimensional multi-objective queries. Hence, in contrast to \gma, the hardness of \psma is not due to the size of the query.
\begin{corollary}
	\psma with only two objectives is NP-complete.
\end{corollary}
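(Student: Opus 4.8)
The plan is to obtain the corollary directly from the preceding theorem together with a closer look at the shape of the hardness reductions; nothing substantially new is needed, so this will be short.

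First, for membership in NP I would reuse verbatim the containment argument given for the theorem that \psma is NP-complete, observing that it is oblivious to the number of objectives. Concretely, given an MDP $\mdp$, a query $\multiobjquery = \tuple{\obj_1, \obj_2}$ and a point $\point \in \pointdomain$, one nondeterministically guesses a pure stationary strategy $\sched \in \psscheds{\mdp}$, which is a function $\states \to \actions$ of polynomial size, forms the induced Markov chain $\mc{\mdp}{\sched}$, and checks for each $\objindex \in \set{1,2}$ whether $\expval{\mdp}{\sched}(\rewstructj \eventually \goalstatesj) \objrelj \tupleaccess{\point}{\objindex}$. Each such check amounts to solving a linear equation system over $\mc{\mdp}{\sched}$ and is doable in polynomial time~\cite{BK08}; two such checks are still polynomial, so \psma restricted to two objectives lies in NP.

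Second, for NP-hardness I would simply point to the reduction already constructed in the proof of Lemma~\ref{psma-reachability}: from a subset sum instance $\tuple{n,\vec{a},z}$ it builds, in polynomial time, the MDP $\mdp^\star$, the query $\multiobjquery = \tuple{\reachobjop_\geq(\eventually \set{g_1}), \reachobjop_\geq(\eventually \set{g_2})}$, and the point $\point = \left(\tfrac{z}{\vec{1}\cdot\vec{a}},\, 1 - \tfrac{z}{\vec{1}\cdot\vec{a}}\right)$, which is a \psma instance with \emph{exactly two} objectives. Since subset sum is NP-complete and that lemma shows the produced instance is a yes-instance iff the subset sum instance is, this is a polynomial-time many-one reduction to two-objective \psma. (Invoking Lemma~\ref{psma-reward} instead works equally well, as that reduction also yields a $2$-dimensional query.) Combining the two parts yields the corollary.

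I do not expect any genuine obstacle here: the only thing to verify — and it is immediate by inspection — is that the hardness reductions of Lemmas~\ref{psma-reachability} and~\ref{psma-reward} really stay at two objectives, and that the NP-membership argument does not depend on the dimension of the query. In short, the corollary is precisely the specialization of the theorem to $\numobj = 2$, made non-trivial by the fact that the matching lower bound already holds in that restricted case.
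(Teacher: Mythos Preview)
Your proposal is correct and matches the paper's own justification essentially verbatim: the paper simply remarks that the reductions of Lemmas~\ref{psma-reachability} and~\ref{psma-reward} already use $2$-dimensional queries, while containment carries over from the theorem's guess-and-check argument. There is nothing to add.
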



\subsection{A Mixed Integer Linear Programming Approach}\label{sec:milp}

An MDP $\mdp = \mdptuple$ has exactly $|\psscheds{\mdp}| = \prod_{\state \in \states} |\act{\state}|$ many pure stationary strategies.
A simple algorithm for \psma enumerates all $\sched \in \psscheds{\mdp}$ and checks whether $\tuple{\mdp, \sched, \point} \models \multiobjquery$ holds.
In practice, however, such a brute-force approach is not feasible.
For the MDPs that we consider in our experiments in Sect.~\ref{sec:evaluation}, the number of pure stationary strategies often exceeds $10^{10\,000}$.
Instead, our approach is to encode an instance for \psma as an \milp problem.

\begin{nproblem}[framed]{Mixed Integer Linear Programming Problem (\milp)}
	Input: & $\ell,m,n \in \nn$, $\vec{A} \in \qq^{n \times (\ell + m)}$, $\vec{b} \in \qq^{n}$, $\vec{c} \in \qq^{\ell + m}$\\
	Output: & $\begin{cases} \vec{x} \in \argmax_{\vec{x} \in \mathcal{X}} \vec{c}^T \vec{x} & \text{if } \mathcal{X} \neq \emptyset  \\ \mathtt{infeasible} & \text{if } \mathcal{X} = \emptyset \end{cases}$
	with $\mathcal{X} = \{ \vec{x} \in \zz^\ell \times \rr^m \mid \vec{A}\vec{x} \le \vec{b} \}$
\end{nproblem}
For an \milp instance as above, each of the $n$ rows of the inequation system $\vec{A}\vec{x} \le \vec{b}$ represent a \emph{constraint} that is linear over the $\ell$ integral and $m$ real-valued \emph{variables} given by $\vec{x}$.
We call the constraints \emph{feasible} if there is a solution to the inequation system.
The task is to decide whether the constraints are feasible and if so, find a solution that maximizes a linear optimization function $\vec{c}^T \vec{x}$.
The optimization function can be omitted if we are only interested in feasibility.
\milp is NP-complete~\cite{PDM17}.
However, tools such as \tool{Gurobi}~\cite{gurobi} and \tool{SCIP}~\cite{scip} implement practically efficient algorithms that  can solve large instances.

For the rest of this section, let $\mdp = \mdptuple$, $\multiobjquery = \tuple{\obj_1, \dots, \obj_\numobj}$ with $\objj = \rewobjop_{\objrelj}(\rewstructj \eventually \goalstatesj)$, and $\point \in \pointdomain$ be an instance for \psma.
%
We provide a translation of the \psma instance to an instance for \milp that has a feasible solution iff $\point \in \psach{\mdp}{\multiobjquery}$.
The \milp encoding considers integer variables to encode a pure stationary strategy $\sched \in \psscheds{\mdp}$.
The other variables and constraints encode the expected reward for each objective on the induced MC $\mc{\mdp}{\sched}$.


%
\subsection{Unichain MDP and Finite Rewards}\label{sec:achievability:unichain}

\begin{restr}[Unichain MDP]\label{as:unichain}
	MDP $\mdp$ has exactly one end component.
\end{restr}
\begin{restr}[Reward Finiteness]\label{as:rewardfiniteness}
	$ \expval{\mdpwithinitstate{\mdp}{\state}}{\sched}(\rewstructj \eventually \goalstatesj) < \infty$ holds for each objective $\objj = \rewobjop_{\objrelj}(\rewstructj \eventually \goalstatesj)$, state $\state$, and pure stationary strategy $\sched$.
\end{restr}
For simplicity, we first explain our encoding for unichain MDP with finite reward.
\Cref{sec:achievability:multichain} lifts \Cref{as:unichain} and
\Cref{sec:achievability:infinite} lifts \Cref{as:rewardfiniteness} with more details given in \reftr{B}.
\Cref{sec:achievability:altencoding} presents an alternative to the encoding of this section, which is smaller but restricted to \emph{total} reward objectives.

\Cref{lp:unichainfinite} shows the \milp encoding in case \Cref{as:unichain,as:rewardfiniteness} hold.
We assume $\fa{\objindex}\tupleaccess{\point}{\objindex} \neq \infty$ for the point $\point$ since (i) $\expval{\mdp}{\sched}(\rewstructj \eventually \goalstatesj) \le \infty$ holds trivially and (ii) $\expval{\mdp}{\sched}(\rewstructj \eventually \goalstatesj) \ge \infty$ will never hold due to \Cref{as:rewardfiniteness}.
For $\objindex \in \set{1, \dots, \numobj}$, 
let
$
\stateszeroj =  \{\state \in \states \mid \fa{\sched \in \scheds{\mdp}}\expval{\mdp}{\sched}(\rewstructj \eventually \goalstatesj) = 0\}
$
 and
$\statesmaybej = \{ \state \in \states \setminus \stateszeroj \mid \allowbreak\state \text{ can be reached from } \sinit \text{ without visiting a state in } \stateszeroj\}$.
These sets can be obtained a priori by analyzing the graph structure of $\mdp$~\cite{BK08}.
Moreover, we consider upper bounds $\uppervalueboundsj \in \qq$ for the expected reward at state $\state \in \statesmaybej$ such that $\uppervalueboundsj \ge \max_{\sched \in \scheds{\mdp}}  \expval{\mdpwithinitstate{\mdp}{\state}}{\sched}(\rewstructj \eventually \goalstatesj)$.
We compute such upper bounds using single-objective model checking techniques~\cite{BK08,BKLPW17}.
The $\milp$ encoding applies the characterization of expected rewards for MCs as a \emph{linear equation system}~\cite{BK08}.
\begin{lemma}
	\label{lem:mceqsys}
	For every $\sched \in \psscheds{\mdp}$, 
	the following equation system has a unique solution $\lpsol \colon \set{\lpvalvar{\state}{} \mid \state \in \states} \to \rr^{|\states|}$ satisfying $\lpsolof{\lpvalvar{\state}{}} = \expval{\mdpwithinitstate{\mdp}{\state}}{\sched}(\rewstructj\eventually\goalstatesj)$:
	\begin{align*}
	\fa{\state \in \stateszeroj} \lpvalvar{\state}{}=0
	\qquad\quad 
	\fa{\state \in \statesmaybej} \lpvalvar{\state}{}=\sum_{\state'\in\states} \transitions(\state,\sched(\state),\state') \cdot \big(\lpvalvar{\state'}{} + \rewstruct(\state,\sched(\state),\state') \big)
	\end{align*}
\end{lemma}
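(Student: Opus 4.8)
The plan is to treat \Cref{lem:mceqsys} as a mild variant of the standard characterization of expected accumulated rewards in finite Markov chains~\cite{BK08}, applied to the induced chain $\mc{\mdp}{\sched}$ with the set $\stateszeroj$ playing the role of the absorbing target set, and to split the argument into the existence and the uniqueness of the solution.

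For existence I would verify that $\lpsolof{\lpvalvar{\state}{}} := \expval{\mdpwithinitstate{\mdp}{\state}}{\sched}(\rewstructj\eventually\goalstatesj)$ satisfies the system. The equations for $\state\in\stateszeroj$ hold by definition of $\stateszeroj$. For $\state\in\statesmaybej$ the key observation is $\goalstatesj\subseteq\stateszeroj$: from a goal state the reward accumulated before reaching $\goalstatesj$ is $0$ under every strategy, so every goal state lies in $\stateszeroj$. Hence $\state\notin\goalstatesj$, the reward on the first $\sched$-step is counted, and a one-step (law-of-total-expectation) decomposition of $\rewstructj\eventually\goalstatesj$ along the transition taken from $\state$ yields exactly the displayed equation; \Cref{as:rewardfiniteness} guarantees that all quantities are finite, so the rearrangement is legitimate (monotone convergence handles the non-negativity bookkeeping). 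I would also note that it suffices to reason about states in $\stateszeroj\cup\statesmaybej$, since every $\sched$-successor of a $\statesmaybej$-state again lies in $\stateszeroj\cup\statesmaybej$, while states outside this set are unreachable from $\sinit$ without passing through $\stateszeroj$.

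For uniqueness, let $\vec{\delta}$ be the difference of two solutions; then $\delta_\state=0$ on $\stateszeroj$, and on $\statesmaybej$ we get $\vec{\delta}|_{\statesmaybej}=P'\,\vec{\delta}|_{\statesmaybej}$, where $P'$ is the substochastic $\statesmaybej{\times}\statesmaybej$ block of the transition matrix of $\mc{\mdp}{\sched}$ (the mass leaving $\statesmaybej$ goes to $\stateszeroj$, where $\vec{\delta}$ vanishes). It thus suffices to show that $I-P'$ is invertible, equivalently that under $\sched$ the set $\stateszeroj$ is reached almost surely from every state of $\statesmaybej$. This is where both restrictions enter. By \Cref{as:unichain}, $\mdp$ has a unique end component $\stateactionpairset$, so $\mc{\mdp}{\sched}$ has a unique BSCC $B=\submdp{\states}{\stateactionpairset}$, entered almost surely from every state; moreover $B$ cannot be left by any strategy (otherwise the induced chain of an escaping strategy would have a BSCC that is an end component of $\mdp$ distinct from $\stateactionpairset$). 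If $B$ meets $\stateszeroj$ we are done by strong connectivity. Otherwise $B\cap\stateszeroj=\emptyset$, hence also $B\cap\goalstatesj=\emptyset$ (as $\goalstatesj\subseteq\stateszeroj$), so inside $B$ the truncated reward coincides with the plain accumulated reward; by \Cref{as:rewardfiniteness} no $B$-internal transition can carry positive reward (it would be taken infinitely often, forcing infinite expected reward), so every state of $B$ has expected reward $0$ under \emph{every} strategy, i.e.\ $B\subseteq\stateszeroj$ -- a contradiction. Hence $\stateszeroj$ is reached almost surely, $I-P'$ is invertible, $\vec{\delta}=\vec{0}$, and the solution is unique.

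I expect the almost-sure reachability of $\stateszeroj$ to be the main obstacle: without \Cref{as:unichain} the induced chain could have a bottom component trapped inside $\statesmaybej$ -- which immediately produces a free variable and destroys uniqueness -- and without \Cref{as:rewardfiniteness} such a trap could survive even in the unichain case. One also has to be slightly careful with goal states lying inside end components, where the truncated reward differs from the plain accumulated reward; the argument above sidesteps this because goal states are always contained in $\stateszeroj$ and the unique end component is inescapable.
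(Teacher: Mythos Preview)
Your proposal is correct and follows essentially the same route as the paper's proof: both arguments hinge on showing that under \Cref{as:unichain} and \Cref{as:rewardfiniteness} the set $\stateszeroj$ is reached almost surely from every state (because the unique end component either meets $\goalstatesj$ or carries only zero-reward transitions, hence lies in $\stateszeroj$), and then invoke the standard linear-equation characterization of expected rewards in Markov chains from~\cite{BK08}. You simply unpack more of what the paper delegates to that reference---the explicit existence/uniqueness split, the one-step decomposition, and the $I-P'$ invertibility argument---while the paper compresses all of this into a two-line appeal to~\cite[Section~10.5.1]{BK08}.
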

\begin{proof}
	Since $\mdp$ is unichain and we do not collect infinite reward, the only EC of $\mdp$ (\ie the only BSCC of $\mc{\mdp}{\sched}$ for any $\sched$) either contains a goal state or only contains transitions with zero reward.
	It follows that $\fa{\sched \in \psscheds{\mdp}}\probmeasure{\mdp}{\sched}(\eventually \stateszeroj) = 1$.
	\Cref{lem:mceqsys} follows by standard arguments for MCs with rewards \cite[Section 10.5.1]{BK08}. 
\end{proof}

\begin{figure}[t]
	\begin{lp}
	\lpline{}{\fa{\state \in \states}}{}{}{Select an action at each state}{}
	\lpline{"}{\fa{\action \in \act{\state}}}{\lpactionvarsa}{\in \set{0,1}}{}{lp:action:start}
	\lpline{"}{}{\sum_{\action \in \act{\state}}\lpactionvarsa}{=1}{}{lp:action:end}
	\lpline{}{\fa{\objindex \in \set{1, \dots, \numobj}}}{}{}{Compute expected reward values}{}
	\lpline{"}{\fa{\state \in \stateszeroj}}{\lpvalvarsj}{= 0}{}{lp:exprew:start}
	\lpline{"}{\text{If $\obj_\objindex$ is maximising, ${\pmobj} = {+}$ and $\minobj = 0$. Otherwise, ${\pmobj} = {-}$ and $\minobj = 1$.}}{}{}{}{}
	\lpline{"}{\fa{\state \in \statesmaybej}}{\pmobj\, \lpvalvarsj}{\in [0,\uppervalueboundsj]}{}{}
	\lpline{""}{\fa{\action \in \act{\state}}}{\pmobj\, \lpvalactionvarsaj}{\in [0,\uppervalueboundsj]}{}{}
	\lpline{"""}{}{\lpvalactionvarsaj}{\le \sum_{\state' \in \states} \transitions(\state,\action,\state') \cdot   \left(\lpvalvar{\state'}{\objindex} \pmobj \rewstructj(\state,\action,\state')\right)}{}{lp:exprew:sum}
	\lpline{"""}{}{\lpvalactionvarsaj}{\le \uppervalueboundsj \cdot \left(\lpactionvarsa - \minobj\right)}{}{lp:exprew:disabled}
	\lpline{""}{}{\lpvalvarsj}{\le \sum_{\action \in \act{\state}} \lpvalactionvarsaj + \minobj \cdot (|\act{\state}| - 1) \cdot \uppervalueboundsj}{}{lp:exprew:end}
	\lpline{"}{}{\pmobj\, \lpvalvar{\sinit}{\objindex}}{\objrelj \tupleaccess{\point}{\objindex}}{Assert value at initial state}{lp:exprew:point}
	\end{lp}
	\caption{\milp encoding for unichain MDP and finite rewards.}
	\label{lp:unichainfinite}
\end{figure}
We discuss the intuition of each constraint in \Cref{lp:unichainfinite}.
Let $\lpsol \colon \var \to \rr$ be an assignment of the occurring variables $\var$ to values.
$\lpsol$ is a solution of the constraints if all (in)equations are satisfied upon replacing all variables $v$ by~$\lpsolof{v}$.

Lines~\ref{lp:action:start} and~\ref{lp:action:end} encode a strategy $\sched \in \psscheds{\mdp}$ by considering a binary variable $\lpactionvarsa$ for each state $\state$ and enabled action $\action$ such that $\sched(\state)(\action) = 1$ iff $\lpsolof{\lpactionvarsa} = 1$ for a solution $\lpsol$. Due to Line~\ref{lp:action:end}, exactly one action has to be chosen at each state.

Lines~\ref{lp:exprew:start} to~\ref{lp:exprew:end} encode for each objective $\objj$ the expected rewards obtained for the encoded strategy $\sched$.
For every $\state \in \states$, the variable $\lpvalvarsj$ represents a (lower or upper) bound on the expected reward at $\state$.
Line~\ref{lp:exprew:start} sets this value for all $\state \in \stateszeroj$, reflecting the analogous case from \Cref{lem:mceqsys}.
For $\state \in \statesmaybej$, we distinguish maximizing (${\objrelj} = {\ge}$) and minimizing (${\objrelj} = {\le}$) objectives $\objj$.

 For maximizing $\objj$, we have $\lpsolof{\lpvalvarsj} \le  \expval{\mdpwithinitstate{\mdp}{\state}}{\sched}(\rewstructj \eventually \goalstatesj)$ for every solution $\lpsol$.
	This is achieved by considering a variable $\lpvalactionvarsaj$ for each enabled action $\action \in \act{\state}$.
	In Line~\ref{lp:exprew:sum}, we use the equation system characterization from~\Cref{lem:mceqsys} to assert that the value of $\lpvalactionvarsaj$ can not be greater than the expected reward at $\state$, given that  the encoded strategy $\sched$ selects $\action$.
	If $\sched$ does not select $\action$ (\ie $\lpsolof{\lpactionvarsa} = 0$), Line~\ref{lp:exprew:disabled} implies $\lpsolof{\lpvalactionvarsaj} = 0$.
	Otherwise, this constraint has no effect.
	Line~\ref{lp:exprew:end} ensures that every solution satisfies $\lpsolof{\lpvalvarsj} \le \lpsolof{\lpvalactionvarsaj} \le  \expval{\mdpwithinitstate{\mdp}{\state}}{\sched}(\rewstructj \eventually \goalstatesj)$ for $\action$ with $\lpsolof{\lpactionvarsa} = 1$.

	For minimizing $\objj$, we have $-\lpsolof{\lpvalvarsj} \ge  \expval{\mdpwithinitstate{\mdp}{\state}}{\sched}(\rewstructj \eventually \goalstatesj)$ for every solution $\lpsol$, \ie we consider the negated reward values.
	The encoding is as for maximizing objectives. However, Line~\ref{lp:exprew:disabled} yields $\lpsolof{\lpvalactionvarsaj} = - \uppervalueboundsj$ if $\action$ is not selected.
	Thus, in Line~\ref{lp:exprew:end} we add $\uppervalueboundsj$ for each of the $(|\act{\state}| - 1)$ non-selected actions.

Line~\ref{lp:exprew:point} and our observations above yield $\expval{\mdpwithinitstate{\mdp}{\state}}{\sched}(\rewstructj \eventually \goalstatesj) \ge \lpsolof{\lpvalvar{\sinit}{\objindex}} \ge \tupleaccess{\point}{\objindex}$
for maximizing and
$\expval{\mdpwithinitstate{\mdp}{\state}}{\sched}(\rewstructj \eventually \goalstatesj) \le -\lpsolof{\lpvalvar{\sinit}{\objindex}} \le \tupleaccess{\point}{\objindex}$
for minimizing objectives.
Therefore, $\point$ is achievable if a solution $\lpsol$ exists.
On the other hand, if $\point$ is achievable by some $\sched \in \psscheds{\mdp}$, the solution $\lpsol$ exists with $\lpsolof{\lpactionvarsa} = \sched(\state)(\action)$, $\lpsolof{\lpvalvarsj} = \lpsolof{\lpvalactionvarsaj} = \pmobj \expval{\mdpwithinitstate{\mdp}{\state}}{\sched}(\rewstructj \eventually \goalstatesj)$ if $\action = \sched(\state)$, and $\lpsolof{v} = 0$ for other~$v \in \var$.

\begin{theorem}\label{thm:unichainfinite}
	For unichain $\mdp$ and finite rewards, the constraints in~\Cref{lp:unichainfinite} are feasible iff
	$\point \in \psach{\mdp}{\multiobjquery}$.
\end{theorem}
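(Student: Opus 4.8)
The plan is to prove the two implications of the biconditional separately, and in both directions the heart of the argument is to relate the value variables $\lpvalvarsj$ of a solution $\lpsol$ to the \emph{actual} expected rewards $\expval{\mdpwithinitstate{\mdp}{\state}}{\sched}(\rewstructj \eventually \goalstatesj)$ of the pure stationary strategy $\sched$ that the binary variables $\lpactionvarsa$ encode. A preliminary observation I would record once and reuse: if $\state \in \statesmaybej$ and $\action \in \act{\state}$, then every successor $\state'$ of $\state$ under $\action$ lies in $\stateszeroj \cup \statesmaybej$ (a witnessing path from $\sinit$ to $\state'$ either passes through $\stateszeroj$ or does not), so the sums in Line~\ref{lp:exprew:sum} only refer to variables that the encoding actually declares.

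For the direction ``$\point \in \psach{\mdp}{\multiobjquery}$ implies feasibility'', I would fix $\sched \in \psscheds{\mdp}$ with $\tuple{\mdp, \sched, \point} \models \multiobjquery$ and exhibit the assignment $\lpsolof{\lpactionvarsa} = \iverson{\action = \sched(\state)}$, $\lpsolof{\lpvalvarsj} = \pmobj\,\expval{\mdpwithinitstate{\mdp}{\state}}{\sched}(\rewstructj \eventually \goalstatesj)$ (which is $0$ on $\stateszeroj$), $\lpsolof{\lpvalactionvarsaj} = \pmobj\,\expval{\mdpwithinitstate{\mdp}{\state}}{\sched}(\rewstructj \eventually \goalstatesj)$ for $\action = \sched(\state)$, and $\lpsolof{\lpvalactionvarsaj} = -\minobj\cdot\uppervalueboundsj$ for the other actions (so $0$ for maximizing and $-\uppervalueboundsj$ for minimizing objectives; note the paper's informal ``$\lpsolof{v}=0$ otherwise'' has to be read this way). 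Then every constraint is checked by direct calculation: Lines~\ref{lp:action:start}--\ref{lp:action:end} are immediate; the interval constraints hold because rewards are non-negative and by \Cref{as:rewardfiniteness} together with the choice $\uppervalueboundsj \ge \max_{\sigma}\expval{\mdpwithinitstate{\mdp}{\state}}{\sigma}(\rewstructj \eventually \goalstatesj)$; Line~\ref{lp:exprew:sum} holds \emph{with equality} for the selected action by the linear equation system of \Cref{lem:mceqsys}, and for a non-selected $\action \in \act{\state}$ because its right-hand side is $\pmobj$ times the expected reward of the strategy ``play $\action$ once, then follow $\sched$'', whose absolute value is $\le \uppervalueboundsj$; Lines~\ref{lp:exprew:disabled} and~\ref{lp:exprew:end} follow from the case split on $\pmobj$ and $\minobj$; and Line~\ref{lp:exprew:point} holds because $\tuple{\mdp, \sched, \tupleaccess{\point}{\objindex}} \models \objj$.

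For the converse, I would take a solution $\lpsol$, let $\sched$ be the pure stationary strategy selecting the unique action with $\lpsolof{\lpactionvarsa} = 1$ at each state (unique by Line~\ref{lp:action:end}), and fix an objective $\objj$. For a maximizing $\objj$, Line~\ref{lp:exprew:disabled} forces $\lpsolof{\lpvalactionvarsaj} = 0$ for every non-selected $\action$, so Line~\ref{lp:exprew:end} gives $\lpsolof{\lpvalvarsj} \le \lpsolof{\lpvalactionvar{\state}{\sched(\state)}{\objindex}}$, and then Line~\ref{lp:exprew:sum} at $\action = \sched(\state)$ shows that the vector $v$ with $v(\state) = \lpsolof{\lpvalvarsj}$ satisfies $v(\state) \le \sum_{\state'}\transitions(\state,\sched(\state),\state')\big(v(\state') + \rewstructj(\state,\sched(\state),\state')\big)$ on $\statesmaybej$ and $v = 0$ on $\stateszeroj$ — a sub-solution of the system of \Cref{lem:mceqsys}. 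Since $\mdp$ is unichain with finite rewards we have $\probmeasure{\mdp}{\sched}(\eventually \stateszeroj) = 1$ (exactly as in the proof of \Cref{lem:mceqsys}), so the substochastic matrix of $\mc{\mdp}{\sched}$ restricted to $\statesmaybej$ has spectral radius below $1$; hence $(I-\mathbf{P})^{-1} = \sum_{k\ge 0}\mathbf{P}^k$ exists and is entrywise non-negative, and applying it to $(I-\mathbf{P})v \le r$ yields $v \le v^\star$ where $v^\star(\state) = \expval{\mdpwithinitstate{\mdp}{\state}}{\sched}(\rewstructj \eventually \goalstatesj)$ is the unique solution. With Line~\ref{lp:exprew:point} this gives $\expval{\mdp}{\sched}(\rewstructj \eventually \goalstatesj) \ge \lpsolof{\lpvalvar{\sinit}{\objindex}} \ge \tupleaccess{\point}{\objindex}$. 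For a minimizing $\objj$ the same argument runs for $w(\state) = -\lpsolof{\lpvalvarsj}$, which Lines~\ref{lp:exprew:disabled} and~\ref{lp:exprew:end} (with $\minobj = 1$) now make a \emph{super}-solution $w \ge \mathbf{P}w + r$, so $w \ge v^\star$ and Line~\ref{lp:exprew:point} yields $\expval{\mdp}{\sched}(\rewstructj \eventually \goalstatesj) \le -\lpsolof{\lpvalvar{\sinit}{\objindex}} \le \tupleaccess{\point}{\objindex}$. Either way $\tuple{\mdp, \sched, \tupleaccess{\point}{\objindex}} \models \objj$; since $\objindex$ is arbitrary, $\point \in \psach{\mdp}{\multiobjquery}$.

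The routine part is the first direction, a constraint-by-constraint verification. The delicate step is the converse: one must show that an \emph{arbitrary} feasible assignment (not only the canonical one) has value variables under-/over-approximating the true expected rewards of the induced strategy. This relies on (i) the big-$M$ constraints in Lines~\ref{lp:exprew:disabled}--\ref{lp:exprew:end} collapsing the per-action variables onto the selected action, and (ii) the monotonicity / non-negative-inverse argument based on $\probmeasure{\mdp}{\sched}(\eventually \stateszeroj) = 1$ — precisely where \Cref{as:unichain,as:rewardfiniteness} enter. The sign bookkeeping for minimizing objectives (the $\pmobj$ and $\minobj$ flags, and the $-\uppervalueboundsj$ offsets) is the other spot demanding care.
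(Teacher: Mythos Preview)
Your proof is correct and follows the same approach as the paper's argument (which is given informally in the text preceding the theorem): encode a strategy via the binary variables, and show that the value variables under-/over-approximate the true expected rewards by comparing them to the unique solution of the linear system in \Cref{lem:mceqsys}. Your sub-/super-solution argument via the non-negative inverse $(I-\mathbf{P})^{-1}$ makes rigorous what the paper only sketches, and your remark that for minimizing objectives the non-selected $\lpvalactionvarsaj$ must be set to $-\uppervalueboundsj$ (not $0$) is a valid correction of the paper's informal ``$\lpsolof{v}=0$ for other $v$''.
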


\begin{proposition}\label{pro:unichainfinitesize}
	The \milp encoding above considers $\mathcal{O}(|\states|\cdot |\actions| \cdot \numobj)$ variables.
\end{proposition}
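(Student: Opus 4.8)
The plan is to simply enumerate the families of variables occurring in \Cref{lp:unichainfinite} and to bound the cardinality of each. First I would count the integer variables: \Cref{lp:action:start} introduces one binary variable $\lpactionvarsa$ per state $\state \in \states$ and per enabled action $\action \in \act{\state}$, hence there are $\sum_{\state \in \states} |\act{\state}|$ of them, which is at most $|\states| \cdot |\actions|$ and does not depend on $\numobj$. Next I would count the real-valued variables, which split into two families, both indexed additionally by the objective $\objindex \in \set{1, \dots, \numobj}$: the state-value variables $\lpvalvarsj$, of which there is one per objective and per state (for $\state \in \stateszeroj$ in \Cref{lp:exprew:start} and for $\state \in \statesmaybej$ otherwise), giving at most $\numobj \cdot |\states|$; and the state-action-value variables $\lpvalactionvarsaj$, of which there is one per objective, per state $\state \in \statesmaybej$, and per enabled action $\action \in \act{\state}$, giving at most $\numobj \cdot \sum_{\state \in \states} |\act{\state}| \le \numobj \cdot |\states| \cdot |\actions|$. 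Adding the three bounds and using $|\actions| \ge 1$ gives a total of $\mathcal{O}(|\states| \cdot |\actions| \cdot \numobj)$ variables, as claimed.

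There is no genuine obstacle in this argument; the only point that deserves an explicit remark is that the symbols $\uppervalueboundsj$, $\pmobj$, and $\minobj$ appearing throughout the constraints are not decision variables but fixed rationals, signs, and $\set{0,1}$-constants determined a priori from $\mdp$ and $\multiobjquery$ (the upper bounds $\uppervalueboundsj$ being precomputed by single-objective model checking), so they contribute nothing to the count. If desired, I would additionally note in passing that the same bookkeeping shows the number of constraints is likewise $\mathcal{O}(|\states| \cdot |\actions| \cdot \numobj)$, so the whole \milp instance is polynomial in the size of $\mdp$ and in $\numobj$, which is the property actually needed later; but the statement as phrased only concerns the variables, so the enumeration above suffices.
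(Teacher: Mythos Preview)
Your enumeration is correct and complete: the three families $\lpactionvarsa$, $\lpvalvarsj$, and $\lpvalactionvarsaj$ are the only decision variables in \Cref{lp:unichainfinite}, and your bounds on each are right. The paper itself states the proposition without proof (it is a straightforward count), so your argument is exactly the kind of justification one would supply if asked to spell it out.
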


\subsection{Alternative Encoding for Total Rewards} \label{sec:achievability:altencoding}
We now consider \psma instances where all objectives $\objj = \rewobjop_{\objrelj}(\rewstructj \eventually \goalstatesj)$ are expected \emph{total} reward objectives, \ie $\goalstatesj = \emptyset$.
For such instances, we can employ an encoding from~\cite{FKNPQ11} (restated in~\Cref{lem:flow}) for \gma.
In fact, we can often translate reachability reward objectives to total reward objectives, \eg if the set of goal states can not be left or if all objectives consider the same goal states.

\begin{lemma}[\cite{FKNPQ11}]\label{lem:flow}
	For $\stateszero \subseteq \states$, let $\lpsol \colon \var \to \rewarddomain$ be an assignment of variables $\var = \set{\lpflowvarsa \mid \state \in \states \setminus \stateszero, \action \in \act{\state}}$ and let
	$\sched_\lpsol$ be a stationary strategy satisfying $\sched_\lpsol(\state)(\action) = \lpsolof{\lpflowvarsa} / \sum_{\beta \in \act{\state}} \lpsolof{\lpflowvar{\state}{\beta}}$ for all $\state \in \states \setminus \stateszero$ and $\action \in \act{\state}$ for which the denominator is non-zero.
	Then, $\lpsol$ is a solution to the equation system
	\begin{align*}
	\fa{\state \in \states \setminus \stateszero}
	\sum_{\action \in \act{\state}} \lpflowvarsa
	~&=~
	\iverson{\state = \sinit} \,\,\,+ \!\!\!\sum_{\tuple{\state',\action'} \in \predecessors{\state}} \transitions(\state',\action',\state) \cdot \lpflowvar{\state'}{\action'} 
	\\
	1
	~&=~
	\sum_{\lpflowvarsa \in \var} \lpflowvarsa \cdot  \sum_{\state' \in \stateszero} \transitions(\state,\action,\state') 
	\end{align*}
	%
	iff $\probmeasure{\mdp}{\sched_\lpsol}(\eventually \stateszero) = 1$ and
	$\fa{\lpflowvarsa \in \var} \lpsolof{\lpflowvarsa} = \expval{\mdp}{\sched_\lpsol}(\rewstruct_{\state,\action} \eventually \stateszero)$
	with reward structure $\rewstruct_{\state,\action}$ given by $\rewstruct_{\state,\action}(\hat{\state},\hat{\action}, s') = \iverson{\hat{\state} = \state \text{ and } \hat{\action} = \action}$. 
\end{lemma}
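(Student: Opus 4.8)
The plan is to read $\lpsolof{\lpflowvarsa}$ as the \emph{expected occupation measure} of the pair $\tuple{\state,\action}$, i.e.\ the expected number of times $\action$ is chosen in $\state$ along a run of $\mc{\mdp}{\sched_\lpsol}$ before $\stateszero$ is first visited; by the choice of $\rewstruct_{\state,\action}$ this quantity equals $\expval{\mdp}{\sched_\lpsol}(\rewstruct_{\state,\action}\eventually\stateszero)$. I will assume, as can be ensured by the usual preprocessing, that every state is reachable from $\sinit$; without this the first equation has spurious solutions that merely circulate flow on unreachable states, and the stated equivalence genuinely fails. The ``$\Leftarrow$'' direction is the routine one and essentially records standard facts about occupation measures; the ``$\Rightarrow$'' direction needs real work.

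For ``$\Leftarrow$'', assume $\probmeasure{\mdp}{\sched_\lpsol}(\eventually\stateszero)=1$ and $\lpsolof{\lpflowvarsa}=\expval{\mdp}{\sched_\lpsol}(\rewstruct_{\state,\action}\eventually\stateszero)$ for all $\tuple{\state,\action}\in\var$. Because $\mc{\mdp}{\sched_\lpsol}$ is finite and reaches $\stateszero$ almost surely from every state, the expected hitting time is finite, so all these values are finite and $\lpsol$ is a genuine $\rewarddomain$-assignment. The first equation is the flow-conservation law: the expected number of departures from $\state$ before $\stateszero$ equals the expected number of arrivals at $\state$, namely $\iverson{\state=\sinit}$ plus $\sum_{\tuple{\state',\action'}\in\predecessors{\state}}\transitions(\state',\action',\state)\cdot\lpsolof{\lpflowvar{\state'}{\action'}}$; this is proved by summing over all time steps the probability of being at $\state$ with $\stateszero$ not yet visited, using that each visit of $\state'$ picks $\action'$ with probability $\sched_\lpsol(\state')(\action')$. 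The second equation says the expected number of transitions first entering $\stateszero$ is $1$; since at most one such transition occurs on a run, this expectation equals $\probmeasure{\mdp}{\sched_\lpsol}(\eventually\stateszero)=1$.

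For ``$\Rightarrow$'', let $\lpsol$ solve the system, let $\sched_\lpsol$ be as in the statement, and set $m_\state=\sum_{\action\in\act{\state}}\lpsolof{\lpflowvarsa}$, so that $\lpsolof{\lpflowvarsa}=m_\state\cdot\sched_\lpsol(\state)(\action)$ when $m_\state>0$ and $\lpsolof{\lpflowvarsa}=0$ otherwise. The key step is to prove $\probmeasure{\mdp}{\sched_\lpsol}(\eventually\stateszero)=1$. From the first equation, a state $\state\notin\stateszero$ with $m_\state=0$ is neither $\sinit$ nor the $\sched_\lpsol$-successor of a state with positive $m$, while positivity of $m$ propagates along $\sched_\lpsol$-transitions; hence the fragment of $\mc{\mdp}{\sched_\lpsol}$ reachable from $\sinit$ stays inside $\stateszero\cup\set{\state\mid m_\state>0}$. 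Suppose, for contradiction, that $\stateszero$ is not reached almost surely; then $\mc{\mdp}{\sched_\lpsol}$ has a BSCC $C$ with $C\cap\stateszero=\emptyset$ that is reachable from $\sinit$. Summing the first equation over $\state\in C$ and using that $C$ is closed under $\sched_\lpsol$, the inflow into $C$ originating in $C$ exactly cancels $\sum_{\state\in C}m_\state$, leaving $0=\iverson{\sinit\in C}+(\text{inflow into }C\text{ from outside})$; as both summands are nonnegative, $\sinit\notin C$ and no positive-$m$ state outside $C$ feeds into $C$ — contradicting that $C$ is reachable from $\sinit$.

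Hence $\stateszero$ is reached almost surely, and it remains to show $\lpsolof{\lpflowvarsa}$ equals the occupation measure $y'_{\state,\action}:=\expval{\mdp}{\sched_\lpsol}(\rewstruct_{\state,\action}\eventually\stateszero)$. Let $Q$ be the sub-stochastic transition matrix of $\mc{\mdp}{\sched_\lpsol}$ restricted to $\states\setminus\stateszero$; since $\stateszero$ is reached almost surely from every state, $I-Q$ is invertible. Putting $n_\state=\sum_\action y'_{\state,\action}$, both $m$ and $n$, viewed as row vectors over $\states\setminus\stateszero$, satisfy $v(\state)=\iverson{\state=\sinit}+\sum_{\state'}v(\state')\,Q(\state',\state)$: for $m$ this is the first equation rewritten via $\lpsolof{\lpflowvarsa}=m_\state\,\sched_\lpsol(\state)(\action)$, for $n$ it is the flow-conservation law shown in the ``$\Leftarrow$'' direction. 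By uniqueness $m=n$, hence $\lpsolof{\lpflowvarsa}=m_\state\,\sched_\lpsol(\state)(\action)=n_\state\,\sched_\lpsol(\state)(\action)=y'_{\state,\action}$ for all $\tuple{\state,\action}$, which together with the almost-sure reachability just proved completes ``$\Rightarrow$''. The step I expect to be most delicate is precisely the almost-sure reachability of $\stateszero$ and the accompanying uniqueness of the flow: the conservation law alone is underdetermined, so the reachability preprocessing together with the BSCC cancellation argument is what forces $\lpsol$ to be the true occupation measure; some care is also needed because the $y'$-values are only guaranteed finite once reachability is established.
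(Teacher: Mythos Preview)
The paper does not give its own proof of this lemma: it is quoted from~\cite{FKNPQ11} and used as a black box, with only the intuitive remark that $\expval{\mdp}{\sched_\lpsol}(\rewstruct_{\state,\action}\eventually\stateszero)$ ``coincides with the expected number of times action $\action$ is taken at state $\state$ until $\stateszero$ is reached''. So there is nothing to compare your argument against at the level of proof strategy.

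Your argument is the standard occupation-measure proof and is correct. The ``$\Leftarrow$'' direction is indeed just flow conservation plus the fact that the expected number of first entries into $\stateszero$ equals the reachability probability. For ``$\Rightarrow$'', your BSCC cancellation argument (sum the first equation over a hypothetical BSCC $C\subseteq\states\setminus\stateszero$ of $\mc{\mdp}{\sched_\lpsol}$ and observe that the internal flow cancels, forcing $\sinit\notin C$ and zero inflow) is the right mechanism; combined with the observation that positivity of $m_\state$ propagates along $\sched_\lpsol$-transitions, it rules out any such BSCC being reachable. Once almost-sure reachability of $\stateszero$ is established, the sub-stochastic matrix $Q$ on $\states\setminus\stateszero$ has $I-Q$ invertible, and the uniqueness argument showing $m=n$ goes through as you wrote it; states with $m_\state=0$ are unreachable in $\mc{\mdp}{\sched_\lpsol}$ and hence have $n_\state=0$ too, so restricting to $\set{\state\mid m_\state>0}$ is harmless.

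Your caveat about assuming all states are reachable from $\sinit$ is well-placed and honest: without it, the first equation admits arbitrary circulating flow on unreachable end components (e.g.\ an unreachable self-loop state can carry any $\lpflowvarsa$), which would violate the claimed equality with the occupation measure while still satisfying both equations. The paper's later uses of the lemma are all in contexts where this preprocessing is implicit, so your assumption matches the intended use.
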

In \cite{FKNPQ11}, the lemma is applied to decide achievability of multiple total reward objectives under strategies that are stationary, but not necessarily pure.
Intuitively, $\expval{\mdp}{\sched_\lpsol}(\rewstruct_{\state,\action} \eventually \stateszero)$ coincides with the expected number of times action $\action$ is taken at state $\state$ until $\stateszero$ is reached.
Since this value can be infinite if $\probmeasure{\mdp}{\sched_\lpsol}(\eventually \stateszero) < 1$, a solution $\lpsol$ can only exist if it induces a strategy that almost surely reaches $\stateszero$.
\begin{figure}[t]
	\begin{lp}
	\lpline{}{\fa{\state \in \statesmaybe,\action \in \act{\state}}\enlargecolumn}{\lpflowvarsa}{\in [0, \maxflow{\state} \cdot \lpactionvarsa]}{}{lp:alternative:start}
	\lpline{"}{~~~~~~~~~~~~~~~~~~~~~}{ \sum_{\action \in \act{\state}} \lpflowvarsa}{=\iverson{\state = \sinit} + \sum_{\tuple{\state',\action'} \in \predecessors{\state}} \transitions(\state',\action',\state) \cdot \lpflowvar{\state'}{\action'}}{}{}
	\lpline{}{}{1}{=\sum_{\state \in \statesmaybe} \sum_{\action \in \act{\state}} \lpflowvarsa \cdot \sum_{\state' \in \stateszero} \transitions(\state,\action,\state')}{}{lp:alternative:outflow}
	\lpline{"}{\fa{\objindex \in \set{1, \dots, \numobj}}\enlargecolumn}{\lpvalvar{\sinit}{\objindex}}{= \sum_{\state \in \statesmaybe} \sum_{\action \in \act{\state}} \lpflowvarsa \cdot \sum_{\state' \in \states} \left(\transitions(\state,\action,\state') \cdot  \rewstructj(\state, \action, \state')\right) }{}{lp:alternative:value}
	\lpline{"}{}{\lpvalvar{\sinit}{\objindex}}{\objrelj \tupleaccess{\point}{\objindex}}{}{lp:alternative:end}
	\end{lp}
	\caption{\milp encoding for total reward objectives.}
	\label{lp:alternative}
\end{figure}

The encoding for unichain MDP with finite rewards and total reward objectives is shown in \Cref{lp:alternative}, where $\stateszero = \bigcap_{j} \stateszeroj$ and $\statesmaybe = \states \setminus \stateszero$.
We consider the constraints in conjunction with Lines~\ref{lp:action:start} and~\ref{lp:action:end} from \Cref{lp:unichainfinite}.
Let $\lpsol$ be a solution and let $\sched$ be the strategy encoded by such a solution, \ie $\sched(\state)(\action) = \lpsolof{\lpactionvarsa}$.

Lines~\ref{lp:alternative:start} to~\ref{lp:alternative:outflow} reflect the equations of~\Cref{lem:flow}.
Since $\mdp$ is unichain and we assume finite rewards, there is just one end component in which no reward can be collected. Hence, $\stateszero$ is almost surely reached.
Line~\ref{lp:alternative:start} ensures that the strategy in~\Cref{lem:flow} coincides with the encoded pure strategy $\sched$.
We write $\maxflow{\state}$ for an upper bound of the value a solution can possibly assign to $\lpflowvarsa$, \ie
$
\fa{\sched \in \psscheds{\mdp}}
\maxflow{\state} \ge 
\expval{\mdp}{\sched}(\rewstruct_{\state,\action} \eventually  \stateszero)
$.
Such an upper bound can be computed based on ideas of~\cite{BKLPW17}. More details are given in~\reftr{A}.

%
%

With \Cref{lem:flow} we get that $\lpsolof{\lpflowvar{\state}{\sched(\state)}}$ is the expected number of times state $s$ is visited under strategy $\sched$.
Therefore, 
in Line~\ref{lp:alternative:value} we sum up for each state $\state \in \statesmaybe$ the expected amount of reward collected at $\state$.
This yields $\lpsolof{\lpvalvar{\sinit}{\objindex}} = \expval{\mdp}{\sched}(\rewstructj \eventually \goalstatesj)$.
Finally, Line~\ref{lp:alternative:end} asserts that the resulting values exceed the thresholds given by~$\point$.
\begin{theorem}\label{thm:alternative}
	For unichain $\mdp$, finite rewards, and total reward objectives, the constraints in \Cref{lp:alternative} and Lines~\ref{lp:action:start} and~\ref{lp:action:end} of \Cref{lp:unichainfinite} are feasible iff
	$\point \in \psach{\mdp}{\multiobjquery}$.
\end{theorem}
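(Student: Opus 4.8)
The plan is to prove the equivalence as a correctness argument for the encoding, with \Cref{lem:flow} as the main tool, instantiated with the set $\stateszero = \bigcap_\objindex \stateszeroj$ (and $\statesmaybe = \states \setminus \stateszero$) as in \Cref{lp:alternative}. Throughout, Lines~\ref{lp:action:start}--\ref{lp:action:end} of \Cref{lp:unichainfinite} make the integer variables encode a pure stationary $\sched$ via $\sched(\state)(\action) = \lpsolof{\lpactionvarsa}$, and Line~\ref{lp:alternative:start} links the flow variables to $\sched$ by forcing $\lpsolof{\lpflowvarsa} = 0$ whenever $\sched$ does not select $\action$. A structural fact needed below is that under \Cref{as:unichain,as:rewardfiniteness} every pure stationary strategy almost surely reaches $\stateszero$ with finite expected visit counts: the unique end component has only zero-reward transitions for every objective (otherwise cycling within it yields infinite expected reward, contradicting \Cref{as:rewardfiniteness}), so -- exactly as in the proof of \Cref{lem:mceqsys} -- $\stateszero$ is reached almost surely, and in the induced finite Markov chain the expected visit counts are then finite.

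For the direction from feasibility to achievability, let $\lpsol$ be a solution and $\sched$ the encoded strategy. Since the flow is carried only by $\sched$-chosen state--action pairs, the (a priori randomised) strategy $\sched_\lpsol$ of \Cref{lem:flow} coincides with $\sched$ on every state with positive outgoing flow (\ie every state reachable from $\sinit$ under $\sched$ before $\stateszero$), while on the remaining states the choice is irrelevant. The flow part of $\lpsol$ solves precisely the equation system of \Cref{lem:flow}, so that lemma gives $\probmeasure{\mdp}{\sched}(\eventually \stateszero) = 1$ and $\lpsolof{\lpflowvarsa} = \expval{\mdp}{\sched}(\rewstruct_{\state,\action} \eventually \stateszero)$, \ie the expected number of times $\action$ is taken at $\state$ before $\stateszero$. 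Substituting into Line~\ref{lp:alternative:value} yields $\lpsolof{\lpvalvar{\sinit}{\objindex}} = \expval{\mdp}{\sched}(\rewstructj \eventually \goalstatesj)$, because the expected total reward of $\sched$ is the sum over state--action pairs of expected visits times expected one-step reward, and no further reward is collected once $\stateszero \subseteq \stateszeroj$ is reached. Line~\ref{lp:alternative:end} then reads $\expval{\mdp}{\sched}(\rewstructj \eventually \goalstatesj) \objrelj \tupleaccess{\point}{\objindex}$ for all $\objindex$, so $\sched$ achieves $\point$ and $\point \in \psach{\mdp}{\multiobjquery}$.

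For the converse, let $\sched \in \psscheds{\mdp}$ achieve $\point$. By the structural fact above, $\probmeasure{\mdp}{\sched}(\eventually \stateszero) = 1$ and the quantities $\expval{\mdp}{\sched}(\rewstruct_{\state,\action} \eventually \stateszero)$ are finite, so \Cref{lem:flow} applies in the reverse direction: setting $\lpsolof{\lpactionvarsa} := \sched(\state)(\action)$ (satisfying Lines~\ref{lp:action:start}--\ref{lp:action:end}) and $\lpsolof{\lpflowvarsa} := \expval{\mdp}{\sched}(\rewstruct_{\state,\action} \eventually \stateszero)$ gives values that solve the flow equations of Lines~\ref{lp:alternative:start}--\ref{lp:alternative:outflow}, vanish on actions not chosen by $\sched$, and are bounded by $\maxflow{\state}$ by the choice of $\maxflow{\state}$ (see \reftr{A}); hence Line~\ref{lp:alternative:start} holds. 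Finally Line~\ref{lp:alternative:value} evaluates to $\lpsolof{\lpvalvar{\sinit}{\objindex}} = \expval{\mdp}{\sched}(\rewstructj \eventually \goalstatesj) \objrelj \tupleaccess{\point}{\objindex}$, so Line~\ref{lp:alternative:end} is satisfied and the \milp is feasible.

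I expect the main obstacle to be the bookkeeping that reconciles \Cref{lem:flow} -- whose conclusion is stated for the possibly randomised strategy $\sched_\lpsol$ reconstructed from the flow -- with the \emph{pure} strategy encoded by the integer variables: one must argue that Line~\ref{lp:alternative:start} pins $\sched_\lpsol$ down to $\sched$ on exactly the states carrying flow, and that states with zero total flow (never visited) cause no mismatch in either direction. The only other delicate point is the structural fact that \Cref{as:unichain,as:rewardfiniteness} force almost-sure reachability of $\stateszero$ with finite expected visit counts; this is the precise place where the unichain assumption is used and what the later subsections will have to lift.
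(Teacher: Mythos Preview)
Your proposal is correct and follows essentially the same approach as the paper: both arguments instantiate \Cref{lem:flow} with $\stateszero = \bigcap_\objindex \stateszeroj$, use the unichain and finite-reward assumptions to force the unique end component to carry zero reward (hence $\stateszero$ is reached almost surely), invoke Line~\ref{lp:alternative:start} to pin the flow-derived strategy $\sched_\lpsol$ to the integer-encoded pure $\sched$, and read off Line~\ref{lp:alternative:value} as the expected total reward. Your write-up is in fact more careful than the paper's discussion preceding the theorem---you explicitly address the converse direction and the harmless mismatch between $\sched_\lpsol$ and $\sched$ on zero-flow states---but there is no substantive difference in method.
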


\begin{proposition}\label{pro:unichainfinitetotalsize}
	The \milp encoding above considers $\mathcal{O}(|\states|\cdot |\actions| + \numobj)$ variables.
\end{proposition}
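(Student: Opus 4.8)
The plan is a straightforward accounting of the variable families occurring in the encoding, so I expect no real obstacle — the work is just to make sure nothing is double-counted and nothing is hidden. First I would enumerate the three groups of variables used by the constraints of \Cref{thm:alternative}: (i) the binary action-selection variables $\lpactionvarsa$ contributed by Lines~\ref{lp:action:start}--\ref{lp:action:end} of \Cref{lp:unichainfinite}, one for each state $\state \in \states$ and each enabled action $\action \in \act{\state}$; (ii) the flow variables $\lpflowvarsa$ introduced in \Cref{lp:alternative}, one for each $\state \in \statesmaybe$ and each $\action \in \act{\state}$; and (iii) the value variables $\lpvalvar{\sinit}{\objindex}$ appearing in Lines~\ref{lp:alternative:value} and~\ref{lp:alternative:end}, one for each objective index $\objindex \in \set{1,\dots,\numobj}$.

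Next I would bound each group. Groups (i) and (ii) each contain $\sum_{\state \in \states} |\act{\state}|$ variables — for (ii) using $\statesmaybe \subseteq \states$ — and this sum is at most $|\states| \cdot |\actions|$. Group (iii) contains exactly $\numobj$ variables. Summing the three bounds gives the claimed $\mathcal{O}(|\states| \cdot |\actions| + \numobj)$.

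To close the argument I would observe that no further variables are introduced: the bounds $\maxflow{\state}$ and the threshold entries $\tupleaccess{\point}{\objindex}$ are rational constants fixed before the \milp is solved, and the symbols $\pmobj$ and $\minobj$ from \Cref{lp:unichainfinite} play no role here since only Lines~\ref{lp:action:start} and~\ref{lp:action:end} of that figure are reused. The point worth highlighting — rather than a difficulty — is the contrast with \Cref{pro:unichainfinitesize}: the encoding in \Cref{lp:unichainfinite} needs the per-state, per-objective variables $\lpvalvarsj$ and $\lpvalactionvarsaj$, which is exactly what produces the multiplicative factor $\numobj$ there, whereas in \Cref{lp:alternative} the objective values are read off directly from the flow via a single variable $\lpvalvar{\sinit}{\objindex}$ per objective, so the dependence on $\numobj$ degrades from multiplicative to additive.
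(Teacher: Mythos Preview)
Your proposal is correct and is exactly the natural counting argument the paper leaves implicit (the proposition is stated without proof). Your enumeration of the three variable families---the action variables $\lpactionvarsa$, the flow variables $\lpflowvarsa$, and the per-objective values $\lpvalvar{\sinit}{\objindex}$---matches the encoding in \Cref{lp:alternative} together with Lines~\ref{lp:action:start}--\ref{lp:action:end} of \Cref{lp:unichainfinite}, and your closing remark on the contrast with \Cref{pro:unichainfinitesize} correctly identifies why the dependence on $\numobj$ becomes additive.
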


The encoding for total reward objectives considers fewer variables compared to the encoding of \Cref{sec:achievability:unichain} (cf. \Cref{pro:unichainfinitesize}).
In practice, this often leads to faster solving times as we will see in \Cref{sec:evaluation}.

\subsection{Extension to Multichain MDP}\label{sec:achievability:multichain}
We now lift the restriction to unichain MDP, \ie we consider multichain MDP with finite rewards.
We focus on the encoding of \Cref{sec:achievability:unichain}.
Details for the approach of \Cref{sec:achievability:altencoding} are in \reftr{C}.
The key challenge is that the equation system in \Cref{lem:mceqsys} does not yield a unique solution for multichain MDP.
\begin{example}\label{ex:multichain}
	For the multichain MDP in \Cref{fig:multichain} with $\goalstates = \set{\state_1}$ we have $\stateszero = \set{\state_1}$ and $\statesmaybe = \set{\state_0}$ (the superscript~$\objindex$ is omitted as there is only one objective).
	For $\sched$ with $\sched(\state_0) = \alpha$ we get $\expval{\mdp}{\sched}(\rewstruct \eventually \goalstates) = 0$, but every $\lpsol \colon \set{\lpvalvar{\state_0}{},\lpvalvar{\state_1}{}} \to \rr \times \set{0} $  is a solution for the equation system in \Cref{lem:mceqsys}.
\end{example}
For multichain MDP it can be the case that for some strategy $\sched$ the set $\stateszeroj$ is not reached with probability 1, \ie there is a positive probability to stay in the set $\statesmaybej$ forever.
For the induced Markov chain $\mc{\mdp}{\sched}$, this means that there is a reachable BSCC consisting only of states in $\statesmaybej$.
Since BSCCs of $\mc{\mdp}{\sched}$ coincide with end components of $\mdp$, we need to inspect the ECs of $\mdp$ that only consist of $\statesmaybej$-states.
These ECs correspond to the ECs of the sub-MDP $\submdp{\mdp}{\stateactionpairset_?^\objindex }$, where 
$\stateactionpairset_?^\objindex$ is the largest subset of $\statesmaybej \times \actions$ that is closed for $\mdp$.
For each $\stateactionpairset \in \mecs{\submdp{\mdp}{\stateactionpairset_?^\objindex}}$, we need to detect whether the encoded strategy induces a BSCC $\stateactionpairset' \subseteq \stateactionpairset$.

\begin{figure}[t]
	\begin{lp}
	\lpline{}{\fa{\objindex \in \set{1, \dots, \numobj}, \stateactionpairset \in \mecs{\submdp{\mdp}{\stateactionpairset_?^\objindex}}}}{}{}{Detect states with zero reward}{}
	\lpline{"}{\fa{\state \in \submdp{\states}{\stateactionpairset}}}{\pmobj\lpvalvarsj}{\le \uppervalueboundsj \cdot (1-\lpecvarsj)}{}{lp:ec:val}
	\lpline{"}{\fa{\tuple{\state,\action} \in \stateactionpairset}\enlargecolumn}{\lpecactionvarsaj}{\in \set{0,\lpactionvarsa}}{}{lp:ec:graph:start}
	\lpline{""}{\fa{\state' \in \successors{\state}{\action}}\enlargecolumn}{\lpecactionvarsaj}{\le \lpecvar{\state'}{\objindex}}{}{lp:ec:graph:succ}
	\lpline{"}{\raisebox{0pt}{\fa{\state \in \submdp{\states}{\stateactionpairset}}}}{\lpecvarsj}{= \sum_{\action \in \act{\state}} \iverson{\tuple{\state,\action} \in \stateactionpairset} \cdot \lpecactionvarsaj}{}{lp:ec:graph:end}
	

	\lpline{""}{\fa{\action \in \act{\state}}\enlargecolumn}{\lpecflowvarsaj}{\in [0, \maxflow{s} \cdot \lpactionvarsa]}{}{lp:ec:flow:start}
	\lpline{""}{}{\lpecflowvar{\state}{\bot}{\objindex}}{\in [0, \maxflow{s} \cdot \lpecvarsj]}{}{lp:ec:flow:outvar}
	\lpline{""}{}{\lpecflowvar{\state}{\actbot}{\objindex} + \sum_{\action \in \act{\state}} \lpecflowvarsaj}{=\frac{1}{|\submdp{\states}{\stateactionpairset}|} + \sum_{\tuple{\state', \action'} \in \predecessors{\state} \cap \stateactionpairset} \transitions(\state',\action',\state) \cdot \lpecflowvar{\state'}{\action'}{\objindex}}{}{lp:ec:flow:inout}
	\lpline{"}{}{1}{= \sum_{\state \in \submdp{\states}{\stateactionpairset}} \Big( \lpecflowvar{\state}{\actbot}{\objj} + \sum_{\action \in \act{\state}} \iverson{\tuple{\state,\action} \notin \stateactionpairset} \cdot \lpecflowvarsaj\Big)}{}{lp:ec:flow:end}
	\end{lp}
	\caption{\milp encoding for detection of end components.}
	\label{lp:endcomponents}
\end{figure}

To cope with multiple ECs, we consider the constraints from \Cref{lp:unichainfinite} in conjunction with the constraints from \Cref{lp:endcomponents}.
Let $\lpsol$ be a solution to these constraints
and let $\sched$ be the encoded strategy $\sched$ with $\sched(\state)(\action) = \lpsolof{\lpactionvarsa}$.
For each objective $\objj$ and state $\state$, a binary variable $\lpecvarsj$ is set to $1$ if $\state$ lies on a BSCC of the induced MC $\mc{\mdp}{\sched}$.
We only need to consider states $\state \in \submdp{\states}{\stateactionpairset}$ for $\stateactionpairset \in \mecs{\submdp{\mdp}{\stateactionpairset_?^j}}$.

Line~\ref{lp:ec:val} ensures that the value of $\lpvalvarsj$ is set to 0 if $\state$ lies on a BSCC of $\mc{\mdp}{\sched}$.
Lines~\ref{lp:ec:graph:start} to~\ref{lp:ec:graph:end} introduce binary variables $\lpecactionvarsaj$ for each state-action pair in the EC such that any solution $\lpsol$ satisfies $\lpsolof{\lpecactionvarsaj} = 1$ iff $\lpsolof{\lpecvarsj} = \lpsolof{\lpactionvarsa} = 1$.
Line~\ref{lp:ec:graph:succ} yields that $\lpsolof{\lpecactionvarsaj} = 1$ implies $\lpsolof{\lpecvar{\state'}{\objindex}} = 1$ for all successors $\state'$ of $\state$ and the selected action $\action$.
Hence, for all $\state$ with $\lpsolof{\lpecvarsj} = 1$ and for all $\state'$ reachable from $\state$ in $\mc{\mdp}{\sched}$, we have $\lpsolof{\lpecvar{\state'}{\objindex}} = 1$ and $\tuple{\state', \sched(\state')} \in \stateactionpairset$.
Therefore, we can only set $\lpecvarsj$ to 1 if there is a BSCC $\stateactionpairset' \subseteq \stateactionpairset$ that either contains $s$ or that is almost surely reached from $\state$ without leaving $\stateactionpairset$.
As finite rewards are assumed, $\stateactionpairset$ can not contain a transition with positive reward, yielding  $\expval{\mdp}{\sched}(\rewstructj \eventually\goalstatesj) = 0$ if $\lpsolof{\lpecvarsj} = 1$.

An assignment that sets all variables $\lpecvarsj$ and $\lpecactionvarsaj$ to 0 trivially satisfies the constraints in Lines~\ref{lp:ec:val} to~\ref{lp:ec:graph:end}.
In Lines~\ref{lp:ec:flow:start} to~\ref{lp:ec:flow:end} we therefore ensure that if a BSCC $\stateactionpairset' \subseteq \stateactionpairset$ exists in $\mc{\mdp}{\sched}$, $\lpsolof{\lpecvarsj} = 1$ holds for at least one $\state \in \submdp{\states}{\stateactionpairset'}$.
The idea is based on the observation that if a BSCC $\stateactionpairset' \subseteq \stateactionpairset$ exists, there is a state $\state \in \submdp{\states}{\stateactionpairset}$ that does not reach the set $\states \setminus \submdp{\states}{\stateactionpairset}$ almost surely.
We consider the MDP $\mdp^\stateactionpairset$, a mild extension of $\submdp{\mdp}{\stateactionpairset}$ given by
$\mdp^\stateactionpairset = (\submdp{\states}{\stateactionpairset} \uplus \set{\sinit^\stateactionpairset, \state_{\bot}^\stateactionpairset}, \actions \uplus \set{\actinit, \actbot}, \transitions^\stateactionpairset, \sinit^\stateactionpairset)$, where 
$\transitions^\stateactionpairset$ extends $\submdp{\transitions}{\stateactionpairset}$ such that $\transitions^\stateactionpairset(\state_{\bot}^\stateactionpairset, \actbot, \state_{\bot}^\stateactionpairset) = 1$ and 
$\fa{\state \in \submdp{\states}{\stateactionpairset}}$
\begin{itemize}
	\item 
	$ \transitions^\stateactionpairset(\sinit^\stateactionpairset, \actinit, \state) = 1 / |\submdp{\states}{\stateactionpairset}|$,\quad
	$\transitions^\stateactionpairset(\state, \actbot, \state_\bot^\stateactionpairset) = 1 $, and 
	\item $\fa{\action \in \set{\hat{\action} \in \act{\state} \mid \tuple{\state,\hat{\action}} \notin \stateactionpairset}} \transitions^\stateactionpairset(\state,\action, \state_{\bot}^\stateactionpairset)  = 1$.
\end{itemize}
Lines~\ref{lp:ec:flow:inout} and~\ref{lp:ec:flow:end} reflect the equation system from~\Cref{lem:flow} for MDP $\mdp^\stateactionpairset$ and $\stateszero = \set{\state_{\bot}}$.
Additionally, Lines~\ref{lp:ec:flow:start} and~\ref{lp:ec:flow:outvar} exclude negative solutions and assert $\lpsolof{\lpecflowvarsaj} = 0$ if $\lpsolof{\lpactionvarsa} = 0$ and $\lpsolof{\lpecflowvar{\state}{\actbot}{\objj}} = 0$ if $\lpsolof{\lpecvarsj} = 0$ for any solution $\lpsol$.
Hence, for states $\state \in \submdp{\states}{\stateactionpairset}$ where $\lpsolof{\lpecvarsj} = 0$, the strategy $\sched$ encoded by the variables $\lpactionvarsa$ coincides with the strategy considered in \Cref{lem:flow}.
Assume that solution $\lpsol$ yields a BSCC within the states of $\stateactionpairset$ in $\mc{\mdp}{\sched}$ and therefore also a BSCC in $\mc{(\mdp^\stateactionpairset)}{\sched}$.
Since $\state_{\bot}^\stateactionpairset$ has to be reached almost surely in $\mdp^\stateactionpairset$ (cf. \Cref{lem:flow}), the BSCC has to contain at least one state $s$ with $\lpsolof{\lpecvarsj} = 1$.

In summary, Lines~\ref{lp:ec:flow:start} to~\ref{lp:ec:flow:end} imply that every BSCC $\stateactionpairset' \subseteq \stateactionpairset$ of $\mc{\mdp}{\sched}$ contains at least one state $s$ with $\lpsolof{\lpecvarsj} = 1$. 
Then, with Lines~\ref{lp:ec:graph:start} to~\ref{lp:ec:graph:end} we get that $\lpsolof{\lpecvar{\state'}{\objindex}} = 1$ has to hold for \emph{all} $\state' \in \submdp{\states}{\stateactionpairset'}$.
In $\mc{\mdp}{\sched}$, the set $\stateszeroj \cup \set{\state \mid \lpsolof{\lpecvar{\state}{\objindex}} = 1}$ is therefore reached almost surely and all the states in this set get assigned value 0. In this case, the solution of the equation system from \Cref{lem:mceqsys} becomes unique again.

\begin{theorem}\label{thm:multichainfinite}
	For finite rewards, the constraints in~\Cref{lp:unichainfinite,lp:endcomponents} are feasible iff
	$\point \in \psach{\mdp}{\multiobjquery}$.
\end{theorem}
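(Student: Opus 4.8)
The plan is to bootstrap \Cref{thm:unichainfinite}. The only thing that breaks in the multichain setting is that the linear system of \Cref{lem:mceqsys} no longer has a unique solution: a strategy $\sched$ may keep positive probability forever inside $\statesmaybej$, i.e.\ $\mc{\mdp}{\sched}$ may have a BSCC lying entirely in $\statesmaybej$. Uniqueness is restored as soon as, for each objective $\objj$, we additionally pin value $0$ on the set $T_\objindex$ of states lying on such BSCCs, and \Cref{as:rewardfiniteness} guarantees this is \emph{correct}: a BSCC inside $\statesmaybej$ contains no goal state (goal states lie in $\stateszeroj$) and, by finiteness of rewards, no transition with positive reward, so $\expval{\mdpwithinitstate{\mdp}{\state}}{\sched}(\rewstructj\eventually\goalstatesj)=0$ for every $\state$ in it. Since a BSCC of $\mc{\mdp}{\sched}$ inside $\statesmaybej$ is an end component of $\mdp$ contained in $\statesmaybej\times\actions$, it is a sub-EC of some $\stateactionpairset\in\mecs{\submdp{\mdp}{\stateactionpairset_?^\objindex}}$, and because maximal ECs have pairwise disjoint state sets the shared variables $\lpecvarsj$ are well defined. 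The whole point of \Cref{lp:endcomponents} is to force $\lpsolof{\lpecvarsj}=1$ for at least these states.

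\textbf{($\Rightarrow$)} Given a feasible solution $\lpsol$, let $\sched\in\psscheds{\mdp}$ be the strategy with $\sched(\state)=\action$ iff $\lpsolof{\lpactionvarsa}=1$. Fix $\objindex$ and $\stateactionpairset\in\mecs{\submdp{\mdp}{\stateactionpairset_?^\objindex}}$. For a state $\state\in\submdp{\states}{\stateactionpairset}$ with $\lpsolof{\lpecvarsj}=0$, \Cref{lp:ec:flow:outvar} forces $\lpsolof{\lpecflowvar{\state}{\actbot}{\objindex}}=0$ and \Cref{lp:ec:flow:start} forces $\lpsolof{\lpecflowvarsaj}=0$ for $\action\neq\sched(\state)$, so the flow variables describe the behaviour of $\sched$ in the auxiliary MDP $\mdp^\stateactionpairset$ whose only leak to the sink $\state_\bot^\stateactionpairset$ goes through states with $\lpecvarsj=1$; by \Cref{lem:flow} applied to $\mdp^\stateactionpairset$ and $\stateszero=\set{\state_\bot^\stateactionpairset}$, \Cref{lp:ec:flow:inout,lp:ec:flow:end} are satisfiable only if $\state_\bot^\stateactionpairset$ is reached almost surely, which fails if some BSCC $\stateactionpairset'\subseteq\stateactionpairset$ of $\mc{\mdp}{\sched}$ contains no state with $\lpecvarsj=1$. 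Hence each such BSCC contains some state with $\lpsolof{\lpecvarsj}=1$, and \Cref{lp:ec:graph:start,lp:ec:graph:succ,lp:ec:graph:end} propagate $\lpsolof{\lpecvar{\state'}{\objindex}}=1$ along $\mc{\mdp}{\sched}$-reachability, hence to all of $\submdp{\states}{\stateactionpairset'}$. Therefore $\stateszeroj\cup T_\objindex$, with $T_\objindex := \set{\state\mid\lpsolof{\lpecvarsj}=1}$, is reached almost surely and carries value $0$; \Cref{lp:ec:val} forces $\lpsolof{\lpvalvarsj}=0$ there, the system on $\states\setminus(\stateszeroj\cup T_\objindex)$ is now unique, and the estimate from the proof of \Cref{thm:unichainfinite} together with \Cref{lp:exprew:point} yields $\tuple{\mdp,\sched,\tupleaccess{\point}{\objindex}}\models\objj$ for every $\objindex$, i.e.\ $\point\in\psach{\mdp}{\multiobjquery}$.

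\textbf{($\Leftarrow$)} Conversely, given $\sched\in\psscheds{\mdp}$ with $\tuple{\mdp,\sched,\point}\models\multiobjquery$, set $\lpsolof{\lpactionvarsa}=\sched(\state)(\action)$; for each $\objindex$ put $\lpsolof{\lpecvarsj}=1$ iff $\state$ lies on a BSCC of $\mc{\mdp}{\sched}$ contained in $\statesmaybej$, let $\lpsolof{\lpecactionvarsaj}=\lpsolof{\lpecvarsj}\cdot\lpsolof{\lpactionvarsa}$, and let the remaining value and flow variables be the signed expected rewards $\pmobj\,\expval{\mdpwithinitstate{\mdp}{\state}}{\sched}(\rewstructj\eventually\goalstatesj)$ and the expected visit counts in $\mdp^\stateactionpairset$ under $\sched$ (which are finite because from every $\lpecvarsj{=}1$ state the action $\actbot$ reaches $\state_\bot^\stateactionpairset$ in one step, so the sink is a.s.\ reached). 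Checking the constraints is then routine: \Cref{lp:action:start,lp:action:end} and \Cref{lp:ec:graph:start,lp:ec:graph:succ,lp:ec:graph:end} hold because the $\lpecvarsj{=}1$ states form a union of BSCCs; \Cref{lp:ec:val} holds since those states have value $0$; \Cref{lp:ec:flow:start,lp:ec:flow:outvar,lp:ec:flow:inout,lp:ec:flow:end} hold by \Cref{lem:flow} applied to $\mdp^\stateactionpairset$; and \Cref{lp:exprew:start,lp:exprew:sum,lp:exprew:disabled,lp:exprew:end,lp:exprew:point} hold exactly as in \Cref{thm:unichainfinite} using the now-unique system on $\states\setminus(\stateszeroj\cup T_\objindex)$.

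\textbf{Main obstacle.} The delicate part, in both directions, is correctness of the EC-detection gadget of \Cref{lp:endcomponents}: for soundness one must rule out an \emph{undetected} BSCC of $\mc{\mdp}{\sched}$ inside a maximal EC, which hinges on the fact that the sole outlet for flow in \Cref{lp:ec:flow:inout} is the $\actbot$-edge whose capacity \Cref{lp:ec:flow:outvar} gates by $\lpecvarsj$, so an undetected BSCC would trap probability mass and contradict the normalization forced by \Cref{lem:flow}; for completeness one must check that marking \emph{precisely} the BSCC states (with the variables $\lpecvarsj$ shared across the pairwise state-disjoint maximal ECs) still admits a consistent flow, which works because the sink $\state_\bot^\stateactionpairset$ is then reachable in one step from every marked state. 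Everything else is inherited from \Cref{thm:unichainfinite} once the zero-value sets $T_\objindex$ are pinned down.
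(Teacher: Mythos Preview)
Your proposal is correct and takes essentially the same approach as the paper: the paper does not give a formal proof environment for \Cref{thm:multichainfinite} but instead relies on the discussion in \Cref{sec:achievability:multichain}, which matches your argument point by point---the loss of uniqueness of \Cref{lem:mceqsys}, the use of the flow gadget on the auxiliary MDP $\mdp^{\stateactionpairset}$ together with \Cref{lem:flow} to force every induced BSCC inside $\stateactionpairset$ to contain a state with $\lpecvarsj=1$, the propagation via Lines~\ref{lp:ec:graph:start}--\ref{lp:ec:graph:end}, and the conclusion that $\stateszeroj \cup \{\state \mid \lpsolof{\lpecvarsj}=1\}$ is reached almost surely so that the unichain argument of \Cref{thm:unichainfinite} applies. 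One small point worth making explicit in your ($\Rightarrow$) direction: your preamble justifies value~$0$ only for states \emph{in} a BSCC, but $T_\objindex$ may contain non-BSCC states; the paper closes this by observing (as your propagation argument already yields) that from any $\state$ with $\lpsolof{\lpecvarsj}=1$ one stays forever inside $\stateactionpairset$ under $\sched$, and \Cref{as:rewardfiniteness} forces every transition of the EC $\stateactionpairset$ to have zero reward, so the true value is $0$ there too.
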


\begin{figure}[t]
	\centering
	\begin{subfigure}[b]{0.37\textwidth}
		\centering
		\begin{tikzpicture}[xscale=0.8,->,>=stealth',shorten >=1pt,auto,node
		distance=2.5cm,bend angle=45, scale=0.6, font=\normalsize]
		\path[use as bounding box] (-3.2,-0.7) rectangle (6,1.1);
		\tikzstyle{p1}=[draw,circle,text centered,minimum size=6mm,text width=3mm]
		\tikzstyle{p2}=[draw,rectangle,text centered,minimum size=6mm,text width=3mm]
		\tikzstyle{act}=[fill,circle,inner sep=1pt,minimum size=1.5pt, node distance=1cm]    \tikzstyle{empty}=[text centered, text width=15mm]
		\node[p1] (1) at (0,0) {$\state_0$};
		\node[p1,accepting] (2) at (4,0) {$\state_1$};
		\node[act] (1a) at (2,0) {};
		\node[act] (1b) at ($(1) - (1.75, 0)$) {};
		\node[act] (2a) at ($(2.east) + (1, 0)$) {};
		\node[empty] at ($(1a)+(0,0.4)$) {\scriptsize $\beta : 1$};
		\node[empty] at ($(1b)-(0.8,0)$) {\scriptsize $\alpha : 0$};
		\node[empty] at (0,-1.5) {};
		\node[]  (init) at (-1.2,1.2) {};
		\path[-latex']
		(init) edge (1)
		(1a) edge node[below,xshift=0mm]{\scriptsize $ $} (2);
		\draw [->] (1) -- (1a);
		\draw [->] (1a) -- (2);
		\draw [->] (1) to [out=145,in=70,looseness=1] (1b);
		\draw [->] (1b) to [out=-70,in=-145, looseness=1] (1);
		\draw [->] (2) to [out=35,in=110, looseness=1] (2a);
		\draw [->] (2a) to [out=-110,in=-35,looseness=1] (2);
		\end{tikzpicture}
		\caption{Multichain MDP}
		\label{fig:multichain}
	\end{subfigure}
	\hfill
	\begin{subfigure}[b]{0.62\textwidth}
		\centering
		\begin{tikzpicture}[xscale=0.8,->,>=stealth',shorten >=1pt,auto,node
		distance=2.5cm,bend angle=45, scale=0.6, font=\normalsize]
		\tikzstyle{p1}=[draw,circle,text centered,minimum size=6mm,text width=3mm]
		\tikzstyle{p2}=[draw,rectangle,text centered,minimum size=6mm,text width=3mm]
		\tikzstyle{act}=[fill,circle,inner sep=1pt,minimum size=1.5pt, node distance=1cm]    \tikzstyle{empty}=[text centered, text width=15mm]
		\path[use as bounding box] (-1.3,-0.7) rectangle (13.8,1.1);
		\node[p1] (1) at (0,0) {$\state_1$};
		\node[p1] (2) at (4,0) {$\state_2$};
		\node[p1, accepting] (3) at (8,0) {$\state_3$};
		\node[p2] (4) at (12,0) {$\state_4$};
		\node[act] (1a) at (2,0) {};
		\node[act] (2a) at (6,0.5) {};
		\node[act] (2b) at (6,-0.5) {};
		\node[act] (3a) at (9.5,0) {};
		\node[act] (4a) at ($(4.east) + (1,0)$) {};
		\node[empty] at ($(2a)+(0,0.4)$) {\scriptsize $\alpha$};
		\node[empty] at ($(2b)-(0,0.4)$) {\scriptsize $\beta$};
		\node[]  (init) at (-1.2,1.2) {};
		\path[-latex']
		(init) edge (1)
		(1) edge (1a)
		(1a) edge node[above,xshift=0mm]{\scriptsize $0.5$} (2);
		(2) edge (2b)
		(2a) edge (3)
		(2b) edge (4)
		\draw [->] (1a) to[out=90,in=45] node[above,xshift=0mm]{\scriptsize $0.5$} (1);
		\draw [->] (2) -- (2a);
		\draw [->] (2) -- (2b);
		\draw [->] (2a) -- (3);
		\draw [->] (2b) to [bend right=15] (4);
		\draw [->] (3) to [out=35,in=110, looseness=1] (3a);
		\draw [->] (3a) to [out=-110,in=-35,looseness=1] (3);
		\draw [->] (4) to [out=35,in=110, looseness=1] (4a);
		\draw [->] (4a) to [out=-110,in=-35,looseness=1] (4);
		\end{tikzpicture}
		\caption{MDP with nontrivial memory requirements}
		\label{fig:memory}
	\end{subfigure}
	\caption{MDPs referenced in \Cref{ex:multichain,ex:nontrivial-memory}.}
\end{figure}

\subsection{Extension to Infinite Rewards}\label{sec:achievability:infinite}
Our approach can be modified to allow $\psma$ instances where infinite expected reward can be collected, \ie where \Cref{as:rewardfiniteness} does not hold.
Infinite reward can be collected if we cycle through an EC of $\mdp$ that contains a transition with positive reward.
Such instances are of practical interest as this often corresponds to strategies that do not accomplish a certain goal (\eg a robot that stands still and therefore requires infinite time to finish its task).

We sketch the necessary modifications. More details are in \reftr{B}.
Let $\statesinf$ be the set of states where every pure strategy induces infinite reward for at least one minimizing objective.
	To ensure that the \milp instance has a (real-valued) solution, we consider the sub-MDP of $\mdp$ obtained by removing $\statesinf$.
	
If infinite reward can be collected in an EC, it should not be considered in~\Cref{lp:endcomponents}. We therefore let $\stateactionpairset$ range over maximal ECs that only consist of (a) states in $\statesmaybej$ and (b) transitions with reward 0.
	
	The upper bounds $\uppervalueboundsj$ for the maximal expected rewards at each state can not be set to $\infty$.
	However, for the encoding it suffices to compute values that are sufficiently large.
	However, we remark that in practice our approach from \reftr{B} can lead to very large values, yielding numerical instabilities.
	
	For maximizing objectives, we introduce one additional objective which, in a nutshell, checks that the probability to reach a 0-reward BSCC is below 1. If this is the case, there is a positive probability to reach a BSCC in which infinitely many reward can be collected.
	
%

\section{Computing the Pareto Front}
\label{sec:pareto}

Our next goal is to compute the \emph{pure stationary Pareto front} $\pspareto{\mdp}{\multiobjquery}$ for MDP $\mdp$ and multi-objective query $\multiobjquery$.
This set can be very large, in particular if the objectives are strongly conflicting with many different tradeoffs.
In the worst case, every pure stationary strategy induces a point $\point \in \pspareto{\mdp}{\multiobjquery}$ (\eg for $\multiobjquery = \tuple{\rewobjop_\le(\rewstruct \eventually \goalstates), \rewobjop_\ge(\rewstruct \eventually \goalstates)}$).
We try to find an approximation of $\pspareto{\mdp}{\multiobjquery}$.

\begin{definition}
Let $\precision \in \precisiondomain$. An \emph{$\precision$-approximation} of $\points \subseteq \pointdomain$ is a pair $\approxtuple$ with $\underapprox \subseteq \points \subseteq \overapprox$ and $\fa{\point \in \points} \ex{\point' \in \underapprox \cup (\pointdomain \setminus \overapprox)} |\point-\point'| \le \precision$.
\end{definition}
\begin{nproblem}[framed]{Pure Stationary Pareto Approximation Problem (\pspapprox)}
	Input: & MDP $\mdp$, $\numobj$-dimensional multi-objective query $\multiobjquery$, precision 
	 $\precision \in \precisiondomain$
	 such that $\pspareto{\mdp}{\multiobjquery} \subseteq \rr^\numobj$ \\
	Output: & An $\precision$-approximation of $\closure{\multiobjquery}{\pspareto{\mdp}{\multiobjquery}}$
\end{nproblem}
For simplicity, we only consider inputs that satisfy restriction \Cref{as:rewardfiniteness}, \ie for $\objj = \rewobjop_{\objrelj}(\rewstructj \eventually \goalstatesj)$ there is $\uppervaluebound{}{\objindex} \neq \infty$ such that $\fa{\sched \in \psscheds{\mdp}} \uppervaluebound{}{\objindex} \ge \expval{\mdp}{\sched}(\rewstructj \eventually \goalstatesj)$.
Ideas of \Cref{sec:achievability:infinite} can be used for some other inputs. An all-embracing treatment of infinite rewards, in particular for maximizing $\objj$, is subject to future work.

Our approach for \pspapprox successively divides the solution space into candidate regions.
For each region $\region$ (initially, let $\region = [0,\uppervaluebound{}{1}] \times \dots \times [0, \uppervaluebound{}{\numobj}]$), we use the \milp encoding from \Cref{sec:achievability} with an optimization function to find a point $\point \in \region \cap \pspareto{\mdp}{\multiobjquery}$ (or find out that no such point exists).
The region $\region$ is divided into (i) an achievable region $\region_\mathrm{A} \subseteq \psach{\mdp}{\multiobjquery}$, (ii) an unachievable region $\region_\mathrm{U} \subseteq \rr^\numobj \setminus \psach{\mdp}{\multiobjquery}$, (iii) further candidate regions $\region_1, \dots, \region_n$ that are analyzed subsequently, and (iv) the remaining area $\region \setminus (\region_\mathrm{A} \cup \region_\mathrm{U} \cup \region_1 \cup \dots \cup \region_n)$ which does not require further analysis as we are only interested in an $\precision$-approximation.
The procedure stops as soon as no more candidate regions are found.
%
\begin{example}
	\label{ex:pareto}
	\Cref{fig:pareto} sketches the approach for an MDP $\mdp$ and a query $\multiobjquery$ with two maximizing objectives.
	We maintain a set of achievable points (light green) and a set of unachievable points (red).
	Initially, our candidate region corresponds to $\region_1 = [0,\uppervaluebound{}{1}] \times [0, \uppervaluebound{}{2}]$ given by the white area in~\Cref{fig:pareto:init}.
	We consider the direction vector $\dirvec[1]$  which is orthogonal to the line connecting $\tuple{\uppervaluebound{}{1},0}$ and $\tuple{0,\uppervaluebound{}{2}}$.
	To find some point $\point \in \pspareto{\mdp}{\multiobjquery} \cap \region_1$, we solve the \milp resulting from the constraints as in~\Cref{sec:achievability}, the constraint $\tuple{\lpvalvar{\sinit}{1}, \lpvalvar{\sinit}{2}} \in \region_1$, and the optimization function $\dirvec[1] \cdot \tuple{\lpvalvar{\sinit}{1}, \lpvalvar{\sinit}{2}}$.
	\Cref{fig:pareto:vertex} shows the obtained point $\point_1 \in \region_1$.
	Since $\point_1$ is achievable, we know that any point in $\closure{\multiobjquery}{\set{\point_1}}$ has to be achievable as well.
	Moreover, the set $\set{ \point \in \region_1 \mid \dirvec[1] \cdot \point > \dirvec[1] \cdot \point_1}$ indicated by the area above the diagonal line in~\Cref{fig:pareto:vertex} can not contain an achievable point.
	The gray areas do not have to be checked in order to obtain an $\precision$-approximation.
	We continue with $\region_2$ indicated by the white area and the direction vector $\dirvec[2]$, orthogonal to the line connecting $\tuple{0,\uppervaluebound{}{2}}$ and $\point_1$.
	As before, we solve an \milp now yielding the point $\point_2$ in \Cref{fig:pareto:inner}.
	We find achievable points $\closure{\multiobjquery}{\set{\point_2}}$ but no further unachievable points.
	The next iteration considers candidate region $\region_3$ and direction vector $\dirvec[1]$, yielding point $\point_3$ shown in \Cref{fig:pareto:final}. 
	The trapezoidal area is added to the unachievable points whereas $\closure{\multiobjquery}{\set{\point_3}}$ is achievable.
	Finally, we check $\region_4$ for which the corresponding $\milp$ instance is infeasible, \ie $\region_4$ is unachievable.
\end{example}
\begin{figure}[t]
	\newcommand{\paretoplotscale}{0.41}
	\newcommand{\paretoplotwidth}{0.243\textwidth}
	\newcommand{\paretoplotxmax}{7}
	\newcommand{\paretoplotymax}{7}
	\centering
	\begin{subfigure}[b]{\paretoplotwidth}
		\centering
		\begin{tikzpicture}[scale=\paretoplotscale]
		\path[use as bounding box] (0,0) rectangle (\paretoplotxmax,\paretoplotymax);
		\path[nonach] (0,6.5) rectangle (\paretoplotxmax,\paretoplotymax);
		\path[nonach] (6.5,0) rectangle (\paretoplotxmax,\paretoplotymax);
		\path[->]
		(0,0) edge (\paretoplotxmax,0)
		(0,0) edge (0,\paretoplotymax)
		;
		\node[point] at (6.5,0) (u1) {};
		\node[] at (5,0.3) (lu1) {\scriptsize$\tuple{\uppervaluebound{}{1},0}$};
		\node[] at (1.5,6) (lu2) {\scriptsize$\tuple{0,\uppervaluebound{}{2}}$};
		\node[point] at (0,6.5) (u2) {};
		\node[] at (2,2) (){\scriptsize$\region_1$};
		\path[dirvec] (3.25,3.25) edge node[below right] {$\dirvec[1]$} +(2,2);
		\path[auxline] (6.5,0) edge (0,6.5);
		\end{tikzpicture}
		\caption{}
		\label{fig:pareto:init}
	\end{subfigure}
	\begin{subfigure}[b]{\paretoplotwidth}
		\centering
		\begin{tikzpicture}[scale=\paretoplotscale]
		\path[use as bounding box] (0,0) rectangle (\paretoplotxmax,\paretoplotymax);
		\path[nonach] (0,6.5) rectangle (\paretoplotxmax,\paretoplotymax);
		\path[nonach] (6.5,0) rectangle (\paretoplotxmax,\paretoplotymax);
		\path[nonach] (6.5,1.5)--(1.5,6.5)--(6.5,6.5)--cycle;
		\path[ach] (0,0) rectangle (6,2);
		\path[dk] (6,2)--(6,0)--(6.5,0)--(6.5,1.5)--cycle ;
		\path[dk] (6,2)--(5.5,2.5)--(0,2.5)--(0,2)--cycle ;
		\path[->]
		(0,0) edge (\paretoplotxmax,0)
		(0,0) edge (0,\paretoplotymax)
		;
		\node[point,label={[xshift=-3pt]30:{$\point_1$}}] at (6,2) (p1) {};
		\node[] at (2,3.5) (){\scriptsize$\region_2$};
		\path[dirvec] (4,4) edge node[below right] {$\dirvec[1]$} +(2,2);
		\path[dirvec] (3,4.25) edge node[above left] {$\dirvec[2]$} +(2,2.66666);
		\path[auxline] (p1) edge (0,6.5);
		\end{tikzpicture}
		\caption{}
		\label{fig:pareto:vertex}
	\end{subfigure}
	\begin{subfigure}[b]{\paretoplotwidth}
		\centering
		\begin{tikzpicture}[scale=\paretoplotscale]
		\path[use as bounding box] (0,0) rectangle (\paretoplotxmax,\paretoplotymax);
		\path[nonach] (0,6.5) rectangle (\paretoplotxmax,\paretoplotymax);
		\path[nonach] (6.5,0) rectangle (\paretoplotxmax,\paretoplotymax);
		\path[nonach] (6.5,1.5)--(1.5,6.5)--(6.5,6.5)--cycle;
		\path[ach] (0,0) rectangle (6,2);
		\path[ach] (0,0) rectangle (1.5,6.5);
		\path[dk] (6,2)--(6,0)--(6.5,0)--(6.5,1.5)--cycle ;
		\path[dk] (6,2)--(5.5,2.5)--(1.5,2.5)--(1.5,2)--cycle ;
		\path[dk] (1.5,6.5)--(1.5,2)--(2,2)--(2,6)--cycle ;
		\path[->]
		(0,0) edge (\paretoplotxmax,0)
		(0,0) edge (0,\paretoplotymax)
		;
		\node[point,label={[xshift=-3pt]30:{$\point_1$}}] at (6,2) (p1) {};
		\node[point,label=0:{$\point_2$}] at (1.5,6.5) (p2) {};
		\node[] at (3.5,3.5) (){\scriptsize$\region_3$};
		\path[dirvec] (4,4) edge node[below right] {$\dirvec[1]$} +(2,2);
		\end{tikzpicture}
		\caption{}
		\label{fig:pareto:inner}
	\end{subfigure}
	\begin{subfigure}[b]{\paretoplotwidth}
		\centering
		\begin{tikzpicture}[scale=\paretoplotscale]
		\path[use as bounding box] (0,0) rectangle (\paretoplotxmax,\paretoplotymax);
		\path[nonach] (0,6.5) rectangle (\paretoplotxmax,\paretoplotymax);
		\path[nonach] (6.5,0) rectangle (\paretoplotxmax,\paretoplotymax);
		\path[nonach] (6.5,1.5)--(1.5,6.5)--(6.5,6.5)--cycle;
		\path[nonach,border] (5.5,2.5)--(4.5,2.5)--(2,5)--(2,6)--cycle;
		\path[nonach,border] (2.5,2.5)--(4.5,2.5)--(2.5,4.5)--cycle;
		\path[ach] (0,0) rectangle (6,2);
		\path[ach] (0,0) rectangle (1.5,6.5);
		\path[ach] (0,0) rectangle (2.5,4.5);
		\path[dk] (6,2)--(6,0)--(6.5,0)--(6.5,1.5)--cycle ;
		\path[dk] (6,2)--(5.5,2.5)--(2.5,2.5)--(2.5,2)--cycle ;
		\path[dk] (3,2)--(3,4)--(2.5,4.5)--(2.5,2)--cycle ;
		\path[dk] (1.5,6.5)--(1.5,4.5)--(2,4.5)--(2,6)--cycle ;
		\path[dk] (1.5,4.5)--(2.5,4.5)--(2,5)--(1.5,5)--cycle ;
		\path[->]
		(0,0) edge (\paretoplotxmax,0)
		(0,0) edge (0,\paretoplotymax)
		;
		\node[point,label={[xshift=-3pt]30:{$\point_1$}}] at (6,2) (p1) {};
		\node[point,label=0:{$\point_2$}] at (1.5,6.5) (p2) {};
		\node[point,label=-150:{$\point_3$}] at (2.5,4.5) (p3) {};
		\node[] at (3.5,3) (){\scriptsize$\region_4$};
		\path[dirvec] (3,4) edge node[below right] {$\dirvec[1]$} +(2,2);
		
		\end{tikzpicture}
		\caption{}
		\label{fig:pareto:final}
	\end{subfigure}
\vspace{-18pt}
	\caption{Example exploration of achievable points.}
	\label{fig:pareto}
\end{figure}
The ideas sketched above  can be lifted to $\numobj > 2$ objectives.
Inspired by~\cite[Alg. 4]{FKP12}, we choose direction vectors that are orthogonal to the convex hull of the achievable points found so far.
In fact, for total reward objectives we can apply the approach of~\cite{FKP12} to compute the points in $\pspareto{\mdp}{\multiobjquery} \cap \pareto{\mdp}{\multiobjquery}$ first and only perform \milp-solving for the remaining regions.
As the distance between two found points $\point,\point'$ is at least $|\point-\point'|\ge \precision$, we can show that our approach terminates after finding at most $\prod_{\objindex} \uppervaluebound{}{\objindex} / \tupleaccess{\precision}{\objindex}$ points.
Other strategies for choosing direction vectors are possible and can strongly impact performance.
\section{Bounded Memory}
\label{sec:boundedmem}
For \gma, it is necessary and sufficient to consider strategies that require memory exponential in the number of objectives \cite{EKVY07,FKP12,RRS17} by storing which goal state set has been reached already. 
In contrast, restricting to pure (but not necessarily stationary) strategies imposes nontrivial memory requirements that do not only depend on the number of objectives, but also on the point that is to be achieved.
\begin{example}\label{ex:nontrivial-memory}
	Let $\mdp$ be the MDP in \Cref{fig:memory} and $\multiobjquery = \tuple{\reachobjop_\geq\left(\eventually\goalstates_\varocircle\right), \reachobjop_\geq\left(\eventually\goalstates_\square\right)}$.
	The point $\point_k = \tuple{0.5^k, 1 {-} 0.5^k}$ for $k \in \nn$ is achievable by taking $\alpha$ with probability $0.5^k$.
	$\point_k$ is also achievable with the \emph{pure} strategy $\sched_k$ where  $\sched_k(\finpath) = \alpha$ iff $\lengthofpath{\finpath} \ge k$. $\sched_k$ uses $k$ memory states.
	Pure strategies with fewer memory states do not suffice.
\end{example}
We search for pure strategies with \emph{bounded memory}.
For an MDP $\mdp$ and $K > 0$, let $\pschedsK{\mdp}{K}$ denote the set of pure $K$-memory strategies, \ie any $\sched \in \pschedsK{\mdp}{K}$ can be represented by a \emph{Mealy machine} using up to $K$ states (c.f.~\reftr{D}).
For a query $\multiobjquery$, let $\pKach{\mdp}{\multiobjquery}{K}$ be the set of points achievable by some $\sched \in \pschedsK{\mdp}{K}$\iftoggle{TR}{, and let $\pKpareto{\mdp}{\multiobjquery}{K}$ be the \emph{pure $K$-memory} Pareto front.}{.}
\begin{nproblem}[framed]{Pure Bounded Multi-objective Achievability Problem (\pbma)}
	Input: & MDP $\mdp$, multi-objective query $\multiobjquery$, memory bound $K\!$, point $\point \in \pointdomain$\\
	Output: & Yes iff $\point \in \pKach{\mdp}{\multiobjquery}{K}$
\end{nproblem}
\noindent The pure bounded Pareto approximation problem is defined similarly.
We reduce a \pbma instance to an instance for \psma.
The idea is to incorporate a memory structure of size $K$ into $\mdp$ and then construct a pure stationary strategy in this product MDP (see, \eg \cite{JJWQWKB18} for a similar construction).%
\iftoggle{TR}{%
\begin{definition}[Memory structure]
	A \emph{memory structure} of size $K > 0$ is a tuple $\ndmemory{K} = \ndmemorytuple$ with $|M| = K$, initial memory state $\minit \in \memorystates$, and nondeterministic memory update function $\ndmemoryupdate \colon \memorystates \to 2^\memorystates \setminus \emptyset$.%
\end{definition}
\begin{definition}[Memory product]\label{def:memprod}
The product of MDP $\mdp = \mdptuple$ and memory structure $\ndmemory{K} = \ndmemorytuple$ is given by the MDP $\mdpndmemory{K} = \tuple{\states {\times} \memorystates, \actions {\times} M, \mdpndmemorytransitions, \tuple{ \sinit, \minit}}$, where  
    for $\state, \state' \in \states$, $m, m' \in \memorystates$, and $\action \in \actions$:
   $\mdpndmemorytransitions( \tuple{\state, \memorystate}, \tuple{\action, \memorystate'}, \tuple{ \state', \memorystate'}) = \transitions\left( \state, \action, \state' \right) \cdot \iverson{\memorystate' \in \ndmemoryupdate\left( m \right)}$.
\end{definition}
Intuitively, $\ndmemoryupdate(m)$ gives the possible successors of memory state $m\in M$.
The memory product enriches $\mdp$ with a new level of nondeterminism corresponding to the choice of the next memory state in $\ndmemory{K}$.
We set up an equivalence between $K$-memory strategies for $\mdp$ and stationary strategies for $\mdpndmemory{K}$ by considering \emph{complete} memory structures, \ie memory structures with $\fa{\memorystate \in \memorystates} \ndmemoryupdate(\memorystate) = \memorystates$.
%
\begin{restatable}{lemma}{schedEquivLemma}\label{lem:bounded-mem-equiv}
	\iftoggle{TR}{Let $\mdp = \mdptuple$ be an MDP, $\ndmemory{K} = \ndmemorytuple$ be a complete nondeterministic memory structure, and $\mdpndmemory{K} = \mdpndmemorytuple$ be their product.}{}
	There are equivalence relations between (i) strategies $\sched \in \pschedsK{\mdp}{K}$ and $\sched' \in \psscheds{\mdpndmemory{K}}$, and (ii) paths compliant with $\sched$ and $\sigma'$ such that equivalence (i) preserves the probability of paths in equivalence relation (ii).%
\end{restatable}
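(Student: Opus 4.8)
The plan is to make the two correspondences explicit and then propagate them through the cylinder-set construction of the two probability measures. Recall (see~\reftr{D}) that a pure $K$-memory strategy $\sched \in \pschedsK{\mdp}{K}$ is represented by a Mealy machine $\schedtuple$ with $|\memorystates| = K$, where $\schednextaction \colon \states \times \memorystates \to \actions$ picks the next action (with $\schednextaction(\state,\memorystate) \in \act{\state}$) and $\schedmemoryupdate \colon \states \times \memorystates \to \memorystates$ updates the memory; along a finite compliant path $\finpath = \state_0 \action_1 \dots \action_n \state_n$ the machine traverses the unique memory trajectory $\memorystate_0 = \minit$, $\memorystate_{i+1} = \schedmemoryupdate(\state_i,\memorystate_i)$, and plays $\sched(\finpath) = \schednextaction(\state_n,\memorystate_n)$.

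First I would set up the strategy correspondence, exploiting completeness of $\ndmemory{K}$, i.e.\ $\ndmemoryupdate(\memorystate) = \memorystates$ for all $\memorystate$: the actions enabled at a product state $\tuple{\state,\memorystate}$ of $\mdpndmemory{K}$ are then exactly $\set{\tuple{\action,\memorystate'} \mid \action \in \act{\state},\ \memorystate' \in \memorystates}$. Define $\Lambda$ by $\Lambda(\sched)(\tuple{\state,\memorystate}) = \tuple{\schednextaction(\state,\memorystate),\ \schedmemoryupdate(\state,\memorystate)}$, which lands in $\psscheds{\mdpndmemory{K}}$ since $\schednextaction(\state,\memorystate)\in\act{\state}$. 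Conversely, any pure stationary $\sched'$ on $\mdpndmemory{K}$ splits componentwise as $\sched'(\tuple{\state,\memorystate}) = \tuple{\action_{\state,\memorystate},\memorystate'_{\state,\memorystate}}$ and hence defines a Mealy machine (state set $\memorystates$, initial state $\minit$, $\schednextaction(\state,\memorystate)=\action_{\state,\memorystate}$, $\schedmemoryupdate(\state,\memorystate)=\memorystate'_{\state,\memorystate}$), thus a strategy in $\pschedsK{\mdp}{K}$. On the level of representations these maps are mutually inverse; declaring two strategies equivalent when they agree on all reachable (product-)states makes $\Lambda$ descend to the desired equivalence~(i).

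Next I would define equivalence~(ii) and check probability preservation. Fix $\sched$ and $\sched' = \Lambda(\sched)$, and send a path $\infpath = \state_0 \action_1 \state_1 \action_2 \dots$ compliant with $\sched$, with induced memory trajectory $(\memorystate_i)_i$, to $\kappa(\infpath) = \tuple{\state_0,\memorystate_0}\,\tuple{\action_1,\memorystate_1}\,\tuple{\state_1,\memorystate_1}\,\tuple{\action_2,\memorystate_2}\,\tuple{\state_2,\memorystate_2}\cdots$; using $\mdpndmemorytransitions$ from \Cref{def:memprod} (which ties the memory component of an action to that of its target state) together with $\memorystate_{i+1}=\schedmemoryupdate(\state_i,\memorystate_i)$ one verifies that $\kappa(\infpath)$ is a genuine path of $\mdpndmemory{K}$ compliant with $\sched'$, and projecting away the memory components is a two-sided inverse, so $\kappa$ is a bijection (likewise for finite paths). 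Probability preservation then follows by induction on $\lengthofpath{\finpath}$ for cylinder sets: the base case is immediate since $\tuple{\sinit,\minit}$ is the initial state; in the step, extending $\finpath$ ending at $\state_i$ with memory $\memorystate_i$ by $\action_{i+1}\state_{i+1}$ has probability $\iverson{\action_{i+1} = \schednextaction(\state_i,\memorystate_i)} \cdot \transitions(\state_i,\action_{i+1},\state_{i+1})$ under $\sched$, while extending $\kappa(\finpath)$ by $\tuple{\action_{i+1},\memorystate_{i+1}}\tuple{\state_{i+1},\memorystate_{i+1}}$ has probability $\iverson{\tuple{\action_{i+1},\memorystate_{i+1}} = \sched'(\tuple{\state_i,\memorystate_i})} \cdot \transitions(\state_i,\action_{i+1},\state_{i+1}) \cdot \iverson{\memorystate_{i+1} \in \ndmemoryupdate(\memorystate_i)}$ under $\sched'$, where the last indicator is $1$ by completeness and the strategy-indicators coincide because $\sched'(\tuple{\state_i,\memorystate_i}) = \tuple{\schednextaction(\state_i,\memorystate_i),\schedmemoryupdate(\state_i,\memorystate_i)}$ and $\memorystate_{i+1} = \schedmemoryupdate(\state_i,\memorystate_i)$. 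Equality on all cylinders lifts to arbitrary measurable path sets by the usual $\pi$-$\lambda$/Carath\'eodory argument.

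I expect the main obstacle to be purely bookkeeping rather than conceptual: one has to fix, consistently with \reftr{D} and \Cref{def:memprod}, a convention for when the Mealy machine reads its input and updates its memory, so that the $i$-th action and the $i$-th state of a product path carry exactly the same memory label; and one has to phrase the "equivalence" carefully, since the value of $\sched'$ at unreachable product states (equivalently, at memory configurations never visited by the induced run) is behaviourally irrelevant and must be quotiented out before $\Lambda$ becomes a bijection. Once these conventions are pinned down, both inductions above are routine.
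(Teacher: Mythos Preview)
Your proposal is correct and follows essentially the same approach as the paper: both set up the strategy correspondence by reading off the action and memory-update components, and both conclude that the resulting behaviours coincide. The only presentational difference is that the paper short-circuits your cylinder-set induction by observing directly that the induced Markov chains $\mc{\mdp}{\sched}$ and $\mc{(\mdpndmemory{K})}{\sched'}$ share the state space $\states \times \memorystates$, the initial state $\tuple{\sinit,\minit}$, and (up to the action renaming $\action \leftrightarrow \tuple{\action,\memorystate'}$) the same transition function, so equality of path probabilities is immediate; your explicit induction computes exactly this.
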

\begin{restatable}{corollary}{corBoundedMultiQuery}\label{cor:bounded-to-stationnary}
	\iftoggle{TR}{%
	Let $\multiobjquery = \tuple{\obj_1, \dots, \obj_\numobj}$  and $\multiobjquery' = \tuple{\obj'_1, \dots, \obj'_\numobj}$ be a multi-objective queries for $\mdp$ and $\mdpndmemory{K}$ respectively such that $\fa{\objindex}
	\obj'_\objindex = \rewobjop_\objrel\left(\rewstruct' \eventually \goalstates' \right)$ iff $\obj_\objindex = \rewobjop_\objrel(\rewstruct \eventually \goalstates)$,
	with $\goalstates' = \goalstates \times \memorystates$ if $\goalstates \neq \emptyset$ and  $\goalstates' = \emptyset$ otherwise, and $\rewstruct'((s, m), (a, m'), (s', m')) = \rewstruct(s, a, s')$ for $\state, \state' \in \states$, $\action \in \actions$, and $\memorystate, \memorystate' \in \memorystates$.
	Then, $\pKach{\mdp}{\multiobjquery}{K} = \psach{\mdpndmemory{K}}{\multiobjquery'}$.}{%
	$\pKach{\mdp}{\multiobjquery}{K} = \psach{\mdpndmemory{K}}{\multiobjquery'}$ for suitable extension $\multiobjquery'$ of $\multiobjquery$.}
\end{restatable}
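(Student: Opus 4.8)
The plan is to read off the corollary directly from \Cref{lem:bounded-mem-equiv}: given a strategy for one of the two models, pass to the equivalent strategy for the other and check that corresponding objectives take the same value. Write $\multiobjquery' = \tuple{\obj'_1, \dots, \obj'_\numobj}$ with $\obj'_\objindex = \rewobjop_{\objrelj}(\rewstruct'_\objindex \eventually \goalstates'_\objindex)$ as in the statement, so that $\goalstates'_\objindex = \goalstatesj \times \memorystates$ if $\goalstatesj \neq \emptyset$ (and $\goalstates'_\objindex = \emptyset$ otherwise) and $\rewstruct'_\objindex(\tuple{\state,\memorystate}, \tuple{\action,\memorystate'}, \tuple{\state',\memorystate'}) = \rewstructj(\state,\action,\state')$. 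Fix $\sched \in \pschedsK{\mdp}{K}$ and let $\sched' \in \psscheds{\mdpndmemory{K}}$ be a strategy related to $\sched$ by equivalence~(i) of \Cref{lem:bounded-mem-equiv}.

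The first step is a combinatorial fact about a single pair of equivalent paths. Let $\infpath = \state_0 \action_1 \state_1 \dots$ be compliant with $\sched$, and let $\infpath'$, compliant with $\sched'$, be the path related to $\infpath$ by equivalence~(ii); by construction of $\mdpndmemory{K}$ it has the form $\infpath' = \tuple{\state_0,\memorystate_0}\tuple{\action_1,\memorystate_1}\tuple{\state_1,\memorystate_1}\tuple{\action_2,\memorystate_2}\dots$ for some memory sequence $\memorystate_0 \memorystate_1 \dots$. By the definition of $\rewstruct'_\objindex$ the reward the $i$-th step of $\infpath'$ contributes is exactly $\rewstructj(\state_{i-1}, \action_i, \state_i)$, so all partial accumulated rewards along $\infpath$ and $\infpath'$ coincide. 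By the definition of $\goalstates'_\objindex$ we have $\state_i \in \goalstatesj$ iff $\stateofpath{\infpath'}{i} \in \goalstates'_\objindex$, so either both $\objj$ and $\obj'_\objindex$ are total reward objectives (when $\goalstatesj = \emptyset$), or the least index at which a goal state is visited agrees on $\infpath$ and $\infpath'$. In either case $\rewofpath{\rewstructj \eventually \goalstatesj}{\infpath} = \rewofpath{\rewstruct'_\objindex \eventually \goalstates'_\objindex}{\infpath'}$, where the value $\infty$ is allowed as well.

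The second step lifts this to expectations. Equivalence~(ii) is a measurable correspondence between the $\sched$-compliant and the $\sched'$-compliant paths, the non-compliant paths form a null set on each side, and equivalence~(i) preserves the probability of related paths; hence $\probmeasure{\mdpndmemory{K}}{\sched'}$ is the pushforward of $\probmeasure{\mdp}{\sched}$ under the path map of~(ii). Since, by the first step, the random variable $\rewofpath{\rewstructj \eventually \goalstatesj}{\cdot}$ equals $\rewofpath{\rewstruct'_\objindex \eventually \goalstates'_\objindex}{\cdot}$ precomposed with that path map, a change of variables yields $\expval{\mdp}{\sched}(\rewstructj \eventually \goalstatesj) = \expval{\mdpndmemory{K}}{\sched'}(\rewstruct'_\objindex \eventually \goalstates'_\objindex)$ for every $\objindex$. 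Consequently $\tuple{\mdp, \sched, \point} \models \multiobjquery$ iff $\tuple{\mdpndmemory{K}, \sched', \point} \models \multiobjquery'$ for all $\point \in \pointdomain$.

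It remains to conclude the set equality. For $\pKach{\mdp}{\multiobjquery}{K} \subseteq \psach{\mdpndmemory{K}}{\multiobjquery'}$, take $\point$ achieved by some $\sched \in \pschedsK{\mdp}{K}$, move to a related $\sched' \in \psscheds{\mdpndmemory{K}}$, and apply the equivalence established above ($\tuple{\mdp,\sched,\point} \models \multiobjquery$ iff $\tuple{\mdpndmemory{K},\sched',\point} \models \multiobjquery'$); the reverse inclusion is symmetric, using that equivalence~(i) of \Cref{lem:bounded-mem-equiv} also relates every pure stationary strategy of $\mdpndmemory{K}$ to some pure $K$-memory strategy of $\mdp$. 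The step I expect to be the main obstacle is the measure-theoretic bookkeeping in passing from single paths to expectations --- making precise that the path correspondence of \Cref{lem:bounded-mem-equiv} is measurable and that the non-compliant paths are null on both sides, so that the change of variables is justified. Once \Cref{lem:bounded-mem-equiv} is established this is routine, and so is the reward-preservation computation on individual paths.
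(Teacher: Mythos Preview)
Your proposal is correct and follows the same route as the paper: read the corollary off from \Cref{lem:bounded-mem-equiv}. The paper does not spell out a separate proof for the corollary; it treats it as immediate once the lemma is established.

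One remark on the point you flag as the ``main obstacle'': the paper's proof of \Cref{lem:bounded-mem-equiv} actually shows something stronger than the lemma's statement, namely that the induced Markov chains $\mc{\mdp}{\sched}$ and $\mc{(\mdpndmemory{K})}{\sched'}$ share the same state space $\states \times \memorystates$ and have identical transition probabilities (up to a renaming of actions). With this in hand, the measure-theoretic bookkeeping you worry about disappears: you are computing expectations in literally the same Markov chain, and the reward structures $\rewstruct'_\objindex$ and goal sets $\goalstates'_\objindex$ are defined so as to coincide with $\rewstructj$ and $\goalstatesj$ on that chain. So rather than arguing via a path bijection, pushforwards, and change of variables, you can simply observe that $\expval{\mdp}{\sched}(\rewstructj \eventually \goalstatesj)$ and $\expval{\mdpndmemory{K}}{\sched'}(\rewstruct'_\objindex \eventually \goalstates'_\objindex)$ are the same integral over the same measure. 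Your argument is not wrong, just more elaborate than needed.
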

\noindent \pbma can thus be decided by solving \psma for $\mdpndmemory{K}$.
Proofs and further details are given in \reftr{D}.
The strategies can be further simplified by considering non-full memory structures, \eg a memory structure that only allows counting. 

\begin{remark}[Memory patterns]
	Complete memory structures $\ndmemory{K}= \ndmemorytuple$ are in general necessary for \pbma
	However, according to the instance, it could be sufficient to consider other nondeterministic memory update functions $\ndmemoryupdate$, implying less transitions in $\mdpndmemory{K}$ and considerably reducing the number of constraints of the \milp. For example, for the MDP of \Cref{fig:memory}, it is sufficient to consider a nondeterministic memory structure such that $\delta(m_i) = \{m_i, m_{i+1}\}$ if $i < K$ and $\delta(m_i) = \{m_{i}\}$ otherwise, to compute a pure $K$-memory strategy that achieves 
	$\point = \left(\left( \frac{1}{2} \right)^{K - 1}, 1 - \left( \frac{1}{2} \right)^{K - 1} \right)$ to satisfy $\multiobjquery = \tuple{\reachobjop_\geq\left(\goalstates_\varocircle\right), \reachobjop_\geq\left(\goalstates_\square\right)}$ (cf. Example~\ref{ex:nontrivial-memory}). Note however that in general, with non-complete memory structure $\ndmemory{K}$,
	we only have $\psach{\mdp}{\multiobjquery} \subseteq \psach{\mdpndmemory{K}}{\multiobjquery'} \subseteq \pKach{\mdp}{\multiobjquery}{K}$.
\end{remark}
}{
The set of strategies can be further refined by considering  \eg a memory structure that only allows counting or that only remembers visits of goal states. 
See \reftr{D} for details.}

\begin{table}[t]
	\caption{Results for stationary strategies.}
	\label{tab:res}
	\adjustbox{max width=\textwidth}{%
		\begin{tabular}{ll?crrr|rr|rr?crrr|rr|rr}
			\multicolumn{2}{l?}{Bench-} & \multicolumn{4}{c|}{Instance 1} & \multicolumn{2}{c|}{$\varepsilon {=} 0.01$} & \multicolumn{2}{c?}{$\varepsilon {=} 0.001$} & \multicolumn{4}{c|}{Instance 2} & \multicolumn{2}{c|}{$\varepsilon {=} 0.01$} & \multicolumn{2}{c}{$\varepsilon {=} 0.001$}\\
			\multicolumn{1}{c}{mark} & \multicolumn{1}{c?}{$\numobj$} 
			&Par. & \multicolumn{1}{c}{$|\states|$} & \multicolumn{1}{c}{$\%\stateactionpairset$} & \multicolumn{1}{c|}{$\overline{\act{}}$} & \multicolumn{1}{c}{Time} & \multicolumn{1}{c|}{$|\points|$} &   \multicolumn{1}{c}{Time} & \multicolumn{1}{c?}{$|\points|$}  
			&Par. & \multicolumn{1}{c}{$|\states|$} & \multicolumn{1}{c}{$\%\stateactionpairset$} & \multicolumn{1}{c|}{$\overline{\act{}}$} & \multicolumn{1}{c}{Time} & \multicolumn{1}{c|}{$|\points|$} &   \multicolumn{1}{c}{Time} & \multicolumn{1}{c}{$|\points|$}  
			\\\thickhline
			\benchmark{dpm} & 2$^*$ & 2 & 1272 & 32 & 3.2 & 17 & 37 & 315 & 377 & 3 & 1696 & 30 & 3.2 & 82 & 30 & \multicolumn{2}{c}{\timeout}\\
\benchmark{eajs} & 2$^*$ & 2-3 & 689 & 0 & 1.2 & 5 & 23 & 45 & 202 & 3-6 & $2{\cdot}10^4$ & 0 & 1.2 & 201 & 52 & 3787 & 375\\
\benchmark{jobs} & 3$^*$ & 3-2 & 17 & 0 & 1.1 & 3 & 3 & 2 & 3 & 5-2 & 117 & 0 & 1.5 & 2042 & 76 & \multicolumn{2}{c}{\timeout}\\
\benchmark{mutex} & 3$^*$ & 1 & 1795 & 36 & 2.2 & \multicolumn{2}{c|}{\timeout} & \multicolumn{2}{c?}{\timeout} & 2 & $1{\cdot}10^4$ & 33 & 2.3 & \multicolumn{2}{c|}{\timeout} & \multicolumn{2}{c}{\timeout}\\
\benchmark{polling} & 2 & 2-2 & 233 & 86 & 1.5 & 6 & 5 & 23 & 6 & 3-2 & 990 & 84 & 1.8 & 299 & 5 & \multicolumn{2}{c}{\timeout}\\
\benchmark{rg} & 2$^*$ & 2-1-20 & 2173 & 14 & 2.9 & 5 & 5 & 12 & 5 & 5-2-50 & $3{\cdot}10^4$ & 5 & 3.1 & 496 & 27 & \multicolumn{2}{c}{\timeout}\\
\benchmark{rover} & 2$^*$ & 2500 & $2{\cdot}10^4$ & 0 & 1.2 & 110 & 47 & 417 & 251 & 5000 & $4{\cdot}10^4$ & 0 & 1.2 & 258 & 47 & 3105 & 472\\
\benchmark{serv} & 2$^*$ &  & $5{\cdot}10^4$ & 93 & 1.9 & 1828 & 38 & \multicolumn{2}{c?}{\timeout} &   &   &   &   &   &   &   &  \\
\benchmark{str} & 2$^*$ & 30 & 1426 & 0 & 1.3 & 11 & 21 & 822 & 218 & 500 & $4{\cdot}10^5$ & 0 & 1.3 & 2428 & 17 & \multicolumn{2}{c}{\timeout}\\
\benchmark{team2} & 2$^*$ & 2 & 1847 & 24 & 1.2 & 2 & 5 & 2 & 5 & 3 & $1{\cdot}10^4$ & 21 & 1.2 & 18 & 43 & \multicolumn{2}{c}{\memout}\\
\benchmark{team3} & 3$^*$ & 2 & 1847 & 24 & 1.2 & 165 & 15 & 166 & 15 & 3 & $1{\cdot}10^4$ & 21 & 1.2 & \multicolumn{2}{c|}{\timeout} & \multicolumn{2}{c}{\timeout}\\
\benchmark{uav} & 2$^*$ & 750 & $2{\cdot}10^5$ & 29 & 1.6 & 400 & 39 & 5799 & 332 & 1000 & $4{\cdot}10^5$ & 31 & 1.8 & 3546 & 36 & \multicolumn{2}{c}{\timeout}\\
\benchmark{wlan} & 2$^*$ & 0 & 2954 & 0 & 1.3 & 160 & 16 & \multicolumn{2}{c?}{\timeout} & 2 & $3{\cdot}10^4$ & 0 & 1.3 & 6728 & 23 & \multicolumn{2}{c}{\timeout}
		\end{tabular}%
	}
\end{table}
\section{Evaluation}
\label{sec:evaluation}
We implemented our approach for \pspapprox in the model checker \tool{Storm}~\cite{DJKV17} using \tool{Gurobi}~\cite{gurobi} as back end for \milp-solving.
The implementation takes an MDP (\eg in \tool{Prism} syntax), a multi-objective query, and a precision $\varepsilon > 0$ as input and computes an $\precision$-approximation of the Pareto front.
Here, we set $\tupleaccess{\precision}{\objindex} = \varepsilon \cdot \delta_\objindex$, where $\delta_\objindex$ is the difference between the maximal and minimal achievable value for objective $\objj$. 
We also support reward objectives for Markov automata via~\cite{QJK17}.
The computations within \tool{Gurobi} might suffer from numerical instabilities.
To diminish their impact, we use the \emph{exact} engine of \tool{Storm} to confirm for each $\milp$ solution that the encoded strategy achieves the encoded point.
However, sub-optimal solutions returned by \tool{Gurobi} may still yield inaccurate results.

We evaluate our approach on 13 multi-objective benchmarks from~\cite{FKP12,HJKQ18,QJK17}, each considering one or two  parameter instantiations.
Application areas range over scheduling (%
\benchmark{dpm}~\cite{Qiu99},
\benchmark{eajs}~\cite{BDDKK14},
\benchmark{jobs}~\cite{BDF81},
\benchmark{polling}~\cite{Sri91}),
planning (%
\benchmark{rg}~\cite{BN08},
\benchmark{rover}~\cite{HJKQ18},
\benchmark{serv}~\cite{LPH17},
\benchmark{uav}~\cite{FWHT15}),
and protocols (%
\benchmark{mutex}~\cite{QJK17},
\benchmark{str}~\cite{QJK17},
\benchmark{team}~\cite{CKPS11},
\benchmark{wlan}~\cite{KNP12}).

The results for pure stationary strategies are summarized in \Cref{tab:res}.
For each benchmark we denote the number of objectives $\numobj$ and whether the alternative encoding from~\Cref{sec:achievability:altencoding} has been applied ($^*$).
For each parameter instantiation (Par.), the number of states ($|S|$), the percentage of the states that are contained in an end component ($\%\stateactionpairset$), and the average number of available actions at each state ($\overline{\act{}}$) are given.
For each precision $\varepsilon \in \{0.01,0.001\}$, we then depict the runtime of \tool{Storm} and the number of points on the computed approximation of the Pareto front.
\timeout{} denotes that the approach did not terminate within 2 hours, \memout{} denotes insufficient memory (16\,GB).
All experiments used 8 cores of an Intel\textsuperscript{\textregistered} Xeon\textsuperscript{\textregistered} Platinum 8160 Processor.

\tool{Storm} is often able to compute pure stationary Pareto fronts, even for models with over 100\,000 states (\eg \benchmark{uav}).
However, the model structure strongly affects the performance. For example, the second instance of \benchmark{jobs} is challenging although it only considers 117 states, a low degree of nondeterminism, and no (non-trivial) end components.
Small increments in the model size can increase runtimes significantly (\eg \benchmark{dpm} or \benchmark{uav}).
If a higher precision is requested, much more points need to be found, which often leads to timeouts.
Similarly, for more than 2 objectives the desired accuracy can often not be achieved within the time limit. The approach can be stopped at any time to report on the current approximation, \eg after 2 hours \tool{Storm} found 65 points for Instance 1 of \benchmark{mutex}.

For almost all benchmarks, the objectives could be transformed to total reward objectives, making the more efficient encoding form~\Cref{sec:achievability:altencoding} applicable.
We plot the runtimes of the two encoding in \Cref{fig:results:scatter}.
The alternative encoding is superior for almost every benchmark.
In fact, the original encoding timed out for many models as indicated at the horizontal line at the top of the figure.

In \Cref{fig:results:plot} we plot the Pareto front for the first \benchmark{polling} instance under 
general strategies (Gen), 
pure 2-memory strategies that can change the memory state exactly once (PM$_2$),
pure strategies that observe which goal state set $\goalstatesj$ has been visited already (PM$_\goalstates$), and
pure stationary strategies (PS).
Adding simple memory structures already leads to noticeable improvements in the quality of strategies.
In particular, PM$_2$ strategies perform quite well, and even outperform PM$_\goalstates$ strategies (which would be optimal if randomization were allowed).

\smallskip\noindent\textit{Data availability.}
The artifact~\cite{artifact} accompanying this paper contains source code, benchmark files, and replication scripts for our experiments.

\begin{figure}[t]
\centering
\newlength{\plotsize}
\setlength{\plotsize}{0.42\textwidth}
\begin{subfigure}[b]{\plotsize}
\centering
	\begin{tikzpicture}
	\path[use as bounding box] (-0.6,-0.6) rectangle (4.5,3.5);
\begin{axis}[
width=\plotsize,
height=\plotsize,
axis equal image,
xmin=1,
ymin=1,
ymax=40000,
xmax=40000,
xmode=log,
ymode=log,
axis x line=bottom,
axis y line=left,
xtick={1,6,60,600,6000},
xticklabels={1,6,60,600,6000},
extra x ticks = {16000,32000},
extra x tick labels = {{\timeout/\memout},N/S},
extra x tick style = {grid = major},
ytick={1,6,60,600,6000},
yticklabels={1,6,60,600,6000},
extra y ticks = {16000},
extra y tick labels = {\timeout/\memout},
extra y tick style = {grid = major},
xlabel style={yshift=0.cm,xshift=-0.4cm},
ylabel style={yshift=-0.43cm,xshift=-0.cm},
yticklabel style={font=\scriptsize},
xticklabel style={rotate=290,anchor=west,font=\scriptsize},
legend pos=south east,
legend columns=-1,
legend style={nodes={scale=0.75, transform shape},inner sep=1.5pt,yshift=0.1cm,xshift=-0.3cm},
]
\addplot[
scatter,
only marks,
scatter/classes={
	low={mark=square*,blue,mark size=1.25},
	high={mark=diamond*,orange,mark size=1.75}
},
scatter src=explicit symbolic
]%
table [col sep=semicolon,x=alt,y=classic,meta=type] {scatterdata.csv};
\legend{{$\varepsilon {=} 0.01$},{$\varepsilon {=} 0.001$}};
\addplot[no marks] coordinates {(0.01,0.01) (16000,16000) };
\addplot[no marks, densely dotted] coordinates {(0.01,0.1) (1600,16000)};
\end{axis}
\end{tikzpicture}
\caption{Alt.\ ($x$) vs.\ original ($y$) encoding.}
\label{fig:results:scatter}
\end{subfigure}
\begin{subfigure}[b]{1.2\plotsize}
	\centering
	\begin{tikzpicture}
	\path[use as bounding box] (-0.6,-0.6) rectangle (5.6,3.5);
	\begin{axis}[
width=1.4\plotsize,
height=\plotsize,
xmin=0.163,
ymin=0.163,
ymax=0.189,
xmax=0.209,
axis x line=bottom,
axis y line=left,
yticklabel style={font=\scriptsize},
xticklabel style={font=\scriptsize},
legend pos=north east,
legend columns=1,
legend style={nodes={scale=0.75, transform shape},inner sep=1.5pt,yshift=0cm,xshift=0cm},
]

\addplot[blue,thick] table [col sep=semicolon,x=x,y=y] {polling/gen.csv};
\addplot[only marks, mark=o,red, thick] table [col sep=semicolon,x=x,y=y] {polling/c2.csv};
\addplot[only marks, mark=+,thick,orange] table [col sep=semicolon,x=x,y=y] {polling/goal.csv};
\addplot[only marks, mark=*,green!70!black, thick] table [col sep=semicolon,x=x,y=y] {polling/pos.csv};
\legend{Gen,PM$_2$,PM$_\goalstates$,PS};
\end{axis}
	\end{tikzpicture}
\caption{Pareto fronts for restricted strategies.}
\label{fig:results:plot}
\end{subfigure}
\caption{Comparison of the two encodings (left) and impact of memory (right).}
\end{figure}


\iftoggle{TR}{%
\section{Conclusion}
Finding optimal pure strategies for multi-objective MDPs is NP-hard. 
Yet, such strategies are often desirable, \eg when prescribing medication or designing a product.
We presented an \milp encoding to find optimal pure and stationary strategies on an MDP in which a memory structure can be incorporated.
The encoding is applied to approximate the set of Pareto optimal values.
We successfully compute Pareto fronts for several case studies using our implementation in \tool{Storm}.
Despite the hard nature of the problem, our experiments show feasibility of the approach on practical models with tens of thousands of states.
}{}

\smallskip\noindent\textit{Acknowledgments.}
The authors thank Sebastian Junges for his valuable contributions during early stages of this work.

\bibliographystyle{splncs04}
\bibliography{paper}

\begin{thebibliography}{10}
\providecommand{\url}[1]{\texttt{#1}}
\providecommand{\urlprefix}{URL }
\providecommand{\doi}[1]{https://doi.org/#1}

\bibitem{BDDKK14}
Baier, C., Daum, M., Dubslaff, C., Klein, J., Kl{\"{u}}ppelholz, S.:
  Energy-utility quantiles. In: {NASA} Formal Methods, {NFM}. pp. 285--299
  (2014). \doi{10.1007/978-3-319-06200-6\_24}

\bibitem{DBLP:conf/csl/BaierDK14}
Baier, C., Dubslaff, C., Kl{\"{u}}ppelholz, S.: Trade-off analysis meets
  probabilistic model checking. In: {CSL-LICS}. pp. 1:1--1:10. {ACM} (2014)

\bibitem{BHK19}
Baier, C., Hermanns, H., Katoen, J.: The 10, 000 facets of {MDP} model
  checking. In: Computing and Software Science, LNCS, vol. 10000, pp. 420--451.
  Springer (2019)

\bibitem{BK08}
Baier, C., Katoen, J.P.: Principles of model checking. {MIT} Press (2008)

\bibitem{BKLPW17}
Baier, C., Klein, J., Leuschner, L., Parker, D., Wunderlich, S.: Ensuring the
  reliability of your model checker: Interval iteration for {M}arkov decision
  processes. In: {CAV} {(1)}. {LNCS}, vol. 10426, pp. 160--180. Springer (2017)

\bibitem{BN08}
Barrett, L., Narayanan, S.: Learning all optimal policies with multiple
  criteria. In: {(ICML}). pp. 41--47 (2008)

\bibitem{BBPM06}
Benini, L., Bogliolo, A., Paleologo, G.A., De~Micheli, G.: Policy optimization
  for dynamic power management. Trans. Comp.-Aided Des. Integ. Cir. Sys.
  \textbf{18}(6),  813--833 (2006). \doi{10.1109/43.766730}

\bibitem{DBLP:conf/icalp/BerthonRR17}
Berthon, R., Randour, M., Raskin, J.: Threshold constraints with guarantees for
  parity objectives in {M}arkov decision processes. In: {ICALP}. LIPIcs,
  vol.~80, pp. 121:1--121:15. Schloss Dagstuhl - Leibniz-Zentrum fuer
  Informatik (2017)

\bibitem{DBLP:journals/corr/abs-1809-03107}
Bouyer, P., Gonz{\'{a}}lez, M., Markey, N., Randour, M.: Multi-weighted
  {M}arkov decision processes with reachability objectives. In: GandALF.
  {EPTCS}, vol.~277, pp. 250--264 (2018)

\bibitem{BDF81}
Bruno, J.L., Downey, P.J., Frederickson, G.N.: Sequencing tasks with
  exponential service times to minimize the expected flow time or makespan. J.
  {ACM}  \textbf{28}(1),  100--113 (1981). \doi{10.1145/322234.322242}

\bibitem{DBLP:journals/iandc/BruyereFRR17}
Bruy{\`{e}}re, V., Filiot, E., Randour, M., Raskin, J.: Meet your expectations
  with guarantees: Beyond worst-case synthesis in quantitative games. Inf.
  Comput.  \textbf{254},  259--295 (2017)

\bibitem{CAH04}
Chatterjee, K., de~Alfaro, L., Henzinger, T.A.: Trading memory for randomness.
  In: {QEST}. pp. 206--217. {IEEE} Computer Society (2004)

\bibitem{CKK17}
Chatterjee, K., Kret{\'{\i}}nsk{\'{a}}, Z., Kret{\'{\i}}nsk{\'{y}}, J.:
  Unifying two views on multiple mean-payoff objectives in markov decision
  processes. {LMCS}  \textbf{13}(2) (2017)

\bibitem{CMH06}
Chatterjee, K., Majumdar, R., Henzinger, T.A.: {M}arkov decision processes with
  multiple objectives. In: {STACS}. {LNCS}, vol.~3884, pp. 325--336. Springer
  (2006)

\bibitem{CKPS11}
Chen, T., Kwiatkowska, M.Z., Parker, D., Simaitis, A.: Verifying team formation
  protocols with probabilistic model checking. In: {CLIMA}. pp. 190--207 (2011)

\bibitem{DJKV17}
Dehnert, C., Junges, S., Katoen, J.P., Volk, M.: A {S}torm is coming: A modern
  probabilistic model checker. In: {CAV}. {LNCS}, vol. 10427. Springer (2017)

\bibitem{artifact}
Delgrange, F., Katoen, J.P., Quatmann, T., Randour, M.: Evaluated artifact for
  this paper. figshare (2020). \doi{10.6084/m9.figshare.11569485}

\bibitem{EG16}
von Essen, C., Giannakopoulou, D.: Probabilistic verification and synthesis of
  the next generation airborne collision avoidance system. {STTT}
  \textbf{18}(2),  227--243 (2016)

\bibitem{EKVY07}
Etessami, K., Kwiatkowska, M.Z., Vardi, M.Y., Yannakakis, M.: Multi-objective
  model checking of {M}arkov decision processes. Logical Methods in Computer
  Science  \textbf{4}(4) (2008). \doi{10.2168/LMCS-4(4:8)2008}

\bibitem{FWHT15}
Feng, L., Wiltsche, C., Humphrey, L.R., Topcu, U.: Controller synthesis for
  autonomous systems interacting with human operators. In: {ICCPS}. pp. 70--79.
  {ACM} (2015)

\bibitem{FKNP11}
Forejt, V., Kwiatkowska, M.Z., Norman, G., Parker, D.: Automated verification
  techniques for probabilistic systems. In: {SFM}. {LNCS}, vol.~6659, pp.
  53--113. Springer (2011)

\bibitem{FKNPQ11}
Forejt, V., Kwiatkowska, M.Z., Norman, G., Parker, D., Qu, H.: Quantitative
  multi-objective verification for probabilistic systems. In: {TACAS}. {LNCS},
  vol.~6605, pp. 112--127. Springer (2011)

\bibitem{FKP12}
Forejt, V., Kwiatkowska, M.Z., Parker, D.: {P}areto curves for probabilistic
  model checking. In: {ATVA}. {LNCS}, vol.~7561, pp. 317--332. Springer (2012)

\bibitem{GJ79}
Garey, M.R., Johnson, D.S.: Computers and Intractability: A Guide to the Theory
  of NP-Completeness. W. H. Freeman \& Co., New York, NY, USA (1979)

\bibitem{scip}
Gleixner, A., Bastubbe, M., Eifler, L., Gally, T., Gamrath, G., Gottwald, R.L.,
  Hendel, G., Hojny, C., Koch, T., L{\"u}bbecke, M.E., Maher, S.J.,
  Miltenberger, M., M{\"u}ller, B., Pfetsch, M.E., Puchert, C., Rehfeldt, D.,
  Schl{\"o}sser, F., Schubert, C., Serrano, F., Shinano, Y., Viernickel, J.M.,
  Walter, M., Wegscheider, F., Witt, J.T., Witzig, J.: {The {SCIP} Optimization
  Suite 6.0}. Technical report, Optimization Online (July 2018),
  \url{http://www.optimization-online.org/DB_HTML/2018/07/6692.html}

\bibitem{gurobi}
Gurobi~Optimization, L.: Gurobi optimizer reference manual (2019),
  \url{http://www.gurobi.com}

\bibitem{HM14}
Haddad, S., Monmege, B.: Reachability in {MDP}s: Refining convergence of value
  iteration. In: {RP}. {LNCS}, vol.~8762, pp. 125--137. Springer (2014)

\bibitem{HJKQ18}
Hartmanns, A., Junges, S., Katoen, J., Quatmann, T.: Multi-cost bounded
  reachability in {MDP}. In: {TACAS} {(2)}. {LNCS}, vol. 10806, pp. 320--339.
  Springer (2018)

\bibitem{JJWQWKB18}
Junges, S., Jansen, N., Wimmer, R., Quatmann, T., Winterer, L., Katoen, J.,
  Becker, B.: Finite-state controllers of {POMDP}s using parameter synthesis.
  In: {UAI}. pp. 519--529. {AUAI} Press (2018)

\bibitem{KNP11}
Kwiatkowska, M., Norman, G., Parker, D.: {PRISM} 4.0: Verification of
  probabilistic real-time systems. In: Gopalakrishnan, G., Qadeer, S. (eds.)
  Proc. 23rd International Conference on Computer Aided Verification (CAV'11).
  LNCS, vol.~6806, pp. 585--591. Springer (2011)

\bibitem{KNP12}
Kwiatkowska, M.Z., Norman, G., Parker, D.: The {PRISM} benchmark suite. In:
  {QEST}. pp. 203--204 (2012). \doi{10.1109/QEST.2012.14}

\bibitem{LPH17}
Lacerda, B., Parker, D., Hawes, N.: Multi-objective policy generation for
  mobile robots under probabilistic time-bounded guarantees. In: {ICAPS}. pp.
  504--512. {AAAI} Press (2017)

\bibitem{LBM12}
Lizotte, D.J., Bowling, M., Murphy, S.A.: Linear fitted-{Q} iteration with
  multiple reward functions. J. Mach. Learn. Res.  \textbf{13},  3253--3295
  (2012)

\bibitem{PW10}
Perny, P., Weng, P.: On finding compromise solutions in multiobjective {M}arkov
  decision processes. In: {ECAI}. {FAIA}, vol.~215, pp. 969--970. {IOS} Press
  (2010)

\bibitem{PDM17}
Pia, A.D., Dey, S.S., Molinaro, M.: Mixed-integer quadratic programming is in
  {NP}. Math. Program.  \textbf{162}(1-2),  225--240 (2017)

\bibitem{Put94}
Puterman, M.L.: {{M}arkov} Decision Processes. John Wiley and Sons (1994)

\bibitem{Qiu99}
Qiu, Q., Wu, Q., Pedram, M.: Stochastic modeling of a power-managed system:
  Construction and optimization. In: {ISLPED}. pp. 194--199. {ACM} (1999)

\bibitem{QJK17}
Quatmann, T., Junges, S., Katoen, J.: {M}arkov automata with multiple
  objectives. In: {CAV} {(1)}. {LNCS}, vol. 10426, pp. 140--159. Springer
  (2017)

\bibitem{DBLP:conf/vmcai/RandourRS15}
Randour, M., Raskin, J., Sankur, O.: Variations on the stochastic shortest path
  problem. In: {VMCAI}. Lecture Notes in Computer Science, vol.~8931, pp.
  1--18. Springer (2015)

\bibitem{RRS17}
Randour, M., Raskin, J., Sankur, O.: Percentile queries in multi-dimensional
  {M}arkov decision processes. {FMSD}  \textbf{50}(2-3),  207--248 (2017)

\bibitem{RVWD13}
Roijers, D.M., Vamplew, P., Whiteson, S., Dazeley, R.: A survey of
  multi-objective sequential decision-making. {JAIR}  \textbf{48},  67--113
  (2013)

\bibitem{SBHH17}
Scheftelowitsch, D., Buchholz, P., Hashemi, V., Hermanns, H.: Multi-objective
  approaches to {M}arkov decision processes with uncertain transition
  parameters. In: {VALUETOOLS}. pp. 44--51. {ACM} (2017)

\bibitem{Sri91}
Srinivasan, M.: Nondeterministic polling systems. Management Science
  \textbf{37}(6),  667--681 (1991). \doi{10.1287/mnsc.37.6.667}

\bibitem{WJ07}
Wiering, M.A., de~Jong, E.D.: Computing optimal stationary policies for
  multi-objective {M}arkov decision processes. In: {ADPRL}. pp. 158--165
  (2007). \doi{10.1109/ADPRL.2007.368183}

\end{thebibliography}

\iftoggle{TR}{%
\clearpage
\appendix
\section{Upper Bounds for Expected Number of Visits}
\label{app:A}
\tq{write appendix}
Let $\mdp = \mdptuple$ be an MDP and $\multiobjquery = \tuple{\obj_1, \dots, \obj_\numobj}$ be a multi-objective query over objectives $\objj = \rewobjop_{\objrelj}(\rewstructj \eventually \goalstatesj)$. We use notations as defined in \Cref{sec:achievability}.

For the encodings in \Cref{lp:alternative,lp:endcomponents} we have to compute values $\maxflow{\state}$ for each $\state$ such that the expected number of times a path visits $\state$ from the initial state is at most $\maxflow{\state}$. For this, we only consider pure positional scheduler that reach a given set of sink states $\stateszero$ almost surely.
Formally, we thus require
\[
\fa{\sched \in \psscheds{\mdp}}
\maxflow{\state} \ge 
\iverson{\probmeasure{\mdp}{\sched}(\eventually \stateszero) = 1} \cdot
\expval{\mdp}{\sched}(\rewstruct_{\state,\action} \eventually  \stateszero).
\]
For the case where $\stateszero$ is almost surely reached under any scheduler,~\cite{BKLPW17} provide an efficient, graph based algorithm to compute these values.
Since we have to deal with end components (in particular for the encoding in \Cref{sec:achievability:multichain}, we can not apply the approach of~\cite{BKLPW17} directly.

The idea is to eliminate the end components as in, \eg \cite{HM14}, but in a way that expected visiting times $\expval{\mdp}{\sched}(\rewstruct_{\state,\action} \eventually  \stateszero)$ are over-approximated. For each maximal EC $\stateactionpairset \in \mecs{\submdp{\mdp}{\stateactionpairset_?^\objindex}}$ and for each $\state \in \submdp{\states}{\stateactionpairset}$ within this EC we perform the following steps:
\begin{enumerate}
	\item Compute a lower bound $p>0$ for the probability that starting in $\state$, we leave the EC without visiting $\state$ again.
	For this lower bound, all pure stationary schedulers $\sched$ with $\probmeasure{\mdpwithinitstate{\mdp}{\state}}{\sched}(\eventually \states \setminus \submdp{\states}{\stateactionpairset})$ have to be considered.
	To obtain such a lower bound, we can provide a lower bound on the probability of some finite path that leaves the EC.
	Since there has to be such a path that visits each $\state' \in \submdp{\states}{\stateactionpairset}$ at most once, we compute $p$ as follows:
	\[
	p = \prod_{\state' \in \submdp{\states}{\stateactionpairset}} \min_{\action \in \act{\state'}} \min_{\state'' \in \supp{\tuple{\state',\action}}} \transitions(\state',\action,\state'')
	\]
	\item Set all transition probabilities $\transitions(\state,\action,\state')$ with $\tuple{\state,\action} \in\stateactionpairset$ to 0.
	\item For each $\state' \in \submdp{\states}{\stateactionpairset}$ and $\action \in \act{\state'}$ such that $\tuple{\state',\action} \notin \stateactionpairset$, add a fresh action $\action'$ and set for each $\state'' \in \states $:
\[
	\transitions(\state,\action',\state'') = p \cdot \transitions(\state',\action,\state'') + (1-p) \cdot \iverson{\state = \state''}
	.
\]
\end{enumerate}
With these steps, we have eliminated all end components consisting of states in $\statesmaybe = \states \setminus \stateszero$. However, each state $s$ within an end component gets an additional self loop probability of $(1-p)$, which is an upper bound on the probability that we cycle through the end component and visit $s$ again.
Therefore, performing the approach of~\cite{BKLPW17} yields the desired values.

We remark that computing the values $\maxflow{\state}$ as above can lead to very large values which affect the numerical stability of the $\milp$ solving.
However, for the models in our experiments in \Cref{sec:evaluation}, this was not a concern.

\section{Infinite Rewards}
\label{app:B}

We now consider $\psma$ instances where infinite expected reward can be collected, \ie for state $\state$, objective $\objj$ and $\sched \in \psscheds{\mdp}$ we potentially have $\expval{\mdp_\state}{\sched}(\rewstructj\eventually\goalstatesj) = \infty$.
We treat such instances by a combination of preprocessing steps and (slight) modifications of the constraints in~\Cref{lp:unichainfinite,lp:endcomponents}.
See~\Cref{app:C} for the encoding of \Cref{sec:achievability:altencoding}.
As before, let
\[
\statesinf = \set{\state \in \states \mid  \fa{\sched \in \psscheds{\mdp}} \ex{\objindex} \objj \text{ is minimizing  and }\expval{\mdp_\state}{\sched}(\rewstructj\eventually\goalstatesj) = \infty }
\]
be the set of states for which all schedulers induce infinite reward with respect to at least one minimizing objective.
We can determine $\statesinf$ by first finding the end components of $\mdp$ in which no reward for any minimizing objective is collected. $\statesinf$ corresponds to the set of states that can not reach such an end component.

We can show for all $\sched \in \psscheds{\mdp}$ that $\probmeasure{\mdp}{\sched}(\eventually \statesinf) > 0$ implies $\expval{\mdp}{\sched}(\rewstructj\eventually\goalstatesj) = \infty$ for at least one minimizing objective $\objj$.
Hence, such a scheduler does not achieve the given point $\point$.
We assume $\sinit \notin\statesinf$ (otherwise $\psach{\mdp}{\multiobjquery} = \emptyset$).
To exclude schedulers $\sched \in \psscheds{\mdp}$  with $\probmeasure{\mdp}{\sched}(\eventually \statesinf) > 0$, we consider the sub-MDP $\submdp{\mdp}{\stateactionpairset_\mathrm{fin}, \sinit}$ instead of $\mdp$, where $\stateactionpairset_\mathrm{fin}$ is the largest subset of $(\states \setminus \statesinf) \times \act{}$ that is closed for $\mdp$. 
The achievable points for $\submdp{\mdp}{\stateactionpairset_\mathrm{fin}, \sinit}$ coincide with the achievable points for $\mdp$. 
Moreover, there is a scheduler for $\submdp{\mdp}{\stateactionpairset_\mathrm{fin}, \sinit}$ that  for each minimizing objective induces a finite expected reward at every state.
This is a requirement for the existence of a (real-valued) solution of the constraints in~\Cref{lp:unichainfinite}.

$\submdp{\mdp}{\stateactionpairset_\mathrm{fin}, \sinit}$ can still contain ECs in which infinite reward is collected for either a minimizing or a maximizing objective.
However, the constraints in \Cref{lp:endcomponents} should only apply to ECs without any rewards.
Hence, the constraints are considered for every maximal EC $\stateactionpairset \in \mecs{\submdp{\mdp}{\stateactionpairset_{\cap}}}$, where $\stateactionpairset_{\cap}$ is the largest subset of
$\stateactionpairset_\mathrm{fin} \cap \stateactionpairset_?^\objindex \cap \set{\tuple{\state,\action} \mid \fa{\state'} \rewstructj(\state,\action,\state')=0 }$
that is closed for $\mdp$.

\Cref{sec:achievability:unichain} considers upper bounds $\uppervalueboundsj \in \qq$ for the maximal expected rewards at state $\state$ with respect to $\objj$.
We consider the case where this value is infinite by computing sufficiently large bounds as follows:
Let $\mdp'$ be the MDP obtained by eliminating the end components in which a positive reward can be collected using a very similar construction as in \Cref{app:A}.
Note that we can not collect infinite reward in $\mdp'$.
We can show that the expected rewards for $\mdp'$ are an upper bound for the expected rewards for $\mdp$, assuming that  only strategies yielding finite rewards are considered.

For maximizing objectives $\objj$, we also have to allow strategies that do collect infinite reward.
We therefore add additional constraints that detect if infinite reward is collected.
The idea is to compute the probability that a state in $\stateszeroj \cup \{ \state \mid \lpsolof{\lpecvars} = 1\}$ is reached.
If this probability is below 1, infinite reward is collected.
An additional binary variable $b^j$ is added, to ensure that either infinite reward is collected or the threshold given by $\tupleaccess{\point}{\objindex}$ is satisfied.
The additional constraints are shown in \Cref{lp:infinite}.
Observe that strict inequalities as in Line~\ref{lp:inf:strict} are not allowed in \milp encodings.
 However, we can replace constraints of the form $a < b$ by $a + \epsilon \le b$ and ask for a solution that maximizes $\epsilon$.
\begin{figure}[t]
	\begin{lp}
	\lpline{}{\fa{\text{ maximizing } \objj \text{ with infinite reward possible}}}{}{}{}{}
	\lpline{"}{}{b^\objindex}{\in \set{0,1}}{}{}
	\lpline{"}{}{\lpvalvar{\sinit}{\objindex}}{\ge \tupleaccess{\point}{\objindex} \cdot b^j}{}{}
	\lpline{"}{}{w_{\sinit}^\objindex}{< 1 + b^j}{}{lp:inf:strict}
	\lpline{"}{\fa{\state \in \stateszeroj}}{w_s^\objindex}{= 1}{}{}
	\lpline{"}{\fa{\state \in \statesmaybej}}{}{}{}{}
	\lpline{""}{\fa{\action \in \act{\state}}}{w_{s,\action}^\objindex}{\ge 1 - \lpactionvarsa}{}{}
	\lpline{"""}{}{w_{s,\action}^\objindex}{\ge \lpecactionvarsaj}{}{}
	\lpline{"""}{}{w_{s,\action}^\objindex}{\ge \sum_{\state' \in \statesmaybej} \transitions(\state,\action,\state') \cdot w_{s'}^\objindex}{}{}
	\lpline{""}{}{w_{s}^\objindex}{= \sum_{\action \in \act{\state}} w_{s,\action}^\objindex - (|\act{\state}|-1)}{}{}
	\end{lp}
	\caption{\milp encoding for maximizing objectives with possibly infinite rewards.}
	\label{lp:infinite}
\end{figure}

\begin{theorem}
	With the modifications as above, the constraints in \Cref{lp:unichainfinite,lp:endcomponents} applied to $\submdp{\mdp}{\stateactionpairset_\mathrm{fin}, \sinit}$ are feasible iff	$\point \in \psach{\mdp}{\multiobjquery}$.
\end{theorem}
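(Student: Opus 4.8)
The plan is to reduce the claim to the sub-MDP $\mdp^\star := \submdp{\mdp}{\stateactionpairset_\mathrm{fin},\sinit}$ and then re-use the correctness argument of \Cref{thm:multichainfinite}, extended to cope with the finite surrogate bounds $\uppervalueboundsj$ and with the constraints of \Cref{lp:infinite}. For the reduction, by the observation above that $\probmeasure{\mdp}{\sched}(\eventually\statesinf)>0$ forces $\expval{\mdp}{\sched}(\rewstructj\eventually\goalstatesj)=\infty$ for some minimizing $\objj$, and since $\tupleaccess{\point}{\objindex}$ is finite for minimizing objectives, no such $\sched$ achieves $\point$. Every $\sched\in\psscheds{\mdp}$ reaching $\statesinf$ with probability $0$ uses, on its $\sinit$-reachable fragment, only state-action pairs from $\stateactionpairset_\mathrm{fin}$, hence induces the same probability measure as a strategy of $\mdp^\star$ and conversely; thus $\psach{\mdp}{\multiobjquery}=\psach{\mdp^\star}{\multiobjquery}$, and it suffices to prove feasibility $\iff$ $\point\in\psach{\mdp^\star}{\multiobjquery}$. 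By maximality of $\stateactionpairset_\mathrm{fin}$, $\mdp^\star$ additionally admits a pure stationary strategy with finite expected reward at every state for every minimizing objective, which is exactly what lets \Cref{lp:unichainfinite} have a real-valued solution.

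Next I would justify the surrogate bounds: finite $\uppervalueboundsj\in\qq$ are needed with $\uppervalueboundsj\ge\expval{\mdpwithinitstate{\mdp}{\state}}{\sched}(\rewstructj\eventually\goalstatesj)$ for every $\sched\in\psscheds{\mdp^\star}$ for which this value is finite. Let $\mdp'$ arise from $\mdp$ by eliminating every EC carrying a positive $\rewstructj$-reward using the reward-over-approximating construction of \reftr{A}; then no infinite reward is collected in $\mdp'$, so finite $\uppervalueboundsj$ can be obtained by single-objective model checking of $\mdp'$, and the over-approximation property yields the required inequality (minimizing objectives via negated rewards). With these bounds fixed, all constraints of \Cref{lp:unichainfinite,lp:endcomponents} are real-valued over $\mdp^\star$, and it remains to establish the biconditional for $\mdp^\star$.

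For soundness, let $\lpsol$ be a solution and $\sched$ the encoded strategy. For each objective $\objindex$, the constraints of \Cref{lp:endcomponents} — now ranging over the maximal ECs of $\submdp{\mdp}{\stateactionpairset_\cap}$, i.e.\ $0$-reward ECs inside $\statesmaybej$ — work exactly as in the proof of \Cref{thm:multichainfinite}: $\lpsolof{\lpecvar{\state}{\objindex}}=1$ is forced on all states of every $0$-reward BSCC of $\mc{\mdp}{\sched}$ within $\statesmaybej$, so via \Cref{lem:mceqsys} the analysis of \Cref{thm:multichainfinite} applies unchanged and gives $\tuple{\mdp,\sched,\tupleaccess{\point}{\objindex}}\models\objj$ for every $\objj$ whose expected reward under $\sched$ is finite; a minimizing $\sched$ visiting a positive-reward EC collects infinite reward and fails to achieve $\point$, consistently with the EC range only shrinking. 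For a maximizing $\objj$ with possibly infinite reward, the crux is
\[
\expval{\mdp}{\sched}(\rewstructj\eventually\goalstatesj)<\infty \ \Longleftrightarrow\ \probmeasure{\mdp}{\sched}\bigl(\eventually\bigl(\stateszeroj\cup\{\state\mid\lpsolof{\lpecvar{\state}{\objindex}}{=}1\}\bigr)\bigr)=1 ,
\]
since transient reward has finite expectation and any BSCC of $\mc{\mdp}{\sched}$ not meeting this target set carries a positive $\rewstructj$-reward taken infinitely often. The $w$-variables of \Cref{lp:infinite} encode a super-solution of the reachability equation for that target set, so $\lpsolof{w_{\sinit}^\objindex}$ dominates that probability; hence $\lpsolof{b^\objindex}=0$ together with Line~\ref{lp:inf:strict} forces the probability below $1$, i.e.\ $\expval{\mdp}{\sched}(\rewstructj\eventually\goalstatesj)=\infty\ge\tupleaccess{\point}{\objindex}$, while $\lpsolof{b^\objindex}=1$ reinstates $\lpsolof{\lpvalvar{\sinit}{\objindex}}\ge\tupleaccess{\point}{\objindex}$; either way $\objj$ is satisfied, so $\sched$ achieves $\point$. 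Conversely, if $\sched\in\psscheds{\mdp}$ achieves $\point$ it corresponds to a strategy of $\mdp^\star$ with finite minimizing rewards, and I would build a solution by setting $\lpsolof{\lpactionvarsa}=\sched(\state)(\action)$, the value variables to the (negated) expected rewards where finite and to $0$ on detected states, $\lpsolof{\lpecvar{\state}{\objindex}}=1$ exactly on the $0$-reward BSCC states, the flow variables of \Cref{lp:endcomponents} via \Cref{lem:flow} on the auxiliary MDPs $\mdp^\stateactionpairset$ as in \Cref{thm:multichainfinite}, and, for maximizing $\objj$, $w_\state^\objindex$ to the true reachability probability with $b^\objindex=1$ iff $\expval{\mdp}{\sched}(\rewstructj\eventually\goalstatesj)<\infty$; checking the (in)equalities is routine, with the strict inequality of Line~\ref{lp:inf:strict} relaxed as $a+\epsilon\le b$ with $\epsilon$ maximised, so that a solution with $\epsilon>0$ exists precisely when the strict constraints are satisfiable.

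The main obstacle is the maximizing case of this argument: establishing that the $w$-recursion together with the EC-detection variables $\lpecvar{\state}{\objindex}$ faithfully witnesses ``infinite reward is collected'', and that the resulting disjunction with $b^\objindex$ is both sound and complete against the thresholds $\tupleaccess{\point}{\objindex}$; a secondary delicate point is the validity of the finite bounds $\uppervalueboundsj$, which rests on the over-approximating EC elimination of \reftr{A}.
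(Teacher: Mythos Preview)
The paper states this theorem without proof; Appendix~B only describes the modifications and their intended effect. Your sketch is therefore the only argument on offer, and it follows precisely the line the paper's discussion suggests: reduce to $\mdp^\star=\submdp{\mdp}{\stateactionpairset_\mathrm{fin},\sinit}$, re-run the correctness argument of \Cref{thm:multichainfinite} on the restricted EC range $\stateactionpairset_\cap$, justify the finite surrogate bounds $\uppervalueboundsj$ via the over-approximating EC elimination, and handle maximizing objectives with possibly infinite reward through the $b^\objindex$/$w$-mechanism of \Cref{lp:infinite}. In that sense there is nothing to compare against, and your outline is sound.

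One point in your soundness direction deserves more care. You write that ``a minimizing $\sched$ visiting a positive-reward EC collects infinite reward and fails to achieve $\point$, consistently with the EC range only shrinking.'' But in the soundness direction you start from a solution $\lpsol$ and must show that the encoded $\sched$ \emph{does} achieve $\point$; so what you actually need is that no solution $\lpsol$ can encode a $\sched$ whose induced MC has a BSCC with positive $\rewstructj$-reward for a minimizing $\objj$. This holds, but it does not follow from \Cref{thm:multichainfinite} ``unchanged,'' since that theorem assumes \Cref{as:rewardfiniteness} and never meets such a BSCC. The missing argument is short: on such a BSCC, Lines~\ref{lp:exprew:sum}--\ref{lp:exprew:end} (with $\pmobj=-$ and the non-selected $\lpvalactionvar{\state}{\action'}{\objindex}$ pinned to $-\uppervalueboundsj$) force $\lpvalvar{\state}{\objindex}\le\sum_{\state'}\transitions(\state,\sched(\state),\state')\bigl(\lpvalvar{\state'}{\objindex}-\rewstructj(\state,\sched(\state),\state')\bigr)$; averaging against the stationary distribution of the BSCC yields $0\le -r$ with $r>0$ the mean reward per step, a contradiction. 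Making this explicit closes the gap. A minor side remark: your assumption that $\tupleaccess{\point}{\objindex}<\infty$ for minimizing $\objj$ should be stated as w.l.o.g.\ (the objective is trivially satisfied otherwise), and the bounds $\uppervalueboundsj$ come from a construction the paper calls ``very similar'' to \reftr{A}, not literally \reftr{A}.
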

\section{Extensions for Alternative Encoding}
\label{app:C}
The alternative encoding can be lifted to multichain MDP as well. For this, we use the constraints from \Cref{lp:endcomponents} without Line~\ref{lp:ec:val} in conjunction with the constraints in \Cref{app:lp:alternative}.
The latter is a slight extension of the encoding from~\Cref{lp:alternative}.
It considers additional variables $\lpflowecvars$ which can only be non-zero, if $\state$ lies on an EC. This idea is similar to the variables $\lpecflowvarsaj$ in~\Cref{lp:endcomponents}.

After performing the preprocessing steps from \Cref{app:B}, the encoding also supports infinite rewards for minimizing objectives.
In incorporation of infinite rewards for maximizing objectives is left for future work.
\begin{figure}[t]
	\begin{lp}
	\lpline{}{\fa{\state \in \states}}{}{}{Select an action at each state}{}
	\lpline{"}{\fa{\action \in \act{\state}\enlargecolumn}\enlargecolumn}{\lpactionvarsa}{\in \set{0,1}}{}{}
	\lpline{"}{}{\sum_{\action \in \act{\state}}\lpactionvarsa}{=1}{}{}
	\lpline{}{\fa{\state \in \statesmaybe}\enlargecolumn}{\lpflowecvars}{\in [0, \maxflow{\state}]}{}{}
	\lpline{"}{}{\lpflowecvars}{\le \iverson{\ex{\stateactionpairset \in \mecs{\submdp{\mdp}{\stateactionpairset_?}}} \state \in \submdp{\states}{\stateactionpairset}} \cdot \lpecvars}{}{}
	\lpline{"}{\fa{\action \in \act{\state}}\enlargecolumn}{\lpflowvarsa}{\in [0, \maxflow{\state} \cdot \lpactionvarsa]}{}{}
	\lpline{"}{~~~~~~}{\lpflowecvars + \sum_{\action \in \act{\state}} \lpflowvarsa}{=\iverson{\state = \sinit} + \sum_{\tuple{\state',\action'} \in \predecessors{\state}} \transitions(\state',\action',\state) \cdot \lpflowvar{\state'}{\action'}}{}{}
	\lpline{}{}{1}{=\sum_{\state \in \statesmaybe} \Big( \lpflowecvars +  \sum_{\action \in \act{\state}} \lpflowvarsa\cdot  \sum_{\state' \in \stateszero} \transitions(\state,\action,\state') \Big)}{}{}
	\lpline{}{\fa{\objindex \in \set{1, \dots, \numobj}}\enlargecolumn}{}{}{}{}
	\lpline{"}{}{\lpvalvar{\sinit}{\objindex}}{= \sum_{\state \in \statesmaybe} \sum_{\action \in \act{\state}} \lpflowvarsa \cdot \sum_{\state' \in \states} \left(\transitions(\state,\action,\state') \cdot  \rewstructj(\state, \action, \state')\right) }{}{}
	\lpline{"}{}{\lpvalvar{\sinit}{\objindex}}{\objrelj \tupleaccess{\point}{\objindex}}{}{}
	\end{lp}
	\caption{\milp encoding for total reward objectives on multichain MDP.}
	\label{app:lp:alternative}
\end{figure}

\section{Details for Bounded Memory Achievability}\label{app:D}
\subsection{Pure strategies encoded by Mealy machines}
A pure strategy~$\sched$ for $\mdp = \mdptuple$ can be encoded by a \emph{Mealy machine} 
$\schedtuple$ where~$\memorystates$ is a finite set
of memory states,~$\minit \in \memorystates$ the \emph{initial memory state},
$\schednextaction$ the \emph{next action function}
$\schednextaction\colon \states \times \memorystates \rightarrow \actions$ where
$\schednextaction(\state,\memorystate) \in \act{\state}$ for any~$\state\in \states$ and~$\memorystate \in \memorystates$,
and $\schedmemoryupdate$ the \emph{memory update function}
$\schedmemoryupdate \colon \memorystates \times \states \times \actions \rightarrow \memorystates$.  A strategy is
\emph{$K$-memory} if~$|\memorystates| = K$.

The strategy $\sched$ induces an MC $\mc{\mdp}{\sched}$ defined on the state space $\states \times \memorystates$ with initial state $(\sinit, \minit)$ such that, for any pair of states $(\state,\memorystate)$ and~$(\state',\memorystate')$, the probability of transition $(\state,\memorystate)$ to $(\state',\memorystate')$ when choosing action $\action$ is 
equal to $\transitions(\state,\action,\state') \cdot \iverson{\action = \schednextaction(\state,\memorystate)} \cdot \iverson{\memorystate' = \schedmemoryupdate(\memorystate,\state, \action)}$.
\begin{remark}
	Let $K' \leq K$, from every $\sched' \in \pschedsK{\mdp}{K'}$, we can trivially construct a strategy $\sched \in \pschedsK{\mdp}{K}$ with $K {-} K'$ unused memory states.
	Thus, $\pKach{\mdp}{\multiobjquery}{K'} \subseteq \pKach{\mdp}{\multiobjquery}{K}$.
	Therefore, to reason about points achieved by pure strategies with memory of maximal size $K$, it is sufficient to only consider pure $K$-memory strategies.
\end{remark}

\subsection{Proof of Lemma~\ref{lem:bounded-mem-equiv}}\label{app:proof-bounded-mem-equiv}
We left out the proof of Lemma~\ref{lem:bounded-mem-equiv} from the main text: we present it here. Recall the statement:

\schedEquivLemma*

\begin{proof}
	Recall that $\states' = \states \times \memorystates$, $\actions' = \actions \times \memorystates$, and $\sinit' = \left ( \sinit, \minit \right )$.
	First, from a strategy $\sched \in \pschedsK{\mdp}{K}$ encoded as $\tuple{\memorystates', \schednextaction, \schedmemoryupdate, \minit'}$, we can construct a strategy $\sched' \in \psscheds{\mdpndmemory{K}}$ as follows.
	Since $\sched$ is $K$-memory, we link each memory state of $\sched$ to a memory state of $\ndmemory{K}$. So, consider w.l.o.g. that $\memorystates' = \memorystates$ and $\minit' = \minit$.
	For all $\left (\state, \memorystate\right ) \in \states \times \memorystates$, let $\action = \schednextaction(\state, \memorystate)$ and $m' = \schedmemoryupdate\left ( \memorystate, \state, \action \right )$, we define
	$\sched'\left (\state, \memorystate \right )= \left (\action,\memorystate' \right )$.
	Note that $\sched'$ is well defined using completeness of $\ndmemory{K}$: if $\action \in \actions(s)$, then $(\action, \memorystate') \in \actions'(\state, \memorystate)$ since $\ndmemoryupdate(\memorystate) = M$.
	
	Second, we can also construct a strategy $\sched = \tuple{\memorystates', \schednextaction, \schedmemoryupdate, \minit'}$ from any strategy $\sched' \in \psscheds{\mdpndmemory{K}}$ as follows. Let $\memorystates' = \memorystates$, and $\minit' = \minit$, for all $\state \in \states$ and $\memorystate \in \memorystates$ such that $\sched'(\state, \memorystate) = (\action, \memorystate')$, we define $\schednextaction(\state, \memorystate) = \action$, $\schedmemoryupdate(\memorystate, \state, \action) = \memorystate'$, and  $\schedmemoryupdate(\memorystate, \state, \action') = \memorystate$ for $\action' \neq \action$.
	
	These constructions form an equivalence between pure $K$-memory strategies for $\mdp$ and pure stationary strategies for $\mdpndmemory{K}$.
	Let $\sched \in \pschedsK{\mdp}{K}$ and $\sched' \in \psscheds{\mdpndmemory{K}}$ be two such equivalent strategies, and let $\mc{\mdp}{\sched}$, $\mc{\left(\mdpndmemory{K}\right)}{\sched'}$ be the MCs induced respectively by $\sched$ and $\sched'$.
	Both MCs have the same state space defined on $\states \times \memorystates$ and initial state $(\sinit, \minit)$. Moreover, let $\mctransitions{\sched}$ be the transition function of $\mc{\mdp}{\sched}$ and $\mctransitions{\sched'}$ that of $\mc{\left(\mdpndmemory{K}\right)}{\sched'}$. We have 
	$\mctransitions{\sched}\left(\left(\state, \memorystate\right), \alpha, \left(\state', \memorystate'\right)\right) = \mctransitions{\sched'}\left(\left(\state, \memorystate\right), \left(\alpha, \memorystate' \right), \left(\state', \memorystate'\right)\right)$ for $\state, \state' \in \states$, $\action \in \actions$, and $\memorystate, \memorystate' \in \memorystates$.
	Modulo a bijection consisting in a renaming of actions, both MCs are thus identical.
	Consequently, the probability of paths and events in $\mc{\mdp}{\sched}$ and $\mc{\left(\mdpndmemory{K}\right)}{\sched'}$ are equal.
\end{proof}

\subsection{Pure Bounded-memory Pareto Approximation Problem}\label{app:pbp-problem}
\begin{nproblem}[framed]{Pure Bounded-memory Pareto Approximation Problem (\pbpapprox)}
	Input: & MDP $\mdp$, $\numobj$-dimensional multi-objective query $\multiobjquery$, memory bound $K \in \nn$, precision $\precision > \rr_{>0}^\numobj$.\\
	Output: & An $\precision$-approximation of $\closure{\multiobjquery}{\pKpareto{\mdp}{\multiobjquery}{K}}$.
\end{nproblem}
We already have all the ingredients to solve this problem: let $\mdp$ be an MDP and $\ndmemory{K}$ be a complete nondeterministic memory structure, recall Corollary~\ref{cor:bounded-to-stationnary}:
\corBoundedMultiQuery*
We can establish a reduction from \pbpapprox to \pspapprox: for inputs $\mdp$, $\multiobjquery$, and $K \in \nn$, we solve \pbpapprox by computing the solution of the \pspapprox problem for $\mdpndmemory{K}$ and $\multiobjquery'$ with the approach of Section~\ref{sec:pareto}.

\section{Evaluation with GLPK}
We repeated our experiments from \Cref{sec:evaluation} using the free \milp solver \tool{GLPK} instead of \tool{Gurobi}.
Since GLPK does not have native support for multi-threaded \milp solving, the benchmarks were run on a single core. Apart from that, we have the same setup as described in \Cref{sec:evaluation}.

\Cref{tab:resglpk} shows the results for all benchmarks under positional stationary strategies using the alternative encoding whenever possible (cf. \Cref{tab:res}).
For the second instance of \benchmark{jobs} we observed an internal \tool{GLPK} error.
We conjecture that this is a consequence of numerical inaccuracies.

\Cref{fig:results:scatterglpk} compares the runtimes of both encodings (cf. \Cref{fig:results:scatter}).
Finally, \Cref{fig:results:plotglpk} shows the resulting Pareto fronts for different kinds strategy classes (cf. \Cref{fig:results:plot}).
Note that we have not been able to obtain values for PM$_2$ and PS within reasonable time.
\begin{remark}
	When comparing the number of found Pareto optimal points ($|P|$ in \Cref{tab:res,tab:resglpk}) we can observe slight differences between \tool{Gurobi} and \tool{GLPK}.
	This is because, in general, the $\precision$-approximation that solves the pure stationary Pareto approximation problem is not unique and different MILP solvers might find different solutions.
\end{remark}
\begin{table}[h]
	\caption{Results for stationary strategies.}
	\label{tab:resglpk}
	\adjustbox{max width=\textwidth}{%
		\begin{tabular}{ll?crrr|rr|rr?crrr|rr|rr}
			\multicolumn{2}{l?}{Bench-} & \multicolumn{4}{c|}{Instance 1} & \multicolumn{2}{c|}{$\varepsilon {=} 0.01$} & \multicolumn{2}{c?}{$\varepsilon = 0.001$} & \multicolumn{4}{c|}{Instance 2} & \multicolumn{2}{c|}{$\varepsilon = 0.01$} & \multicolumn{2}{c}{$\varepsilon {=} 0.001$}\\
			\multicolumn{1}{c}{mark} & \multicolumn{1}{c?}{$\numobj$} 
			&Par. & \multicolumn{1}{c}{$|\states|$} & \multicolumn{1}{c}{$\%\stateactionpairset$} & \multicolumn{1}{c|}{$\overline{\act{}}$} & \multicolumn{1}{c}{Time} & \multicolumn{1}{c|}{$|\points|$} &   \multicolumn{1}{c}{Time} & \multicolumn{1}{c?}{$|\points|$}  
			&Par. & \multicolumn{1}{c}{$|\states|$} & \multicolumn{1}{c}{$\%\stateactionpairset$} & \multicolumn{1}{c|}{$|\act{}|$} & \multicolumn{1}{c}{Time} & \multicolumn{1}{c|}{$|\points|$} &   \multicolumn{1}{c}{Time} & \multicolumn{1}{c}{$|\points|$}  
			\\\thickhline
			\benchmark{dpm} & 2$^*$ & 2 & 1272 & 32 & 3.2 & 1428 & 40 & \multicolumn{2}{c?}{\timeout} & 3 & 1696 & 30 & 3.2 & \multicolumn{2}{c|}{\timeout} & \multicolumn{2}{c}{\timeout}\\
\benchmark{eajs} & 2$^*$ & 2-3 & 689 & 0 & 1.2 & 7 & 23 & 1249 & 201 & 3-6 & $2{\cdot}10^4$ & 0 & 1.2 & \multicolumn{2}{c|}{\timeout} & \multicolumn{2}{c}{\timeout}\\
\benchmark{jobs} & 3$^*$ & 3-2 & 17 & 0 & 1.1 & 2 & 3 & 3 & 3 & 5-2 & 117 & 0 & 1.5 & \multicolumn{2}{c|}{\error} & \multicolumn{2}{c}{\error}\\
\benchmark{mutex} & 3$^*$ & 1 & 1795 & 36 & 2.2 & \multicolumn{2}{c|}{\timeout} & \multicolumn{2}{c?}{\timeout} & 2 & $1{\cdot}10^4$ & 33 & 2.3 & \multicolumn{2}{c|}{\timeout} & \multicolumn{2}{c}{\timeout}\\
\benchmark{polling} & 2 & 2-2 & 233 & 86 & 1.5 & \multicolumn{2}{c|}{\timeout} & \multicolumn{2}{c?}{\timeout} & 3-2 & 990 & 84 & 1.8 & \multicolumn{2}{c|}{\timeout} & \multicolumn{2}{c}{\timeout}\\
\benchmark{rg} & 2$^*$ & 2-1-20 & 2173 & 14 & 2.9 & \multicolumn{2}{c|}{\timeout} & \multicolumn{2}{c?}{\timeout} & 5-2-50 & $3{\cdot}10^4$ & 5 & 3.1 & \multicolumn{2}{c|}{\timeout} & \multicolumn{2}{c}{\timeout}\\
\benchmark{rover} & 2$^*$ & 2500 & $2{\cdot}10^4$ & 0 & 1.2 & \multicolumn{2}{c|}{\timeout} & \multicolumn{2}{c?}{\timeout} & 5000 & $4{\cdot}10^4$ & 0 & 1.2 & \multicolumn{2}{c|}{\timeout} & \multicolumn{2}{c}{\timeout}\\
\benchmark{serv} & 2$^*$ &  & $5{\cdot}10^4$ & 93 & 1.9 & \multicolumn{2}{c|}{\timeout} & \multicolumn{2}{c?}{\timeout} &   &   &   &   &   &   &   &  \\
\benchmark{str} & 2$^*$ & 30 & 1426 & 0 & 1.3 & 11 & 21 & \multicolumn{2}{c?}{\timeout} & 500 & $4{\cdot}10^5$ & 0 & 1.3 & 3006 & 17 & \multicolumn{2}{c}{\timeout}\\
\benchmark{team2} & 2$^*$ & 2 & 1847 & 24 & 1.2 & 2 & 5 & 2 & 5 & 3 & $1{\cdot}10^4$ & 21 & 1.2 & 36 & 38 & \multicolumn{2}{c}{\timeout}\\
\benchmark{team3} & 3$^*$ & 2 & 1847 & 24 & 1.2 & 49 & 15 & \multicolumn{2}{c?}{\timeout} & 3 & $1{\cdot}10^4$ & 21 & 1.2 & \multicolumn{2}{c|}{\timeout} & \multicolumn{2}{c}{\timeout}\\
\benchmark{uav} & 2$^*$ & 750 & $2{\cdot}10^5$ & 29 & 1.6 & \multicolumn{2}{c|}{\timeout} & \multicolumn{2}{c?}{\timeout} & 1000 & $4{\cdot}10^5$ & 31 & 1.8 & \multicolumn{2}{c|}{\timeout} & \multicolumn{2}{c}{\timeout}\\
\benchmark{wlan} & 2$^*$ & 0 & 2954 & 0 & 1.3 & \multicolumn{2}{c|}{\timeout} & \multicolumn{2}{c?}{\timeout} & 2 & $3{\cdot}10^4$ & 0 & 1.3 & \multicolumn{2}{c|}{\timeout} & \multicolumn{2}{c}{\timeout}
		\end{tabular}%
	}
\end{table}

\begin{figure}[h]
	\centering
	\setlength{\plotsize}{0.42\textwidth}
	\begin{subfigure}[b]{\plotsize}
		\centering
		\begin{tikzpicture}
		\path[use as bounding box] (-0.6,-1.1) rectangle (4.5,3.5);
		\begin{axis}[
		width=\plotsize,
		height=\plotsize,
		axis equal image,
		xmin=1,
		ymin=1,
		ymax=40000,
		xmax=40000,
		xmode=log,
		ymode=log,
		axis x line=bottom,
		axis y line=left,
		xtick={1,6,60,600,6000},
		xticklabels={1,6,60,600,6000},
		extra x ticks = {16000,32000},
		extra x tick labels = {{\timeout/\memout/\error},N/S},
		extra x tick style = {grid = major},
		ytick={1,6,60,600,6000},
		yticklabels={1,6,60,600,6000},
		extra y ticks = {16000},
		extra y tick labels = {\timeout/\memout/\error},
		extra y tick style = {grid = major},
		xlabel style={yshift=0.cm,xshift=-0.4cm},
		ylabel style={yshift=-0.43cm,xshift=-0.cm},
		yticklabel style={font=\scriptsize},
		xticklabel style={rotate=290,anchor=west,font=\scriptsize},
		legend pos=south east,
		legend columns=-1,
		legend style={nodes={scale=0.75, transform shape},inner sep=1.5pt,yshift=0.1cm,xshift=-0.3cm},
		]
		\addplot[
		scatter,
		only marks,
		scatter/classes={
			low={mark=square*,blue,mark size=1.25},
			high={mark=diamond*,orange,mark size=1.75}
		},
		scatter src=explicit symbolic
		]%
		table [col sep=semicolon,x=alt,y=classic,meta=type] {glpkscatterdata.csv};
		\legend{{$\varepsilon {=} 0.01$},{$\varepsilon {=} 0.001$}};
		\addplot[no marks] coordinates {(0.01,0.01) (16000,16000) };
		\addplot[no marks, densely dotted] coordinates {(0.01,0.1) (1600,16000)};
		\end{axis}
		\end{tikzpicture}
		\caption{Alt.\ ($x$) vs.\ original ($y$) encoding.}
		\label{fig:results:scatterglpk}
	\end{subfigure}
	\begin{subfigure}[b]{1.2\plotsize}
		\centering
		\begin{tikzpicture}
		\path[use as bounding box] (-0.6,-1.1) rectangle (5.6,3.5);
		\begin{axis}[
		width=1.4\plotsize,
		height=\plotsize,
		xmin=0.163,
		ymin=0.163,
		ymax=0.189,
		xmax=0.209,
		axis x line=bottom,
		axis y line=left,
		yticklabel style={font=\scriptsize},
		xticklabel style={font=\scriptsize},
		legend pos=north east,
		legend columns=1,
		legend style={nodes={scale=0.75, transform shape},inner sep=1.5pt,yshift=0cm,xshift=0cm},
		]

\addplot[blue,thick] table [col sep=semicolon,x=x,y=y] {polling/gen.csv};
\addplot[only marks, mark=o,red, thick] table [col sep=semicolon,x=x,y=y] {polling/glpkpm2.csv};
\addplot[only marks, mark=+,thick,orange] table [col sep=semicolon,x=x,y=y] {polling/glpkpmg.csv};
\addplot[only marks, mark=*,green!70!black, thick] table [col sep=semicolon,x=x,y=y] {polling/glpkps.csv};
		\legend{Gen,PM$_2$,PM$_\goalstates$,PS};
		\end{axis}
		\end{tikzpicture}
		\caption{Pareto fronts for restricted strategies.}
		\label{fig:results:plotglpk}
	\end{subfigure}
	\caption{Comparison of the two encodings (left) and impact of memory (right).}
\end{figure}

}{}


\vfill

{\small\medskip\noindent{\bf Open Access} This chapter is licensed under the terms of the Creative Commons\break Attribution 4.0 International License (\url{http://creativecommons.org/licenses/by/4.0/}), which permits use, sharing, adaptation, distribution and reproduction in any medium or format, as long as you give appropriate credit to the original author(s) and the source, provide a link to the Creative Commons license and indicate if changes were made.}

{\small \spaceskip .28em plus .1em minus .1em The images or other third party material in this chapter are included in the chapter's Creative Commons license, unless indicated otherwise in a credit line to the material.~If material is not included in the chapter's Creative Commons license and your intended\break use is not permitted by statutory regulation or exceeds the permitted use, you will need to obtain permission directly from the copyright holder.}

\medskip\noindent\includegraphics{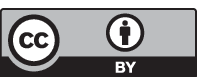}

\end{document}